\newcommand{\func}{\operatorname}
\theoremstyle{plain}
\newtheorem{theorem}{Theorem}
\newtheorem{conjecture}[theorem]{Conjecture}
\newtheorem{corollary}[theorem]{Corollary}
\newtheorem{lemma}[theorem]{Lemma}
\newtheorem{proposition}[theorem]{Proposition}
\theoremstyle{definition}
\newtheorem{definition}[theorem]{Definition}
\newtheorem{example}[theorem]{Example}
\theoremstyle{remark}
\newtheorem{remark}[theorem]{Remark}
\providecommand{\MR}{\relax\ifhmode\unskip\space\fi MR }
\providecommand{\href}[2]{#2}
\begin{document}

\title{50 Years of the Golomb--Welch Conjecture}
%\author{Peter~Horak,~\IEEEmembership{University of Washington, Tacoma, WA}, 
%\\
%%EndAName
%Dongryul~Kim,~\IEEEmembership{Harvard College, Cambridge, MA}.}

\author{Peter~Horak and Dongryul~Kim
  \thanks{Peter~Horak is with SIAS, University of Washington, Tacoma, WA 98402 USA (email: \href{mailto:horak@uw.edu}{horak@uw.edu}).}
  \thanks{Dongryul~Kim is with Department of Mathematics, Harvard University, Cambridge, MA 02138 USA (email: \href{mailto:dkim04@college.harvard.edu}{dkim04@college.harvard.edu}).}
  \thanks{Copyright \copyright{} 2017 IEEE. Personal use of this material is permitted. However, permission to use this material for any other purposes must be obtained from the IEEE by sending a request to \href{mailto:pubs-permission@ieee.org}{pubs-permission@ieee.org}}
}

\maketitle

\begin{abstract}
Since 1968, when the Golomb--Welch conjecture was raised, it has become the
main motive power behind the progress in the area of the perfect Lee codes.
Although there is a vast literature on the topic and it is widely believed
to be true, this conjecture is far from being solved. In this paper, we
provide a survey of papers on the Golomb--Welch conjecture. Further, new
results on Golomb--Welch conjecture dealing with perfect Lee codes of large
radii are presented. Algebraic ways of tackling the conjecture in the future
are discussed as well. Finally, a brief survey of research inspired by the
conjecture is given.
\end{abstract}

\begin{IEEEkeywords}
  error correction codes, perfect Lee codes, Golomb--Welch conjecture, tilings.
\end{IEEEkeywords}

\section{Introduction}

\label{sec:introduction}

In this paper we deal with codes in the Lee metric. This metric was
introduced in \cite{Lee} and \cite{Ulrich} for transmission of signals taken
from $GF(p)$ over noisy channels. It was generalized for $\mathbb{Z}_{m}$ in 
\cite{GW}. The interest in Lee codes is due to many applications of them.
For example, constrained and partial-response channels \cite{Roth}, flash
memory \cite{Sch}, interleaving schemes \cite{Blaum}, placement of resources
in the computer architecture that minimizes access time by processing
elements \cite{B}, multidimensional burst-error-correction \cite{Etzion},
and error-correction in the rank modulation scheme for flash memories \cite%
{Jiang}.

50 years ago, Golomb and Welch \cite{GW} raised a conjecture on the
existence of perfect $e$-error-correcting codes in the Lee metric. This
conjecture lies at the very center of interests in the area of perfect codes
in the Lee metric. In spite of great effort and plenty of papers on the
topic, the Golomb--Welch conjecture is still far from being solved. In
Section~\ref{sec:GW} we survey results on this conjecture. As a main part of
the paper, new results on the conjecture are provided in Sections~\ref%
{sec:PostLep} and \ref{sec:numbertheoretic}. Namely, Golomb and Welch proved
that for each fixed $n$ there exists an $e_{n}$, $e_{n}$ unspecified, such
that for all $e>e_{n}$ there is no perfect $e$-error correcting code in $%
\mathbb{Z}^{n}$ with the Lee metric. In Section 3 we present the first
explicit upper bound on $e_{n}$. More precisely, we show that the condition 
\emph{periodic} can be dropped in the Post \cite{P} and Lepist\"{o} \cite{Le}
bounds (see Theorem \ref{thm:Postextend} and Theorem \ref{thm:Lepextend}).
Finally, we exhibit how a linear programming technique can be used to obtain
another bound on $e_{n},$ cf. Corollary \ref{cor:LP}. Combining these three
statements we obtained Theorem \ref{thm:GWsummary} that summarizes our new
results on the Golomb--Welch conjecture.

Although the conjecture has been tackled in various ways, using different
techniques, it seems to us that none of them is powerful enough to entirely
solve the conjecture. We believe that a new approach has to be developed.
Therefore, possible avenues how to attack the conjecture are discussed in
Section 4. Using the so-called polynomial method, a necessary condition for 
the existence of a tiling of $\mathbb{Z}^{n}$ by translates of a tile $V$ is 
proved (see Theorem~\ref{D}). We guess that this is a first necessary
condition for a generic (arbitrary) tile. Further, we exhibit usage of
Fourier analysis in this area; we provide a sufficient condition for a tile $%
V$ such that each translational tiling of $\mathbb{Z}^{n}$ by $V$ is
periodic (see Theorem \ref{thm:Fourier}). In our quest to prove the
Golomb--Welch conjecture we have dealt with tiles of prime size. Later we
started to be interested in these tiles on it own right. Now it seems that a
part of our research on prime tiles might contribute back to the
Golomb--Welch conjecture. In this regard, first we reprove a statement that
each tiling of $\mathbb{Z}^{n}$ by translates of a tile of prime size has to
be periodic. In fact, we conjecture that each such tiling has to be even a
lattice one (see Conjecture \ref{conj:primetile}). We prove our conjecture
for tiles of size at most $7$.

In Section~\ref{sec:further} we cover results inspired by the Golomb--Welch
conjecture. First we describe several generalizations and modification of
the conjecture, and then a brief survey of the results on quasi-perfect Lee
codes will be given.

In the last section we summarize our discussion on the methods used and the
methods proposed in this paper to solve the Golomb--Welch conjecture.

\subsection{Terminology and Basic Concepts}

As usual, let $\mathbb{Z}$ be the set of all integers, $\mathbb{Z}_{q}$
denote the ring of integers modulo $q$, and let $T^{n}$ stand for the $n$%
-fold Cartesian product of a set $T$. A Lee code is a subset of the metric
space $(\mathcal{C},\delta _{L})$, where $\mathcal{C}=\mathbb{Z}_{q}^{n}$,
or $\mathcal{C}=\mathbb{Z}^{n}$, and $\delta _{L}$ is the Lee metric (= the
Manhattan metric, the zig-zag metric, the $\ell ^{1}$-norm). That is, for
any two words $\mathbf{u}=(u_{1},u_{2},\ldots ,u_{n})$ and $\mathbf{v}%
=(v_{1},v_{2},\ldots ,v_{n})$, 
\begin{align*}
\delta _{L}(\mathbf{u},\mathbf{v})& =\sum_{i=1}^{n}\min (\lvert
u_{i}-v_{i}\rvert ,q-\lvert u_{i}-v_{i}\rvert )\text{ for }\mathbf{u},%
\mathbf{v}\in \mathbb{Z}_{q}^{n}, \\
\delta _{L}(\mathbf{u},\mathbf{v})& =\sum_{i=1}^{n}\lvert u_{i}-v_{i}\rvert 
\text{ for }\mathbf{u},\mathbf{v}\in \mathbb{Z}^{n}.
\end{align*}%
A Lee code $C$ is an \emph{$e$-error-correcting code} if any two distinct
elements of $C$ have distance at least $2e+1$. An $e$-error-correcting Lee
code is further called \emph{perfect} if for each $\mathbf{x}\in \mathbb{Z}%
_{q}^{n}$ ($\mathbf{x}\in \mathbb{Z}^{n}$), there exists a unique element $%
\mathbf{c}\in C$ such that $\delta _{L}(\mathbf{x},\mathbf{c})\leq e$. A
perfect $e$-error-correcting Lee code in $\mathbb{Z}_{q}^{n}$ and in $%
\mathbb{Z}^{n}$ will be called $PL(n,e,q)$ and $PL(n,e)$, respectively.
These codes are also termed perfect $e$-error-correcting code of block size $%
n$ over $\mathbb{Z}_{q}$ (over $\mathbb{Z}$). If $q\geq 2e+1$, a $PL(n,e,q)$%
-code is said to be over a \emph{large} alphabet, otherwise it is said to be
over a \emph{small} alphabet. \ A set $S\subset \mathbb{Z}^{n}$ is $q$%
-periodic if it is periodic with the period $q$ along all coordinate axes. A 
$PL(n,e)$-code $C$ is ($q$-) \emph{periodic} (resp.\ \emph{lattice}, \emph{%
linear}) if $C$ is a ($q$-) periodic set in $\mathbb{Z}^{n}$ (resp.\ a
subgroup of the additive group $\mathbb{Z}^{n}$ of full rank).

It is very common to define error-correcting Lee codes using the language of
tilings. In this setting it is not difficult to see that to know all about $%
PL(n,e,q)$-codes with large alphabets, it suffices to study $PL(n,e)$-codes.
Indeed, consider the \emph{Lee spheres} 
\begin{align*}
S(n,e,q)& =\{\mathbf{x}\in \mathbb{Z}_{q}^{n}:\delta _{L}(\mathbf{x},\mathbf{%
0})\leq e\}\text{ and } \\
S(n,e)& =\{\mathbf{x}\in \mathbb{Z}^{n}:\delta _{L}(\mathbf{x},\mathbf{0}%
)=\lvert x_{1}\rvert +\cdots +\lvert x_{n}\rvert \leq e\}
\end{align*}%
of radius $e$. Then $PL(n,e,q)$-codes and periodic $PL(n,e)$-codes can be
naturally identified with tilings of $\mathbb{Z}_{q}^{n}$ and of $\mathbb{Z}%
^{n}$ by translates of $S(n,e,q)$ and $S(n,e)$, respectively.

If $q\geq 2e+1$, then the natural projection map $\mathbb{Z}^{n}\rightarrow 
\mathbb{Z}_{q}^{n}$ restricts to a bijection from $S(n,e)$ to $S(n,e,q)$.
Any tiling of $\mathbb{Z}_{q}^{n}$ by $S(n,e,q)$ will then pull back via the
projection to a periodic tiling of $\mathbb{Z}^{n}$ by $S(n,e)$. Then a $%
PL(n,e,q)$-code induces a periodic $PL(n,e)$-code that is a disjoint union
of cosets of $q\mathbb{Z}^{n}\subset \mathbb{Z}^{n}$. Conversely, any such
periodic $PL(n,e)$-code clearly comes from a $PL(n,e,q)$-code. The following
proposition states in a formal way that $PL(n,e)$-codes provide full
information about $PL(n,e,q)$-codes.

\begin{proposition}
\label{P1} For $q\geq 2e+1$, there exists a natural bijection between $%
PL(n,e,q)$-codes and $q$-periodic $PL(n,e)$-codes that is a union of cosets
of $q\mathbb{Z}^{n}\subset \mathbb{Z}^{n}$, given by taking the image or the
inverse image with respect to the projection map $\mathbb{Z}^{n}\rightarrow 
\mathbb{Z}_{q}^{n}$.
\end{proposition}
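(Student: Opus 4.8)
\emph{Proof strategy.} The plan is to show that the natural projection $\pi\colon\mathbb{Z}^{n}\to\mathbb{Z}_{q}^{n}$ induces the asserted bijection, the two directions being $C'\mapsto\pi^{-1}(C')$ (inverse image) and $C\mapsto\pi(C)$ (image). The only geometric input I would record first is a sharpening of the already-noted fact that $q\ge 2e+1$ makes $\pi$ restrict to a bijection $S(n,e)\to S(n,e,q)$: since $S(n,e)\subseteq\{-e,\dots,e\}^{n}$ and $q\ge 2e+1$, the translates $S(n,e)+q\mathbf{k}$ over $\mathbf{k}\in\mathbb{Z}^{n}$ are pairwise disjoint, so $\pi^{-1}\bigl(S(n,e,q)\bigr)=\bigsqcup_{\mathbf{k}\in\mathbb{Z}^{n}}\bigl(S(n,e)+q\mathbf{k}\bigr)$ and $\pi$ is injective on each translate of $S(n,e)$. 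Translating these statements gives, for every $\mathbf{c}\in\mathbb{Z}^{n}$, that $\pi\bigl(\mathbf{c}+S(n,e)\bigr)=\pi(\mathbf{c})+S(n,e,q)$ with $\pi$ injective on $\mathbf{c}+S(n,e)$, and for every $\bar{\mathbf{c}}\in\mathbb{Z}_{q}^{n}$, that $\pi^{-1}\bigl(\bar{\mathbf{c}}+S(n,e,q)\bigr)=\bigsqcup_{\mathbf{c}\in\pi^{-1}(\bar{\mathbf{c}})}\bigl(\mathbf{c}+S(n,e)\bigr)$. These two identities, together with the injectivity of $\pi$ on $S(n,e)$, are all the geometry the argument needs.

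For the first direction I would start from a $PL(n,e,q)$-code $C'$, that is, from a partition of $\mathbb{Z}_{q}^{n}$ by the translates $\bar{\mathbf{c}}+S(n,e,q)$, $\bar{\mathbf{c}}\in C'$, and set $C=\pi^{-1}(C')$. By construction $C$ is a union of cosets of $q\mathbb{Z}^{n}=\ker\pi$, hence $q$-periodic and nonempty. To see it is a $PL(n,e)$-code, I would verify directly that the translates $\mathbf{c}+S(n,e)$, $\mathbf{c}\in C$, partition $\mathbb{Z}^{n}$: given $\mathbf{x}\in\mathbb{Z}^{n}$, take the unique $\bar{\mathbf{c}}\in C'$ and $\bar{\mathbf{s}}\in S(n,e,q)$ with $\pi(\mathbf{x})=\bar{\mathbf{c}}+\bar{\mathbf{s}}$, lift $\bar{\mathbf{s}}$ to the unique $\tilde{\mathbf{s}}\in S(n,e)$ above it, and note $\mathbf{x}-\tilde{\mathbf{s}}\in\pi^{-1}(\bar{\mathbf{c}})\subseteq C$, so $\mathbf{x}\in(\mathbf{x}-\tilde{\mathbf{s}})+S(n,e)$; for uniqueness, any two representations $\mathbf{x}\in\mathbf{c}_{1}+S(n,e)$ and $\mathbf{x}\in\mathbf{c}_{2}+S(n,e)$ with $\mathbf{c}_{i}\in C$ project to representations of $\pi(\mathbf{x})$ in $\mathbb{Z}_{q}^{n}$, which coincide by the partition property downstairs, whence $\mathbf{c}_{1}=\mathbf{c}_{2}$ by injectivity of $\pi$ on $S(n,e)$. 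The converse direction is symmetric: starting from a $q$-periodic $PL(n,e)$-code $C$ (which, being periodic along every axis, is a union of cosets of $q\mathbb{Z}^{n}$) and setting $C'=\pi(C)$, one has $\pi^{-1}(C')=C$ because $C$ is saturated for $\ker\pi$; since $\pi$ is surjective, the two operations are then manifestly mutually inverse, and the verification that $C'$ is a $PL(n,e,q)$-code runs exactly as above, with chosen lifts of elements of $\mathbb{Z}_{q}^{n}$ playing the role of the representatives $\mathbf{x}$.

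I do not anticipate a genuine obstacle: the entire content is in getting the bookkeeping straight, and the one point I would state with care is that every decomposition of an element of $\mathbb{Z}^{n}$ (resp.\ $\mathbb{Z}_{q}^{n}$) as ``codeword plus Lee-sphere element'' is obtained by lifting (resp.\ projecting) such a decomposition of its projection (resp.\ of a lift) --- which is exactly the pair of displayed identities above and is precisely where the hypothesis $q\ge 2e+1$ enters. The only mild thing to keep in mind is that $q$-periodicity of a set in $\mathbb{Z}^{n}$ already forces it to be a union of cosets of $q\mathbb{Z}^{n}$, so the coset condition in the statement is not an extra hypothesis but the feature that makes $\pi^{-1}\circ\pi$ the identity on the relevant class of codes.
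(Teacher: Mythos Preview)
Your proposal is correct and follows exactly the approach the paper sketches in the paragraph immediately preceding the proposition: pull back along $\pi$ in one direction, push forward in the other, using that $q\ge 2e+1$ makes $\pi$ injective on each translate of $S(n,e)$. The paper gives no further proof beyond that sketch, so your write-up is simply a careful unpacking of the same argument; your observation that $q$-periodicity already forces the coset condition is also accurate.
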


\noindent Since a $PL(n,e)$-code can be seen as a partition of $\mathbb{Z}%
^{n}$, only a small step is needed to get a geometrical interpretation of $%
PL(n,e)$-codes. Let $\mathbb{R}$ be the set of real numbers. Consider the $n$%
-dimensional space $\mathbb{R}^{n}$ endowed with the Lee metric $\delta _{L}$%
. The \emph{$n$-cube} centered at $\mathbf{x}=(x_{1},\ldots ,x_{n})\in 
\mathbb{R}^{n}$ is the set $C(\mathbf{x})=\{\mathbf{y}=(y_{1},\ldots
,y_{n}):\left\vert y_{i}-x_{i}\right\vert \leq \frac{1}{2}\}$. By a Lee
sphere of radius $e$ in $\mathbb{R}^{n}$ centered at $\mathbf{0}$, $L(n,e)$,
we understand the union of $n$-cubes centered at $y$, where $\delta _{L}(%
\mathbf{y},\mathbf{0})\leq e$, and $\mathbf{y}$ has integer coordinates.
Finally, a Lee sphere of radius $e$ in $\mathbb{R}^{n}$ centered at $\mathbf{%
x}\in \mathbb{R}^{n}$ is the set $x+L(n,e)=\{x+l:l\in L(n,e)$. Clearly, a $%
PL(n,e)$-code exists if and only if there is a tiling of $\mathbb{R}^{n}$ by
Lee spheres $L(n,e)$. The Lee spheres $L(2,1),L(2,2),L(3,1)$, and $L(3,2)$
are depicted in Figure~\ref{fig:1}. The advantage of understanding a $%
PL(n,e) $-code as a tiling of $\mathbb{R}^{n}$ of $L(n,e)$ is in the
possibility of applying deep results in geometry to Lee codes.

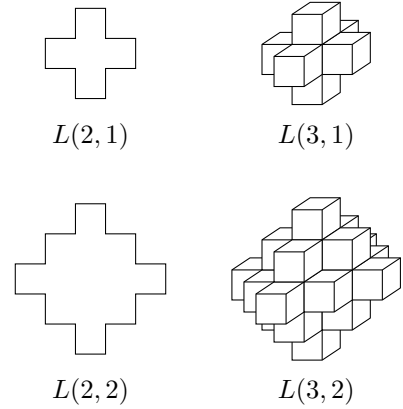
\begin{figure}[th]
\centering
\begin{tikzpicture}
    \begin{scope}[scale=0.4]
      \draw (-0.5,-0.5) -- (-0.5,-1.5) -- (0.5,-1.5) -- (0.5,-0.5) -- (1.5,-0.5) -- (1.5,0.5) -- (0.5,0.5) -- (0.5,1.5) -- (-0.5,1.5) -- (-0.5,0.5) -- (-1.5,0.5) -- (-1.5,-0.5) -- cycle;
    \end{scope}
    \node at (0,-1.1) {$L(2,1)$};

    \begin{scope}[shift={(0,-3)},scale=0.4]
      \draw (-1.5,-0.5) -- (-1.5,-1.5) -- (-0.5,-1.5) -- (-0.5,-2.5) -- (0.5,-2.5) -- (0.5,-1.5) -- (1.5,-1.5) -- (1.5,-0.5) -- (2.5,-0.5) -- (2.5,0.5) -- (1.5,0.5) -- (1.5,1.5) -- (0.5,1.5) -- (0.5,2.5) -- (-0.5,2.5) -- (-0.5,1.5) -- (-1.5,1.5) -- (-1.5,0.5) -- (-2.5,0.5) -- (-2.5,-0.5) -- cycle;
    \end{scope}
    \node at (0,-4.5) {$L(2,2)$};

    \begin{scope}[shift={(3,0)},scale=0.4]
      \foreach \x/\y/\z in {-1/0/0, 0/1/0, 0/0/-1, 1/0/0, 0/-1/0, 0/0/1} {
	\draw[fill=white] (\x+0.6*\y-0.8, 0.4*\y+\z+0.3) -- (\x+0.6*\y-0.2, 0.4*\y+\z+0.7) -- (\x+0.6*\y+0.8, 0.4*\y+\z+0.7) -- (\x+0.6*\y+0.8, 0.4*\y+\z-0.3) -- (\x+0.6*\y+0.2, 0.4*\y+\z-0.7) -- (\x+0.6*\y-0.8, 0.4*\y+\z-0.7) -- cycle;
	\draw (\x+0.6*\y-0.8, 0.4*\y+\z+0.3) -- (\x+0.6*\y+0.2, 0.4*\y+\z+0.3) -- (\x+0.6*\y+0.8, 0.4*\y+\z+0.7);
	\draw (\x+0.6*\y+0.2, 0.4*\y+\z+0.3) -- (\x+0.6*\y+0.2, 0.4*\y+\z-0.7);
      }
    \end{scope}
    \node at (3,-1.1) {$L(3,1)$};
    
    \begin{scope}[shift={(3,-3)},scale=0.4]
      \foreach \x/\y/\z in {-2/0/0, 0/2/0, 0/0/-2, -1/1/0, 0/1/-1, -1/0/-1, 1/1/0, -1/-1/0, 0/1/1, 0/-1/-1, 1/0/-1, -1/0/1, 2/0/0, 0/-2/0, 0/0/2, 1/-1/0, 1/0/1, 0/-1/1} {
	\draw[fill=white] (\x+0.6*\y-0.8, 0.4*\y+\z+0.3) -- (\x+0.6*\y-0.2, 0.4*\y+\z+0.7) -- (\x+0.6*\y+0.8, 0.4*\y+\z+0.7) -- (\x+0.6*\y+0.8, 0.4*\y+\z-0.3) -- (\x+0.6*\y+0.2, 0.4*\y+\z-0.7) -- (\x+0.6*\y-0.8, 0.4*\y+\z-0.7) -- cycle;
	\draw (\x+0.6*\y-0.8, 0.4*\y+\z+0.3) -- (\x+0.6*\y+0.2, 0.4*\y+\z+0.3) -- (\x+0.6*\y+0.8, 0.4*\y+\z+0.7);
	\draw (\x+0.6*\y+0.2, 0.4*\y+\z+0.3) -- (\x+0.6*\y+0.2, 0.4*\y+\z-0.7);
      }
    \end{scope}
    \node at (3,-4.5) {$L(3,2)$};
  \end{tikzpicture}
\caption{Figure of $L(2,1)$, $L(2,2)$, $L(3,1)$, and $L(3,2)$}
\label{fig:1}
\end{figure}

We note that it is common in Coding theory to call the sets $S(n, e) = \{ 
\mathbf{x} \in \mathbb{Z}^n : d(\mathbf{x}, \mathbf{0}) \leq e\}$ and $B(n,
e) = \{\mathbf{x} \in \mathbb{Z}^n : d(\mathbf{x}, \mathbf{0}) = e \}$ the
sphere and the boundary of this sphere although in other parts of
mathematics they are termed the ball and the sphere. In order not to go
against the long time tradition we have decided to stick with this imprecise
terminology.

\section{The Golomb--Welch Conjecture}

\label{sec:GW}

In this section we state the Golomb--Welch conjecture and survey related
results. We do not cover here $PL(n,e,q)$-codes over small alphabets.

In their seminal paper Golomb and Welch \cite{GW} discuss at great length
the existence of $PL(n,e,q)$-codes. They constructed $PL(n,e,q)$-codes for
parameters $(n,e,q)=(1,e,2e+1)$, $(2,e,e^{2}+(e+1)^{2})$, and $(n,1,2n+1)$.
In the last paragraph of Section 3 in \cite{GW} it is conjectured that there
are no tilings of $\mathbb{Z}_{q}^{n}$ by Lee spheres over large alphabet
for other values of $(n,e).$ We note that in \cite{GW} $\mathbb{Z}_{q}^{n}$
is called $n$-dimensional space while $\mathbb{Z}^{n}$ is termed $n$%
-dimensional Euclidean space.

\begin{conjecture}[Golomb--Welch, weak version, Section~3 \protect\cite{GW}]

\label{conj:weakGW} There is no $PL(n,e,q)$-code over large alphabets for $%
n\geq 3$ and $e\geq 2$.
\end{conjecture}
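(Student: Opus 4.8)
The plan is necessarily a research program rather than a short argument, since Conjecture~\ref{conj:weakGW} is the central open problem of the area. By Proposition~\ref{P1} it is enough to rule out $q$-periodic $PL(n,e)$-codes for every admissible $q\ge 2e+1$, so one may work throughout with translational tilings of $\mathbb{Z}^{n}$ by the Lee sphere $S(n,e)$. I would partition the parameter set $\{(n,e):n\ge 3,\ e\ge 2\}$ into three ranges, attack each with a different tool, and then try to make the ranges overlap.

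\emph{Large radius, fixed $n$.} Golomb and Welch \cite{GW} already showed that for each $n$ no $PL(n,e)$-code exists once $e$ exceeds some threshold $e_{n}$; the first task is to make $e_{n}$ explicit and then shrink it. Here I would rerun the density argument underlying the Post \cite{P} and Lepist\"{o} \cite{Le} bounds — essentially a volume count showing that translates of $S(n,e)$ cannot be packed with the efficiency a tiling would require once $e$ is large — after discarding the superfluous periodicity hypothesis (cf.\ Theorems~\ref{thm:Postextend} and \ref{thm:Lepextend}), and sharpen it with the linear-programming estimate of Corollary~\ref{cor:LP}. Combining these is precisely what gives Theorem~\ref{thm:GWsummary}; the aim would be to drive the resulting bound on $e_{n}$ down to a constant, or at least to a slowly growing function of $n$ that meets the next range.

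\emph{Small radius, all $n$.} For $e=2,3$ volume comparisons are useless, so I would turn to the polynomial method of Theorem~\ref{D}: encode a tiling of $\mathbb{Z}^{n}$ by $S(n,e)$ as an identity between the characteristic (mask) polynomial of the tile and that of the translation set, and extract a contradiction from the fact that the tile's polynomial must then vanish on a very large set of roots of unity. In parallel, the Fourier criterion of Theorem~\ref{thm:Fourier} would be used to force periodicity, after which a direct combinatorial analysis of which translates can cover the lattice points at Lee distance exactly $e$ from a fixed centre — a "corner-filling" argument in the spirit of the known resolution for small $n$ — should settle $e=2$ and $e=3$ for all $n\ge 3$. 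The remaining finitely many pairs $(n,e)$ are handled one $n$ at a time, using the constraints on the admissible moduli ($|S(n,e)|$ must divide $q^{n}$, together with finer number-theoretic conditions) and the Fourier description of the possible tilings; the prime-size-tile results and Conjecture~\ref{conj:primetile} enter whenever $|S(n,e)|$ happens to be prime.

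The main obstacle — and the reason the conjecture has survived fifty years — is making these ranges genuinely meet. The density bounds on $e_{n}$ grow with $n$; the polynomial and Fourier arguments are under control only for very small $e$ or for highly structured $q$; and no known invariant of the pair $(S(n,e),\mathbb{Z}^{n})$ is simultaneously sensitive enough to obstruct a tiling across the whole range $n\ge 3,\ e\ge 2$. I expect the decisive ingredient to be a uniform algebraic obstruction — some cohomological or representation-theoretic quantity attached to $(S(n,e),\mathbb{Z}^{n})$ — whose non-triviality does not decay as $n$ and $e$ grow; without such an idea, the program above can only keep shrinking the undecided region rather than eliminate it.
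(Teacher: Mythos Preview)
The statement is an open conjecture; the paper does not prove it and explicitly says it is ``far from being solved.'' You recognise this and offer a research program rather than a proof, so there is nothing to grade as correct or incorrect in the usual sense.

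Your program is essentially a condensed version of the paper's own Sections~\ref{sec:PostLep}, \ref{sec:numbertheoretic}, and~\ref{sec:conclusions}: the large-radius range via the de-periodised Post/Lepist\"{o} bounds and linear programming (Theorems~\ref{thm:Postextend}, \ref{thm:Lepextend}, Corollary~\ref{cor:LP}, Theorem~\ref{thm:GWsummary}); the small-radius range via the polynomial method (Theorem~\ref{D}), Fourier periodicity (Theorem~\ref{thm:Fourier}), and prime-size-tile reductions; and the frank admission that the ranges do not meet. Your closing paragraph---that the density bounds grow with $n$, the algebraic methods bite only for small $e$ or special $|S(n,e)|$, and that a genuinely new uniform obstruction is needed---is exactly the paper's own conclusion.

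The one place you overreach is the sentence ``a direct combinatorial analysis \ldots\ should settle $e=2$ and $e=3$ for all $n\ge 3$.'' Nothing in the paper, or in the literature it surveys, supports the word ``should'': the case $e=2$ is singled out as the hardest, with only the partial results of \cite{H5}, \cite{Kim}, \cite{H6} available, and no method is known that handles all $n$. That step is not a detail to be filled in but the core of the open problem; without it the program cannot close, as you yourself note in the final paragraph.
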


In Section 7, Golomb and Welch formulate their conjecture in terms of tiling 
$n$-dimensional Euclidean space. Thus, with respect to Proposition \ref{P1},
the following conjecture is a natural strengthening:

\begin{conjecture}[Golomb--Welch, strong version, Section~7 \protect\cite{GW}%
]
\label{conj:strongGW} There is no $PL(n,e)$-code for $n\geq 3$ and $e\geq 2$.
\end{conjecture}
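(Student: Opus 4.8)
The plan is to reduce the statement, via Proposition~\ref{P1} and the geometric reformulation of a $PL(n,e)$-code as a tiling of $\mathbb{R}^n$ (equivalently of $\mathbb{Z}^n$) by Lee spheres, to showing that $\mathbb{Z}^n$ admits no tiling by translates of $S(n,e)$ whenever $n\ge 3$ and $e\ge 2$. Note that by Proposition~\ref{P1} this is genuinely stronger than the weak version~\ref{conj:weakGW}, since an arbitrary $PL(n,e)$-code need not be periodic a priori. I would then split the argument into two regimes according to the size of $e$ relative to $n$.

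\emph{Large $e$.} Here one leans on the quantitative part of the paper. Golomb and Welch's original non-existence result for $e>e_n$ applies only to periodic codes; Theorems~\ref{thm:Postextend} and~\ref{thm:Lepextend} are meant to remove the periodicity hypothesis from the Post and Lepist\"o estimates, and Corollary~\ref{cor:LP} supplies an additional bound from the linear-programming method. Combining these into Theorem~\ref{thm:GWsummary} yields, for each fixed $n\ge 3$, an explicit $e_n$ such that no $PL(n,e)$-code exists once $e>e_n$. Thus only the finite window $2\le e\le e_n$ survives for each $n$.

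\emph{The finite window.} For the surviving pairs $(n,e)$ I would try two complementary attacks. First, induction on $n$: assuming no $PL(n-1,e')$-code for $e'\ge 2$, slice a hypothetical tiling of $\mathbb{Z}^n$ by $S(n,e)$ along a coordinate hyperplane and attempt to read off an incompatible packing/tiling relation in dimension $n-1$; this is how the base cases $n=3$ have historically been settled and it should be pushed as far as the combinatorial bookkeeping permits. Second, the algebraic route: a tiling of $\mathbb{Z}^n$ by $S(n,e)$ forces the mask polynomial of $S(n,e)$ to satisfy the divisibility/vanishing conditions of Theorem~\ref{D}, and — if one can first invoke Theorem~\ref{thm:Fourier} to conclude that any such tiling is periodic, hence (in the spirit of Conjecture~\ref{conj:primetile}) amenable to cyclic-group factorization — to factor compatibly with a finite abelian group. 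Since the coefficient pattern of the mask polynomial of $S(n,e)$ is completely explicit, the goal is to show it simply cannot meet these conditions for any $n\ge 3$, $e\ge 2$.

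The hard part — and the reason this remains a conjecture rather than a theorem — is closing the gap between the two regimes. The bound $e_n$ coming from Theorem~\ref{thm:GWsummary} is far too weak to make the window $2\le e\le e_n$ a practical finite check, the linear-programming bound does not appear tight enough to force $e_n<2$, and the conditions of Theorem~\ref{D} are only \emph{necessary}: we have no argument that they must fail for $S(n,e)$ uniformly across the window, nor do we know that the hypothesis of Theorem~\ref{thm:Fourier} is met by $S(n,e)$. A genuine proof would require either a dramatically sharper isoperimetric/volume inequality that already yields $e_n<2$, or an algebraic obstruction of the flavor of Theorem~\ref{D} that is provably violated by every Lee sphere of radius at least $2$ in dimension at least $3$; producing either is precisely where the techniques surveyed here run out.
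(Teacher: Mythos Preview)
The statement is an \emph{open conjecture}, and the paper does not claim to prove it; it surveys partial results and develops new tools, but explicitly says the conjecture ``is still far from being solved.'' There is therefore no proof in the paper to compare your proposal against. You recognize this yourself in your final paragraph, where you correctly identify that closing the gap between the large-$e$ regime and the finite window $2\le e\le e_n$ is exactly where all known techniques run out. Your outline is a fair summary of the available attacks and their limitations, not a proof---and you say as much.

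One small correction: you write that ``Golomb and Welch's original non-existence result for $e>e_n$ applies only to periodic codes.'' In fact Theorem~\ref{thm:tempGW} already applies to arbitrary $PL(n,e)$-codes; its defect is that it is \emph{ineffective} (the $e_n$ is unspecified), not that it requires periodicity. It is Post's and Lepist\"o's explicit bounds that were originally proved only for $PL(n,e,q)$-codes, and the point of Theorems~\ref{thm:Postextend} and~\ref{thm:Lepextend} is to lift those specific results to the non-periodic setting.
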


A set $S\subset \mathbb{Z}^{n}$ is fully periodic if the set of those
elements which shift $S$ into itself is a subgroup of $\mathbb{Z}^{n}$ of
finite index. We point out that if the following Lagarias--Wang conjecture
is true then Conjectures~\ref{conj:weakGW} and \ref{conj:strongGW} are
equivalent.

\begin{conjecture}[Lagarias--Wang \protect\cite{Lagarias}]
\label{conj:periodictiling} If $V$ tiles $\mathbb{Z}^{n}$ by translations,
then $V$ admits a fully periodic tiling, i.e., a $q$-periodic tiling for
sufficiently large $q$.
\end{conjecture}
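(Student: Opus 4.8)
This is a well-known open problem, so what follows is a plan of attack rather than a proof. The statement is classical for $n=1$ --- by Newman's theorem a finite subset of $\mathbb{Z}$ that tiles $\mathbb{Z}$ by translations does so periodically --- and has been established for $n=2$, while the general case remains open; note also that for the purposes of this paper one needs it only for the highly symmetric tile $V=S(n,e)$, which should make the extra structure of Lee spheres exploitable.

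The plan is to reduce the statement to a finitary problem and then to extract periodicity from finiteness. Fix a finite tile $V\subset\mathbb{Z}^n$ and suppose $\mathbb{Z}^n=\bigsqcup_{a\in A}(a+V)$ for a set $A$ of translation vectors. First I would attach to each point $x\in\mathbb{Z}^n$ the \emph{local type} of the tiling in a fixed bounded neighbourhood of $x$: there are only finitely many types, and the consistency conditions between overlapping neighbourhoods turn the collection of all $V$-tilings into (an encoding of) a nonempty subshift of finite type $X$ over a finite alphabet, with the feature that every configuration of $X$ decodes back to a genuine tiling of $\mathbb{Z}^n$ by $V$. It would then suffice to produce a periodic configuration in $X$. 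An equivalent, inductive packaging fibers $\mathbb{Z}^n$ over $\mathbb{Z}^{n-1}$ and tries to run a transfer-matrix/pigeonhole argument along the columns, but it hits the same wall as below.

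The hard part will be precisely this last step, where the obvious pigeonhole argument fails: a subshift of finite type can be nonempty yet contain no periodic configuration whatsoever (Berger, Robinson), so finiteness of the type set is not enough, and one must use something special about tilings by a \emph{single} translated tile. The most promising handle is Fourier-analytic. Writing $\mathbf{1}_A$ and $\mathbf{1}_V$ for indicator functions, the tiling identity $\mathbf{1}_A*\mathbf{1}_V\equiv 1$ forces the spectrum of $\mathbf{1}_A$ to lie in $\{\,\xi:\widehat{\mathbf{1}_V}(\xi)=0\,\}\cup\{0\}$ on the torus $(\mathbb{R}/\mathbb{Z})^n$, where $\widehat{\mathbf{1}_V}(\xi)=\sum_{v\in V}e^{2\pi i\,v\cdot\xi}$. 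If one could show that this zero set is contained in a finite union of proper rational subtori --- equivalently, that the mask polynomial $\sum_{v\in V}z^{v}$, once denominators are cleared, is divisible only by factors of cyclotomic type --- then $A$ would be forced to be a finite union of cosets of a finite-index sublattice $q\mathbb{Z}^n$, i.e.\ the tiling would be $q$-periodic. For $V=S(n,e)$ the mask polynomial has a clean, essentially product form whose zero locus might be amenable to a direct analysis, which is why the Lee-sphere case looks more tractable than the conjecture in full. I expect the genuine obstruction to be that for a general tile $V$ the zero set of $\widehat{\mathbf{1}_V}$ carries no such rational structure, and that the combinatorial and spectral data available do not pin down $A$; this is why the conjecture, though widely believed, has long resisted proof.
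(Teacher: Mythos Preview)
The statement is a \emph{conjecture} in the paper, not a theorem: the paper gives no proof and explicitly notes that it is known only for tiles of prime size, for $n\le 2$, and a few other special cases. You correctly identify it as open and present a plan rather than a proof, so there is nothing in the paper to compare your argument against; on that level your submission is appropriate.

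Two remarks on the plan itself. First, your Fourier-analytic reduction is close in spirit to the paper's Theorem~\ref{thm:Fourier}, which shows that if the common zero set of $Q_V(z_1^k,\ldots,z_n^k)$ on the unit torus (over $k$ coprime to $\lvert V\rvert$) is \emph{finite}, then every tiling by $V$ is periodic. Your proposed criterion --- zero set contained in a finite union of proper rational subtori --- is in the same vein but strictly weaker as a hypothesis, and the inference ``spectrum of $\mathbf{1}_A$ supported on rational subtori $\Rightarrow$ $A$ is a finite union of cosets of $q\mathbb{Z}^n$'' is not automatic: a tempered distribution supported on such a union need not be a periodic measure, so that step would need a genuine argument. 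Second, the claim that the mask polynomial of $S(n,e)$ has a ``clean, essentially product form'' is not correct; for general $n$ and $e$ it does not factor in any usable way, and indeed the difficulty of controlling its zero locus is one of the reasons the Golomb--Welch conjecture is hard. These are exactly the places where the real obstacles live, consistent with your own concluding caveat.
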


\noindent So far Conjecture~\ref{conj:periodictiling} has been proved for
tiles $V$ of prime size \cite{Szegedy}, for any $V\subset \mathbb{Z}^{2}$ 
\cite{Ba}, and for some other special types of tiles.

\subsection{Survey of Results on the Golomb--Welch Conjecture}

To provide a support for their conjecture, Golomb and Welch \cite{GW} show
that there is no $PL(n,e)$-code for $(n,e) = (3,2)$ and also for large $e$.
Their basic idea for proving nonexistence of $PL(n,e)$-codes for
sufficiently large $e$ is that such a code will induce a dense packing of $%
\mathbb{R}^n$ by cross-polytopes. The following theorem then follows from
the known fact that there is no tiling of $\mathbb{R}^{n}$ by regular
cross-polytopes.

\begin{theorem}[\protect\cite{GW}]
\label{thm:tempGW} For $n \ge 3$ there exists $e_n$, $e_n$ not specified,
such that for any $e > e_n$ there is no $PL(n,e)$-code.
\end{theorem}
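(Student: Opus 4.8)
The plan is exactly the one sketched in the paragraph preceding the theorem: turn a hypothetical infinite family of $PL(n,e)$-codes into translative packings of $\mathbb{R}^n$ by the regular cross-polytope whose densities approach $1$, and then derive a contradiction from the classical fact that no tiling of $\mathbb{R}^n$ by regular cross-polytopes exists for $n\ge 3$.

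First I would recall from the discussion above that a $PL(n,e)$-code exists if and only if $\mathbb{R}^n$ can be tiled by translates of the solid Lee sphere $L(n,e)$. Rescaling by $1/e$, this is the same as a tiling of $\mathbb{R}^n$ by translates of $L_e:=\tfrac1e L(n,e)$, the union of the cubes $\tfrac ye+[-\tfrac1{2e},\tfrac1{2e}]^n$ over $y\in\mathbb{Z}^n$ with $|y_1|+\cdots+|y_n|\le e$. Write $P=\{x\in\mathbb{R}^n:\|x\|_1\le 1\}$ for the regular cross-polytope, so $\operatorname{vol}(P)=2^n/n!$. Two elementary estimates do the work. (i) \emph{Inner containment:} $\bigl(1-\tfrac n{2e}\bigr)P\subseteq L_e$; indeed, given $x$ with $\|x\|_1\le 1-\tfrac n{2e}$, round each coordinate of $ex$ to the nearest integer to get $y\in\mathbb{Z}^n$ with $|y_i-ex_i|\le\tfrac12$, whence $\|y\|_1\le e\|x\|_1+\tfrac n2\le e$, so $y$ indexes one of the cubes and $x$ lies in it because $|x_i-\tfrac{y_i}{e}|\le\tfrac1{2e}$. (ii) \emph{Volume asymptotics:} $\operatorname{vol}(L_e)=|S(n,e)|/e^n\to 2^n/n!$ as $e\to\infty$, since $|S(n,e)|=\sum_{k}2^k\binom nk\binom ek$ is, for $e\ge n$, a polynomial in $e$ of degree $n$ with leading coefficient $2^n/n!$.

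Now suppose for contradiction that $PL(n,e)$-codes exist for arbitrarily large $e$. Fix such an $e$ and the corresponding density-$1$ tiling of $\mathbb{R}^n$ by translates of $L_e$. Replacing each tile by the concentric copy of $\bigl(1-\tfrac n{2e}\bigr)P$ it contains by (i) yields a translative packing of $\mathbb{R}^n$ by copies of $\bigl(1-\tfrac n{2e}\bigr)P$ whose density equals $\operatorname{vol}\!\bigl((1-\tfrac n{2e})P\bigr)/\operatorname{vol}(L_e)=(1-\tfrac n{2e})^n\tfrac{2^n/n!}{|S(n,e)|/e^n}$, which by (ii) tends to $1$ as $e\to\infty$. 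A global rescaling turns this into a translative packing of $\mathbb{R}^n$ by copies of $P$ of the same density, so the supremum of densities of translative packings of $\mathbb{R}^n$ by the regular cross-polytope is $1$. But a convex body admitting translative packings of density arbitrarily close to $1$ must tile $\mathbb{R}^n$ by translations, and the regular cross-polytope does not tile $\mathbb{R}^n$ for $n\ge 3$ (already for $n=3$ the octahedron is not a space-filler). This contradiction shows that $e_n:=\sup\{e:\text{a }PL(n,e)\text{-code exists}\}$ is finite, establishing the claim.

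The main obstacle, and indeed the only nonroutine input, is the passage from ``the cross-polytope is not a tile'' to ``its translative packing density is strictly below $1$.'' One can sidestep it by arguing entirely within tilings: from a sequence of $PL(n,e_k)$-codes with $e_k\to\infty$, translate the associated rescaled tilings so that a tile contains the origin and extract, by compactness of the space of tilings by uniformly bounded bodies, a subsequential limit; since $L_{e_k}\to P$ in the Hausdorff metric and $\operatorname{vol}(L_{e_k})\to\operatorname{vol}(P)>0$, the limit is an honest tiling of $\mathbb{R}^n$ by translates of $P$, contradicting the classical non-tiling fact directly. Either route relies on a known result about cross-polytopes as space-fillers, exactly as in the original argument of Golomb and Welch, and the bound $e_n$ it produces is ineffective --- which is precisely the point that Section~\ref{sec:PostLep} sets out to remedy.
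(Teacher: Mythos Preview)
Your approach is essentially the same as the one the paper sketches following Golomb and Welch: a $PL(n,e)$-code for large $e$ yields a translative packing of $\mathbb{R}^n$ by cross-polytopes of density approaching $1$, and a compactness argument converts this into an actual translative tiling by cross-polytopes, which does not exist for $n\ge 3$. You supply the explicit inner-containment and volume estimates that the paper leaves implicit, and your alternative route via Hausdorff limits of rescaled tilings is exactly the ``compactness argument on the space of local configurations'' the paper alludes to.

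One small but genuine imprecision: your sentence ``the regular cross-polytope does not tile $\mathbb{R}^n$ for $n\ge 3$ (already for $n=3$ the octahedron is not a space-filler)'' is literally false at $n=4$, where the $16$-cell honeycomb \emph{is} a tiling of $\mathbb{R}^4$ by regular cross-polytopes --- just not a translational one. The paper flags exactly this in the Remark after the theorem. Your argument only needs, and your compactness variant correctly invokes, the nonexistence of a tiling \emph{by translates}; you should state it that way throughout.
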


For a more detailed explanation of their idea, see the beginning of Section~%
\ref{sec:PostLep}. Theorem~\ref{thm:tempGW} is not explicit, and not even
effective in the sense that it only shows that such a constant $e_{n}$
exists. A first explicit bound on $e_{n}$, in the case of periodic codes,
has been given by Post \cite{P}. He showed, by counting low-dimensional
cross-sections, that $PL(n,e,q)$-codes do not exist for $3\leq n\leq 5$, $%
e\geq n-1$, $q\geq 2e+1$ and $n\geq 6$, $e\geq \frac{\sqrt{2}}{2}n-\frac{3}{4%
}\sqrt{2}-\frac{1}{2}$, $q\geq 2e+1$. The result of Post was asymptotically
improved by Astola \cite{A2}, and later by Lepist\"{o} \cite{Le} who
obtained:

\begin{theorem}[\protect\cite{Le}]
\label{thm:Lep} For any $n,e,q$ satisfying $n<(e+2)^{2}/2.1,e\geq 285$, and $%
q\geq 2e+1$, there is no $PL(n,e,q)$-code.
\end{theorem}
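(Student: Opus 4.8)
The plan is to argue by contradiction: turn a hypothetical $PL(n,e,q)$-code into a tiling of $\mathbb{Z}_{q}^{n}$, slice it down to a fixed small dimension to obtain tilings by Lee spheres of mixed radii, and then show that the mixture of radii forced by the global count is incompatible with the way small Lee spheres pack — this being the effective, quantitative form of the geometric fact used in Theorem~\ref{thm:tempGW} that regular cross-polytopes do not tile $\mathbb{R}^{n}$.

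\textbf{Setup and slicing.} Suppose a $PL(n,e,q)$-code $C$ exists with $n,e,q$ in the stated range (in particular $n\ge3$). Since $q\ge 2e+1$, by the discussion preceding Proposition~\ref{P1} we may identify $C$ with a tiling $\mathcal{T}$ of $\mathbb{Z}_{q}^{n}$ by translates of $S(n,e)$. Fix a slice dimension $k$; it will be small, and $k=2$ is the value matching the constant $2.1$, via $\lvert S(2,r)\rvert=2r^{2}+2r+1$. For $\mathbf{a}\in\mathbb{Z}_{q}^{\,n-k}$ put $\Pi_{\mathbf{a}}=\mathbb{Z}_{q}^{k}\times\{\mathbf{a}\}$ and intersect every tile of $\mathcal{T}$ with $\Pi_{\mathbf{a}}$. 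A tile centred at $(\mathbf{c}',\mathbf{c}'')$ meets $\Pi_{\mathbf{a}}$ in a translate of $S\bigl(k,\,e-\delta_{L}(\mathbf{a},\mathbf{c}'')\bigr)$ when $\delta_{L}(\mathbf{a},\mathbf{c}'')\le e$, and in the empty set otherwise; since $q\ge 2e+1$ nothing wraps around, so each $\Pi_{\mathbf{a}}$ is genuinely tiled by $k$-dimensional Lee spheres of radii in $\{0,\dots,e\}$, and summing over $\mathbf{a}$ the radius-$(e-j)$ spheres occur with total multiplicity $\lvert C\rvert\cdot\lvert B(n-k,j)\rvert$, where $B(m,j)=\{\mathbf{x}\in\mathbb{Z}^{m}:\delta_{L}(\mathbf{x},\mathbf{0})=j\}$. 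This is consistent with $\lvert C\rvert\,\lvert S(n,e)\rvert=q^{n}$ via the identity $\sum_{j}\lvert B(n-k,j)\rvert\,\lvert S(k,e-j)\rvert=\lvert S(n,e)\rvert$, so any contradiction must exploit \emph{local} structure of the slice tilings, not mere counting.

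\textbf{The local obstruction and the count.} A single Lee square $S(2,r)$ tiles the plane, but a tiling of $\mathbb{Z}_{q}^{2}$ by squares of several radii is rigid: restricting it to a coordinate line cuts $\mathbb{Z}_{q}$ into intervals of lengths $2r+1$, and passing from $\Pi_{\mathbf{a}}$ to an adjacent slice $\Pi_{\mathbf{a}+\mathbf{e}_{i}}$ changes the radius of every square by exactly $\pm1$, so the slice tilings deform one step at a time. A quantitative version of this rigidity — obtained by tracking the interfaces between neighbouring squares — shows that a square of small radius must be bordered by squares of not-much-larger radius, so in each slice the small squares carry a quantifiable excess of boundary, i.e. of defect relative to a perfect space-filler; this is the combinatorial residue of the fact that near a vertex of $S(n,e)$ the successive cross-sections $S(n-1,a),S(n-2,a),\dots$ grow too slowly to be packed without loss. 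Feeding this back through the slicing, the radius-$(e-j)$ squares are forced to appear with total multiplicity $\lvert C\rvert\cdot\lvert B(n-2,j)\rvert$ while the local bound caps how many slices can be dominated by the efficient large-radius squares; reconciling the two requires $\lvert B(n-2,j)\rvert$ to decay in $j$ faster than it does unless $n-2$ is large. Making this precise using $\lvert B(m,j)\rvert=\sum_{i\ge1}2^{i}\binom{m}{i}\binom{j-1}{i-1}$ and $\lvert S(n,e)\rvert=\sum_{i\ge0}2^{i}\binom{n}{i}\binom{e}{i}$ turns the contradiction on once $n$ falls below roughly $\tfrac12 e^{2}$; sharpening the binomial estimates and absorbing the $O(1/e)$-type error terms upgrades ``$\tfrac12$'' to $(e+2)^{2}/2.1$ and is precisely what forces the threshold $e\ge285$.

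\textbf{Main obstacle.} The hard part is producing the effective local bound in a form that remains strong all the way up to dimension $n\sim e^{2}$, and then balancing it against two-sided estimates for the binomial sums $\lvert S(n,e)\rvert$ and $\lvert B(n-2,j)\rvert$ that are sharp enough in this regime. It is this optimisation of constants — rather than any single new idea beyond the effective cross-polytope argument — that pins down the clean hypothesis $n<(e+2)^{2}/2.1$, $e\ge285$; pushing $2.1$ toward $2$, or lowering $285$, would require either a sharper slice estimate or the use of higher-dimensional cross-sections.
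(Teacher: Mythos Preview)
Your approach is genuinely different from Lepist\"{o}'s, and it has a real gap. Lepist\"{o} (following Astola) does \emph{not} slice to a fixed low dimension. Instead he works with the ``shell'' $\Lambda(e,s)=\{\mathbf{x}:\delta_{L}(\mathbf{x},\mathbf{0})=e+2,\ -s<x_{i}\le s\}$ and runs an Elias-type argument: (i) an averaging step (Lemma~\ref{lem:Lep1b}) shows some translate $\mathbf{a}+\Lambda(e,s)$ contains at least $\lvert\Lambda(e,s)\rvert/(\lvert S(n,e)\rvert-\lvert S(n,e-2)\rvert)$ codewords; (ii) a purely geometric lemma (Lemma~\ref{lem:Lep2b}) says that any $e$-error-correcting code inside $\Lambda(e,s)$ with $\alpha$ elements contains two words at distance at most $\frac{e+2}{\alpha-1}\bigl(\alpha(2-\tfrac{e+2}{n})+4s-6\bigr)$. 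Since all mutual distances in $\Lambda(e,s)$ are even, this forces $2e+2\le\frac{e+2}{\alpha-1}(\alpha(2-\tfrac{e+2}{n})+4s-6)$, and the rest is estimating $\lvert\Lambda(e,s)\rvert$ and $\lvert S(n,e)\rvert$. The constant $2.1$ and the threshold $e\ge285$ come out of optimising $s$ and controlling these estimates; the factor $(2-\tfrac{e+2}{n})$ in the distance bound is what makes the inequality bite precisely when $n$ is of order $e^{2}$.

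Your slicing-to-$k=2$ plan is much closer in spirit to Post's sector method than to Lepist\"{o}'s, and the decisive step --- the ``quantitative version of this rigidity'' asserting that small-radius squares in a mixed tiling carry a quantifiable boundary excess --- is stated but not proved. In fact it is not true as a local statement: in a planar tiling by Lee squares of mixed radii, a radius-$0$ or radius-$1$ square can sit flush against an arbitrarily large one, so there is no local lower bound on ``defect'' attached to a small square alone. Whatever obstruction exists must come from the coupling between adjacent slices, and you have not isolated an inequality that does this. Post's method, which \emph{does} make cross-sectional counts precise, yields only $e\gtrsim n/\sqrt{2}$ even with $6$-dimensional sectors; getting from linear to quadratic in $e$ via cross-sections would require a genuinely new inequality, and your proposal does not supply one. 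The numerological match $\lvert S(2,r)\rvert=2r^{2}+2r+1\approx 2.1\,r^{2}$ is a coincidence, not the mechanism behind Lepist\"{o}'s $2.1$.
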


An outline of Post's and Lepisto's proofs will be provided in the next
section. Developing and refining their ideas we will show that the condition 
\emph{periodic} can be dropped from both their results, cf. Theorem \ref%
{thm:Postextend}, and Theorem \ref{thm:Lepextend}. Also, by using a linear
programming method, we obtained a further slight improvement on the bound of 
$e_{n}$( see Corollary \ref{cor:LP}). The next theorem is a direct
combination of these three bounds on $e_{n}.$

\begin{theorem}
\label{thm:GWsummary}There is no $PL(n,e)$-code for%
\begin{align*}
3 &\leq n\leq 74 &\text{ and } & \max\Bigl\{ \frac{\sqrt{2}}{2}n - 
\frac{3}{4}\sqrt{2}-\frac{1}{2}, 2 \Bigr\} \le e, \\
75 &\leq n\leq 405 &\text{ and } & \max\{18, \sqrt{2n+40}\} \leq e\leq \frac{n-21}{3}\\
& & &\text{ or }\frac{\sqrt{2}}{2}n-\frac{3}{4}\sqrt{2}-\frac{1}{2}\leq e, \\
406 &\leq n\leq 876 &\text{ and } & \sqrt{2n+40}\leq e\leq \frac{n-21}{3}\text{
or }285\leq e, \\
876 &\leq n &\text{ and } & \sqrt{2n+40} \le e.
\end{align*}
\end{theorem}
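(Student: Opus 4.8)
The plan is to obtain Theorem~\ref{thm:GWsummary} as a bookkeeping consequence of three bounds on $e_n$ already available in the paper, by forming, for each fixed dimension $n$, the union of the sets of radii $e$ that each of them excludes. For a fixed $n$ the extension of Post's bound (Theorem~\ref{thm:Postextend}) rules out every $e\ge\frac{\sqrt{2}}{2}n-\frac{3}{4}\sqrt{2}-\frac{1}{2}$, and in addition every $e\ge n-1$ when $3\le n\le 5$; the extension of Lepist\"{o}'s bound (Theorem~\ref{thm:Lepextend}) rules out every $e$ with $e\ge 285$ and $n<(e+2)^{2}/2.1$; and the linear-programming bound (Corollary~\ref{cor:LP}) rules out every $e$ with $e\ge 18$ lying in the interval $\sqrt{2n+40}\le e\le\frac{n-21}{3}$. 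The endpoint value $e=2$ appearing in the first range is accounted for separately: for $n=3$ it is already Post's clause $e\ge n-1$, for $n\ge 6$ it is dominated by $\frac{\sqrt{2}}{2}n-\frac{3}{4}\sqrt{2}-\frac{1}{2}$, and the remaining small pairs $(4,2)$, $(5,2)$, $(5,3)$ are covered by previously known nonexistence results.

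The proof is then a finite case analysis on $n$, and the breakpoints $n=74,75,405,406,876$ are precisely the crossover points of the three regions. The LP interval $\bigl[\sqrt{2n+40},\frac{n-21}{3}\bigr]$ is nonempty exactly when $\frac{n-21}{3}\ge 18$, i.e.\ when $n\ge 75$, so for $n\le 74$ only Post's bound (together with the three small cases) is in play. Post's threshold $\frac{\sqrt{2}}{2}n-\frac{3}{4}\sqrt{2}-\frac{1}{2}$ passes $285$ at $n\approx 405.3$, so for $n\le 405$ Post's half-line already reaches $e=285$ and therefore subsumes Lepist\"{o}'s bound, whereas for $406\le n\le 876$ Lepist\"{o}'s bound is stronger at the top and, since $287^{2}/2.1>876$, it in fact excludes every $e\ge 285$. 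Finally $\frac{n-21}{3}=285$ exactly at $n=876$, so for $n\ge 876$ the LP interval and Lepist\"{o}'s half-line $[285,\infty)$ meet with no gap and their union is all of $\bigl[\sqrt{2n+40},\infty\bigr)$, while for $406\le n\le 876$ a gap $\bigl(\frac{n-21}{3},285\bigr)$ survives --- which is why the statement in that range, and likewise in $75\le n\le 405$, is a disjunction and not a single interval.

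With the breakpoints identified, each of the four ranges is dispatched by elementary inequalities. The one requiring slightly more than reading off the formulas is $n\ge 876$: here a radius $e$ not already covered by the LP interval satisfies $e+2\ge\frac{n-15}{3}$, hence $(e+2)^{2}\ge\bigl(\frac{n-15}{3}\bigr)^{2}>2.1\,n$ because $n^{2}-48.9\,n+225>0$ for $n\ge 44$; thus Lepist\"{o}'s bound applies and, combined with the LP interval, yields $\sqrt{2n+40}\le e$. For the remaining ranges one simply compares $\max\{18,\sqrt{2n+40}\}$ with $\frac{n-21}{3}$ and both with $\frac{\sqrt{2}}{2}n-\frac{3}{4}\sqrt{2}-\frac{1}{2}$ and with $285$.

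The main, and essentially the only, obstacle is this numerical bookkeeping: one has to verify that within each stated range of $n$ the excluded intervals of radii produced by the three inputs abut or overlap, so that the claimed set of $e$ is genuinely covered, and that the breakpoints are taken at the correct integers. No new combinatorial or geometric input is needed beyond Theorems~\ref{thm:Postextend} and~\ref{thm:Lepextend} and Corollary~\ref{cor:LP}.
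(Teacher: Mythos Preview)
Your proposal is correct and follows exactly the approach the paper intends: the paper states only that Theorem~\ref{thm:GWsummary} ``is a direct combination of these three bounds on $e_n$'' (namely Theorems~\ref{thm:Postextend}, \ref{thm:Lepextend}, and Corollary~\ref{cor:LP}), and you have supplied the bookkeeping the paper omits, including the identification of the crossover points $n=75,405/406,876$ and the verification that the LP interval and Lepist\"o's half-line merge for $n\ge 876$. One small imprecision: for $n=3,4,5$ you invoke ``Post's clause $e\ge n-1$'', but that clause in \cite{P} is only for periodic codes and is not part of Theorem~\ref{thm:Postextend}; the correct input for those dimensions is the full nonexistence for $3\le n\le 5$, $e\ge 2$ established in \cite{GMP,Spa,H2} (which you do cite for the remaining small pairs), so the argument stands once that reference is used uniformly.
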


It seems that the most difficult case of the Golomb--Welch conjecture is
that of $e=2$. The nonexistence of $PL(6,2)$-codes has been shown in \cite%
{H5}. A step forward in this direction has been made by the second author of
this paper (see \cite{Kim}). He proved that if the volume of the sphere $%
\lvert S(n,2)\rvert =2n^{2}+2n+1$ is prime and a certain number-theoretic
condition is satisfied, then $PL(n,2)$-codes do not exist. It turns out that
this condition is not restrictive as, e.g., out of $12706$ numbers $n\leq
10^{5}$ with $p=2n^{2}+2n+1$ prime, only $4$ numbers $n$ do not satisfy the
condition. However, it is not known if there are infinity many $n$ with $%
p=2n^{2}+2n+1$ prime.

A special case, the nonexistence of linear $PL(n,2)$-codes is proved in \cite%
{H6} for $n\leq 12$. The proof is based on the nonexistence of a
homomorphism $\phi :\mathbb{Z}^{n}\rightarrow G$, an abelian group of order $%
\lvert S(n,2)\rvert $ such that a restriction of $\phi $ to $S(n,2)$ would
be a bijection to $G$. A similar approach has been used in \cite{Zhang} to
show the nonexistence of linear $PL(n,3)$-codes for some values of $n\equiv
12,21(\func{mod}27)$, and the nonexistence of linear $PL(n,4)$-codes for
some values of $n\equiv 3,5,21,23(\func{mod}27)$.

As to the weak version of the Golomb--Welch conjecture, Conjecture~\ref%
{conj:weakGW}, the nonexistence of $PL(n,e,q)$-codes has been proved for
several special cases of $q$. A list of such cases is given in \cite{A1}. To
illustrate this type of conditions, here we mention two of them (see \cite%
{A1}): There is no $PL(n,e,q)$-code for $e=2,q=p^{k}$, $p$ is a prime, $p\neq
13$, $p<\sqrt{\lvert S(n,2)\rvert }$; and $e=3$, $q\geq 7$ is not divisible
by a prime $p\equiv 1,3,5,7,9(\func{mod}20)$. 

Now we turn our attention to the case of small dimension $n$. For $3\leq
n\leq 5$, the Golomb--Welch conjecture has been proved for all $e\geq 2$. In 
\cite{GMP}, by an elegant \textquotedblleft{}picture says it all%
\textquotedblright{} approach it is shown that there is no tiling of $%
\mathbb{R}^{3}$ by Lee spheres. A further extension of the result has been
provided in \cite{GMP1} (see Section~\ref{sec:further}). The same result,
using an exhaustive computer search, was proved in \cite{Spa} for $\mathbb{R}%
^{4}$. It seems that the used algorithm is not computationally feasible for $%
n\geq 5 $. Finally, by an algebraic approach based on the nonexistence of $%
PL(n,2)$-codes, it was proved analytically that, for $3\leq n\leq 5$, there
is no tiling of $\mathbb{R}^{n}$ by Lee spheres \cite{H2}.

\section{The Golomb--Welch Conjecture for Large Radius}

\label{sec:PostLep}

In this section we study ways of proving the nonexistence of $PL(n,e)$%
-codes, in the case when $e$ is sufficiently large. Why would $e$ being
large prevent the Lee sphere $S(n,e)$ from tiling $\mathbb{Z}^{n}$? The
intuition is that as $e$ grows for fixed $n$, the sphere $S(n,e)$ becomes
more and more similar to the convex hull of $\{(0,\ldots ,0,\pm 1,0,\ldots
,0)\}$. This polytope is the dual of the $n$-cube, and is called a \emph{%
cross-polytope}.

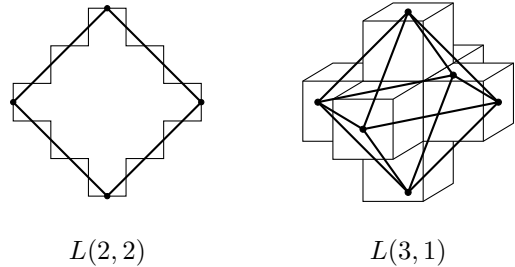
\begin{figure}[th]
\centering
\begin{tikzpicture}
    \begin{scope}[shift={(0,0)},scale=0.5]
      \draw (-1.5,-0.5) -- (-1.5,-1.5) -- (-0.5,-1.5) -- (-0.5,-2.5) -- (0.5,-2.5) -- (0.5,-1.5) -- (1.5,-1.5) -- (1.5,-0.5) -- (2.5,-0.5) -- (2.5,0.5) -- (1.5,0.5) -- (1.5,1.5) -- (0.5,1.5) -- (0.5,2.5) -- (-0.5,2.5) -- (-0.5,1.5) -- (-1.5,1.5) -- (-1.5,0.5) -- (-2.5,0.5) -- (-2.5,-0.5) -- cycle;
      \draw[fill] (-2.5,0) circle[radius=0.07];
      \draw[fill] (2.5,0) circle[radius=0.07];
      \draw[fill] (0,-2.5) circle[radius=0.07];
      \draw[fill] (0,2.5) circle[radius=0.07];
      \draw[thick] (-2.5,0) -- (0,-2.5) -- (2.5,0) -- (0,2.5) -- cycle;
    \end{scope}
    \node at (0,-2) {$L(2,2)$};

    \begin{scope}[shift={(4,0)},scale=0.8]
      \foreach \x/\y/\z in {-1/0/0, 0/1/0, 0/0/-1, 1/0/0, 0/-1/0, 0/0/1} {
	\draw[fill=white] (\x+0.5*\y-0.75, 0.3*\y+\z+0.35) -- (\x+0.5*\y-0.25, 0.3*\y+\z+0.65) -- (\x+0.5*\y+0.75, 0.3*\y+\z+0.65) -- (\x+0.5*\y+0.75, 0.3*\y+\z-0.35) -- (\x+0.5*\y+0.25, 0.3*\y+\z-0.65) -- (\x+0.5*\y-0.75, 0.3*\y+\z-0.65) -- cycle;
	\draw (\x+0.5*\y-0.75, 0.3*\y+\z+0.35) -- (\x+0.5*\y+0.25, 0.3*\y+\z+0.35) -- (\x+0.5*\y+0.75, 0.3*\y+\z+0.65);
	\draw (\x+0.5*\y+0.25, 0.3*\y+\z+0.35) -- (\x+0.5*\y+0.25, 0.3*\y+\z-0.65);
      }
      \draw[fill] (1.5,0) circle[radius=0.05];
      \draw[fill] (-1.5,0) circle[radius=0.05];
      \draw[fill] (0,1.5) circle[radius=0.05];
      \draw[fill] (0,-1.5) circle[radius=0.05];
      \draw[fill] (0.75,0.45) circle[radius=0.05];
      \draw[fill] (-0.75,-0.45) circle[radius=0.05];
      \draw[thick] (1.5,0) -- (0,-1.5) -- (-1.5,0) -- (0,1.5) -- cycle;
      \draw[thick] (1.5,0) -- (0.75,0.45) -- (-1.5,0) -- (-0.75,-0.45) -- cycle;
      \draw[thick] (0.75,0.45) -- (0,-1.5) -- (-0.75,-0.45) -- (0,1.5) -- cycle;
    \end{scope}
    \node at (4,-2) {$L(3,1)$};
  \end{tikzpicture}
\caption{Figure of a cross-polytope in $\mathbb{R}^{2}$ and $\mathbb{R}^{3}$}
\label{fig:2}
\end{figure}

For $n\geq 3$, it is well-known that the cross-polytope does not tile $%
\mathbb{R}^{n}$ by translations. If $n\neq 4$, then this fact can be
immediately obtained by computing the angle of two adjacent faces, and for $%
n=4$, taking care of the orientation gives this fact. Using a compactness
argument on the space of local configurations, it can be shown that the
packing density of a bounded set that does not tile $\mathbb{R}^{n}$ is
bounded away from $1$. On the other hand, a $PL(n,e)$-code (even a $QPL(n,e)$%
-code, see Section~\ref{subsec:QPL} for the definition of $QPL(n,e)$-code)
for large $e$ induces a translational packing of a $n$-dimensional
cross-polytope with high density. Thus we obtain Theorem~\ref{thm:tempGW}.

\newcounter{tempthm} \setcounter{tempthm}{\arabic{theorem}} %
\setcounter{theorem}{4}

\begin{theorem}[\protect\cite{GW}]
For $n \ge 3$ there exists $e_n$, $e_n$ not specified, such that for any $e
> e_n$ there is no $PL(n,e)$-code.
\end{theorem}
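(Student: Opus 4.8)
The plan is to carry out the packing-density argument indicated just above. Fix $n\ge 3$ and suppose, for contradiction, that there is a strictly increasing sequence $e_1<e_2<\cdots$ with $e_k\to\infty$ such that a $PL(n,e_k)$-code $C_k\subset\mathbb{Z}^n$ exists for every $k$; recall that such a $C_k$ is exactly a set for which $C_k+S(n,e_k)$ partitions $\mathbb{Z}^n$, equivalently for which the translated real Lee spheres $\mathbf{c}+L(n,e_k)$, $\mathbf{c}\in C_k$, tile $\mathbb{R}^n$. The first step is the elementary comparison between $L(n,e)$ — the union of the unit cubes centred at the integer points of $\ell^1$-norm $\le e$ — and the dilated cross-polytope $eP_n$, where $P_n=\operatorname{conv}\{\pm\mathbf{e}_1,\dots,\pm\mathbf{e}_n\}$: coordinate-wise rounding shows
\[
\Bigl(e-\tfrac n2\Bigr)P_n\ \subseteq\ L(n,e)\ \subseteq\ \Bigl(e+\tfrac n2\Bigr)P_n ,
\]
so $\tfrac1e L(n,e)\to P_n$ in the Hausdorff metric and, comparing volumes, $\lvert S(n,e)\rvert=\operatorname{vol}\bigl(L(n,e)\bigr)\le\bigl(e+\tfrac n2\bigr)^{n}\operatorname{vol}(P_n)$.

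The second step turns the code into a dense translational packing of $P_n$. As the cubes $\mathbf{c}+L(n,e_k)$ tile $\mathbb{R}^n$ and $\bigl(e_k-\tfrac n2\bigr)P_n\subseteq L(n,e_k)$, the inscribed cross-polytopes $\mathbf{c}+\bigl(e_k-\tfrac n2\bigr)P_n$, $\mathbf{c}\in C_k$, have pairwise disjoint interiors, i.e.\ they form a genuine translational packing of $\mathbb{R}^n$ by $\bigl(e_k-\tfrac n2\bigr)P_n$. Since $C_k+S(n,e_k)$ is a partition of $\mathbb{Z}^n$, the set $C_k$ has density $1/\lvert S(n,e_k)\rvert$ in $\mathbb{Z}^n$ (boundary effects being of lower order), so after rescaling by $1/e_k$ we obtain a translational packing of $\mathbb{R}^n$ by $\bigl(1-\tfrac n{2e_k}\bigr)P_n$ whose centre set has density $e_k^{\,n}/\lvert S(n,e_k)\rvert$ in $\mathbb{R}^n$, and hence density
\[
\delta_k\ =\ \Bigl(1-\tfrac n{2e_k}\Bigr)^{n}\operatorname{vol}(P_n)\cdot\frac{e_k^{\,n}}{\lvert S(n,e_k)\rvert}\ \ge\ \Bigl(\frac{e_k-n/2}{e_k+n/2}\Bigr)^{n}\ \longrightarrow\ 1
\]
as $k\to\infty$, using the volume bound from the first step.

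The third step is the compactness input. One topologises the set of all translational packings of $\mathbb{R}^n$ by $P_n$ via local convergence of the configurations of translation vectors; since $P_n$ is bounded with nonempty interior, such configurations are uniformly locally finite, so a standard selection argument makes this space (sequentially) compact, and the upper-density functional is upper semicontinuous on it. Consequently the translational packing density $\delta_n^{\ast}$ of $P_n$ is attained; if it were equal to $1$, a density-one translational packing of $\mathbb{R}^n$ by the convex body $P_n$ would have to be a translational tiling (the translational packing density of a convex body equals $1$ only if the body tiles by translations), which contradicts the well-known fact — recalled above — that $P_n$ does not tile $\mathbb{R}^n$ by translations for $n\ge 3$. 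Hence $\delta_n^{\ast}<1$, and since every $\delta_k\le\delta_n^{\ast}$ while $\delta_k\to 1$, we reach a contradiction. Therefore, for our fixed $n$, $PL(n,e)$-codes exist for only finitely many $e$, and we take $e_n$ to be the largest such value.

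The main obstacle is the third step. Its pieces — fixing the right topology on the non-compact, non-periodic space of packings, proving sequential compactness and upper semicontinuity of the (properly defined \emph{upper}) density, and justifying the implication ``density one $\Rightarrow$ tiling'' for the convex body $P_n$ — are all classical in the geometry-of-numbers literature, but assembling them carefully is where the real work lies. Everything else is a routine volume estimate together with the geometric fact, already recalled, that the $n$-dimensional cross-polytope admits no translational tiling of $\mathbb{R}^n$ for $n\ge 3$ (by a dihedral-angle count when $n\ne 4$, and by an additional orientation argument when $n=4$).
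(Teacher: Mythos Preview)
Your proposal is correct and follows essentially the same route as the paper's sketch: a $PL(n,e)$-code yields a translational packing of the cross-polytope $P_n$ of density tending to $1$ as $e\to\infty$, while a compactness argument on the space of local configurations shows that the translational packing density of $P_n$ is strictly less than $1$ because $P_n$ does not tile $\mathbb{R}^n$ by translations for $n\ge 3$ (via the dihedral-angle count for $n\neq 4$ and the orientation argument for $n=4$). You have simply fleshed out the volume comparisons $(e-\tfrac n2)P_n\subseteq L(n,e)\subseteq(e+\tfrac n2)P_n$ and the density calculation more explicitly than the paper does, and you correctly identify the compactness step as the place where the real analytic work sits.
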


\setcounter{theorem}{\arabic{tempthm}}

\begin{remark}
We note that in \cite{GW}, the theorem was actually proved only for $n=3$
and $n\geq 5$. However, it is not difficult to see that the same argument
holds also for $n=4.$ Indeed, although there is a tiling of $\mathbb{R}^{4}$
by the $4$-dimensional cross-polytope, known as the $16$-cell honeycomb,
this tiling is not by translations.
\end{remark}

The idea of the proof of Theorem~\ref{thm:tempGW} has been used by several
authors (see e.g.~\cite{H6}), where, applying the idea, it is proved that for
each $n$ there are only finitely many values of $e$ for which quasi-perfect
Lee code might exist.

\subsection{Post's Bound}

The first ever effective result on the nonexistence of $PL(n,e,q)$-codes for
large $e$ was obtained by Post.

\begin{theorem}[\protect\cite{P}]
\label{thm:Post} For any $n,e,q$ satisfying $n\geq 6$, $e\geq \frac{\sqrt{2}%
}{2}n-\frac{3}{4}\sqrt{2}-\frac{1}{2}$, and $q\geq 2e+1$, there is no $%
PL(n,e,q)$-code.
\end{theorem}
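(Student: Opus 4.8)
The plan is to make effective Golomb and Welch's density heuristic by following Post's method of counting low-dimensional cross-sections, organised around a single codeword. By Proposition~\ref{P1} it suffices to rule out a $q$-periodic tiling of $\mathbb{Z}^{n}$ by the Lee sphere $S(n,e)$, so suppose such a tiling exists, say coming from a $PL(n,e)$-code $C$, and translate it so that $\mathbf{0}\in C$. The basic tool is the \emph{cross-section principle}. For $\mathbf{a}=(a_{3},\dots ,a_{n})\in\mathbb{Z}^{n-2}$ let $W_{\mathbf{a}}$ be the coordinate plane $\{x_{3}=a_{3},\dots ,x_{n}=a_{n}\}$ and put $w_{\mathbf{a}}(\mathbf{c})=\sum_{j\ge 3}\lvert c_{j}-a_{j}\rvert$. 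A tile $S(n,e)+\mathbf{c}$ meets $W_{\mathbf{a}}$ precisely when $w_{\mathbf{a}}(\mathbf{c})\le e$, and then its trace on $W_{\mathbf{a}}\cong\mathbb{Z}^{2}$ is the two-dimensional Lee sphere $S\bigl(2,\,e-w_{\mathbf{a}}(\mathbf{c})\bigr)$ centred at $(c_{1},c_{2})$. Hence, for each $\mathbf{a}$, the traces form a $q$-periodic tiling of $\mathbb{Z}^{2}$ by two-dimensional Lee spheres (``diamonds'') of radii at most $e$, and the diamond arising from $\mathbf{0}$ has radius $e-\lVert\mathbf{a}\rVert_{1}$ and is centred at the origin. (One-dimensional cross-sections carry no information, since intervals of odd length tile a line freely; the two-dimensional ones are where the rigidity lives.)

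The combinatorial heart of the argument is a quantitative rigidity statement for tilings of $\mathbb{Z}^{2}$ by diamonds: in any such tiling containing $S(2,r)$ at the origin, the diamonds bordering it are severely constrained. Examining the four tips at $(\pm(r+1),0)$ and $(0,\pm(r+1))$ shows that a bordering diamond of radius $r'$ must have its centre on one of the lines $c^{(1)}\pm\lvert c^{(2)}\rvert = r+r'+1$ (and the two reflected families), so that it sits ``flush'' against an edge of $S(2,r)$ with a prescribed offset; propagating this around the staircase boundary $B(2,r)$, which has $4r$ cells, forces a lower bound on the radii of the bordering diamonds, or more precisely on the total area they must devote to surrounding $S(2,r)$, in terms of $r$. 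Translated back, this says that if a codeword $\mathbf{c}$ produces a diamond bordering the origin diamond in $W_{\mathbf{a}}$, then $w_{\mathbf{a}}(\mathbf{c})$ cannot be much larger than $\lVert\mathbf{a}\rVert_{1}$; equivalently, surrounding the origin diamond in $W_{\mathbf{a}}$ has a quantifiable ``cost'' that is charged to codewords other than $\mathbf{0}$.

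I would then run this over all planes $W_{\mathbf{a}}$ with $\lVert\mathbf{a}\rVert_{1}\le e$ at once. There are about $2^{\,n-2}e^{\,n-2}/(n-2)!$ of them, and in each the codeword $\mathbf{0}$ supplies a diamond of radius $e-\lVert\mathbf{a}\rVert_{1}$ to be surrounded, while a codeword $\mathbf{c}$ contributes to those planes with $w_{\mathbf{a}}(\mathbf{c})\le e$. Summing the per-plane surrounding costs and comparing with the only global budget available — the tiles $S(n,e)+\mathbf{c}$ are pairwise disjoint, so a period box $[0,q)^{n}$ contains exactly $q^{n}/\lvert S(n,e)\rvert$ codewords, each ``worth'' volume $\lvert S(n,e)\rvert$ — yields a purely numerical inequality linking $n$ and $e$ that must hold if $C$ exists. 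The inequality breaks down once $e$ is of order $n/\sqrt{2}$; the $\sqrt{2}$ enters because $\lvert S(2,r)\rvert = 2r^{2}+2r+1$, so the ``linear size'' of a planar Lee sphere is $\sqrt{2}\,r+O(1)$, and it is this linear size — summed over the $(n-2)$-dimensional family of cross-section planes through $\mathbf{0}$ and forced to fit inside a period of $\mathbb{Z}^{n}$ — that drives the contradiction. Pushing the constants through should produce exactly the nonexistence range $e\ge \tfrac{\sqrt{2}}{2}n-\tfrac{3}{4}\sqrt{2}-\tfrac{1}{2}$, with the hypothesis $n\ge 6$ needed so that the lower-order terms do not overwhelm the main estimate (the remaining cases $3\le n\le 5$ being handled by the separate arguments mentioned in the survey).

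The step I expect to be the main obstacle is the two-dimensional rigidity lemma: it is easy to see that a diamond cannot be placed next to $S(2,r)$ arbitrarily, but converting ``the bordering diamonds cannot all be small'' into a bound with the correct leading constant — so that after the $(n-2)$-fold aggregation one recovers the coefficient $\tfrac{\sqrt{2}}{2}$ rather than something weaker — calls for a careful discharging-type accounting around $B(2,r)$ and a clean comparison of the resulting sum against $\lvert S(n,e)\rvert$. A secondary, essentially routine, difficulty is keeping track of the lower-order terms (the ``$-\tfrac{3}{4}\sqrt{2}-\tfrac{1}{2}$'') and pinning down the precise threshold $n\ge 6$.
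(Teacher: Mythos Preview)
Your plan is not Post's method, and the step you flag as the main obstacle is in fact a genuine gap. Post does \emph{not} take two-dimensional coordinate cross-sections and study how diamonds border a central diamond. What the paper calls ``$k$-dimensional sectors'' are the $2\times\cdots\times 2$ boxes, i.e., translates of $\{0,1\}^{k}\times\{0\}^{n-k}$, and Post works with $k=6$. A $6$-dimensional sector $T$ meets a Lee sphere in a set of size in $\{0,1,7,22,42,57,63,64\}$; listing all ways a fixed $T$ can be covered, Post proves the local inequality
\[
t_{T,1}-t_{T,7}-10t_{T,22}+10t_{T,42}+t_{T,57}-t_{T,63}\ge 0,
\]
while a direct count of sectors against a single copy of $S(n,e)$ shows that the analogous combination $g_{1}-g_{7}-10g_{22}+10g_{42}+g_{57}-g_{63}$ is strictly negative precisely when $n\ge 6$ and $e\ge \frac{\sqrt{2}}{2}n-\frac{3}{4}\sqrt{2}-\frac{1}{2}$. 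Double counting type-$i$ pairs over $\mathbb{Z}_{q}^{n}$ (finitely many, since the code is periodic) then gives the contradiction. The constant $\tfrac{\sqrt{2}}{2}$ arises as the root of the explicit polynomial controlling this $6$-dimensional sector count, not from the area formula $\lvert S(2,r)\rvert=2r^{2}+2r+1$.

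Your proposed route via planar cross-sections is a different idea, and the ``quantitative rigidity'' you need is not available in the form you suggest. A two-dimensional Lee sphere $S(2,r)$ can be bordered in a tiling of $\mathbb{Z}^{2}$ by diamonds of \emph{any} radii whatsoever: one can place diamonds flush along each staircase edge with no lower bound on their individual sizes, so there is no per-plane inequality of the type ``the bordering diamonds cannot all be small'' to propagate. Consequently there is no well-defined ``surrounding cost'' to sum over the planes $W_{\mathbf{a}}$, and no mechanism by which the aggregation would produce the specific threshold $e\ge \frac{\sqrt{2}}{2}n-\frac{3}{4}\sqrt{2}-\frac{1}{2}$. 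If you want to recover Post's bound, you should switch to the $6$-dimensional sector count and the two lemmas stated in the paper (Lemmas~\ref{lem:Post1} and~\ref{lem:Post2}); the double-counting step is then straightforward.
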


Post obtained this theorem by focusing on local configurations of the tiling
at the boundary of the Lee spheres. Let us be more specific.

\begin{definition}
A \emph{$k$-dimensional sector} is a subset of $\mathbb{Z}^n$ or $\mathbb{Z}%
_q^n$ that is a translate of 
\begin{equation*}
\{ \mathbf{x} : x_i \in \{0, 1\} \text{ if } i \in \{ i_1, \ldots, i_k \},
\; x_i = 0 \text{ otherwise} \},
\end{equation*}
where $\{i_1, \ldots, i_k\} \subseteq \{1, \ldots, n\}$.
\end{definition}

Clearly, a $k$-dimensional sector has cardinality $2^{k}$, and a non-empty
intersection of a $k$-dimensional sector and a Lee sphere always has
cardinality $\sum_{i=0}^{t}\binom{k}{i}$ for some $0\leq t\leq k$. In \cite%
{P}, Post focused on the $6$-dimensional sectors in $\mathbb{Z}_{q}^{n}$. A $%
6$-dimensional sector consists of $64$ unit cubes (or points if we agree to
work in $\mathbb{Z}^{n}$), and a Lee sphere can cover either $%
0,1,7,22,42,57,63,64$ of them. Ignore the case when the $6$-dimensional
sector is disjoint from or entirely covered by the Lee sphere. Let us say
that a pair $(S,T)$ of a Lee sphere $S$ and a $6$-dimensional sector $T$ is
of \emph{type $i$} if $\lvert S\cap T\rvert =i$. For a fixed sector $T$,
there are only a handful of ways it can be covered completely: one type~$63$
and one type~$1$, one type~$42$ and one type~$7$ and fifteen type~$1$, etc.
Given a tiling of $\mathbb{Z}^{n}$ by Lee spheres $S(n,e)$. We denote by $%
t_{T,i}$ the number of type~$i$ pairs $(S,T)$ with given sector $T$. Listing
all possible combinations, Post proves the following.

\begin{lemma}[{\protect\cite[p.~377]{P}}]
\label{lem:Post1} Given a tiling of $\mathbb{Z}^{n}$ by $S(n,e)$, $n\geq 6$,
for any $6$-dimensional sector $T$, 
\begin{equation*}
t_{T,1}-t_{T,7}-10t_{T,22}+10t_{T,42}+t_{T,57}-t_{T,63}\geq 0.
\end{equation*}
\end{lemma}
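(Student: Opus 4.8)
The plan is to run a purely combinatorial enumeration of the ways a fixed $6$-dimensional sector $T$ can be partitioned by the Lee spheres of the tiling, and then verify that the displayed inequality holds term-by-term on each admissible partition. First I would fix $T$ and recall that, since the tiling restricts to a partition of the $64$ points of $T$ into the nonempty intersections $S\cap T$, each such intersection has cardinality in $\{1,7,22,42,57,63,64\}$ (the values $\sum_{i=0}^t\binom{6}{i}$). If some intersection has size $64$ then $t_{T,i}=0$ for all the listed $i$ and the inequality reads $0\ge 0$; so assume no sphere covers $T$ entirely. The key finite step is to enumerate all multisets of the remaining values $\{1,7,22,42,57,63\}$ that sum to $64$: e.g. $63+1$, $57+7$, $57+1^{7}$, $42+22$, $42+7\cdot?$—one must be careful, $42+22=64$, $42+7+?$ forces the rest to be $1$'s, $42+15\cdot 1+7$, $22+42$, $22+22+?$, $22+7\cdot?+1\cdot?$, and the all-$1$ and related small-piece partitions. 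Because $22,42,57,63$ are each $>32$, at most one of these "large" pieces can occur unless two copies of $22$ fit (and $22+22=44$, leaving $20$ to be made from $1$'s and $7$'s: $7+7+6\cdot 1$ or $7+13\cdot 1$ or $20\cdot 1$), so the enumeration is genuinely short.

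Next, for each admissible partition of $T$ I would compute the value of the linear form
\begin{equation*}
L \;=\; t_{T,1}-t_{T,7}-10\,t_{T,22}+10\,t_{T,42}+t_{T,57}-t_{T,63},
\end{equation*}
summing the coefficient of each piece over the pieces that appear. The heart of the argument is the observation that a large piece carrying a negative coefficient is always "paid for": a type~$63$ piece ($-1$) forces exactly one type~$1$ piece ($+1$), giving $L=0$; a type~$7$ piece ($-1$) occurring alongside a type~$57$ piece ($+1$) gives $L\ge 0$; and a type~$22$ piece ($-10$) can only be completed using a type~$42$ piece ($+10$) or a second type~$22$ together with enough $1$'s and $7$'s, and in the former case $L=0$ while in the latter $L=-20+({\rm number\ of\ }1{\rm 's})-({\rm number\ of\ }7{\rm 's})$, which one checks is $\ge 0$ in each of the three sub-cases $22+22+7^2+1^6$, $22+22+7+1^{13}$, $22+22+1^{20}$ (values $-20+6-2=-16$? — this is exactly the spot to be careful; see below). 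Running through the short list and tallying $L$ in every case establishes $L\ge 0$, which is the claim.

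The main obstacle I anticipate is precisely the bookkeeping in the mixed partitions involving $22$'s: the naive tally $22+22+1^{20}$ gives $L=-20+20=0$, but $22+22+7^2+1^6$ gives $-20-2+6=-16<0$, so a bare enumeration of \emph{numerical} partitions of $64$ is not enough — one must use the geometry. The resolution is that not every numerical partition is realizable: a type~$42$ (resp.\ type~$22$) pair means $S\cap T$ equals $\bigcup_{i\le 5}\binom{6}{i}$-worth of points inside the sector, i.e.\ $T$ minus a $6$-dimensional sub-sector (resp.\ minus a codimension-$2$ "edge" of the sector), and these shapes are forced up to symmetry. So the real work is to show, using that each $S\cap T$ is the trace of a Lee ball and hence a "lower set / order ideal" type subset of the Boolean cube $\{0,1\}^6$ placed at the sector's corner, that the only completions of a type~$22$ piece are by a single type~$42$ piece or by fifteen type~$1$ pieces — ruling out the $7$-heavy combinations — after which $L\ge 0$ falls out. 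I would structure the proof as: (i) characterize $S\cap T$ as a "staircase" subset of the $6$-cube determined by a single threshold $t$; (ii) deduce the compatibility constraints on which thresholds can coexist in one sector; (iii) enumerate the resulting (short) list of partitions; (iv) tally $L$ and observe it is nonnegative in each case.
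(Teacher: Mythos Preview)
Your plan is exactly Post's argument, which the paper only cites: each nonempty $S\cap T$ is a Hamming ball of some radius $r\in\{0,\dots,5\}$ about a corner of the $6$-cube, so one enumerates the geometrically realizable partitions of $\{0,1\}^6$ into such balls and checks $L\ge 0$ on each. Steps (i)--(iv) are the right structure, and your diagnosis that the numerical partition $22+22+7^2+1^6$ must be excluded on geometric grounds is the crux.

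One correction to your concrete claims, though: the assertion that ``the only completions of a type-$22$ piece are by a single type-$42$ piece or by fifteen type-$1$ pieces'' is wrong (fifteen does not even sum correctly). In fact $22+42$, $22+22+1^{20}$, and $22+7^k+1^{42-7k}$ for $0\le k\le 3$ are all realizable, and all give $L\ge 0$. The geometric constraint you actually need is this: a radius-$1$ ball disjoint from the radius-$2$ ball about $\mathbf{0}$ must have its center at weight $\ge 4$, and to stay inside the complementary radius-$3$ ball about $\mathbf{1}$ its center must have weight $\ge 4$; relabelling, the centers lie at weight $\le 2$ with pairwise distance $\ge 3$, which forces pairwise disjoint supports and hence at most three of them. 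This kills $22+7^5+1^7$ and $22+7^6$, the only numerical partitions containing a $22$ with $L<0$. Similarly, when two type-$22$ pieces are present no type-$7$ fits at all (so only $22+22+1^{20}$ survives), and for the pure $7^k+1^{64-7k}$ partitions the bound $A(6,3)=8$ gives $k\le 8$, whence $L=64-8k\ge 0$. With these three counts in hand the full enumeration is short.
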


On the other hand, let us count the number of type~$i$ pairs for a fixed Lee
sphere $S$. We know quite well the shape of $S(n,e)$, and it is a matter of
computation to count $6$-dimensional sectors $T$ for which $\lvert
S(n,e)\cap T\rvert =i$. Denote by $g_{i}$ the number of type~$i$ pairs $%
(S(n,e),T)$.\footnote{%
In \cite{P}, $g_{i}$ denotes the number of $6$-dimensional sectors in one
orthant. As a consequence, the $g_{i}$ used here differs from Post's $g_{i}$
by a factor of $64$. However this does not affect anything.}

\begin{lemma}[{\protect\cite[p.~378--379]{P}}]
\label{lem:Post2} If $n \ge 6$ and $e \ge \frac{\sqrt{2}}{2} n - \frac{3}{4} 
\sqrt{2} - \frac{1}{2}$, then 
\begin{equation*}
g_1 - g_7 - 10 g_{22} + 10 g_{42} + g_{57} - g_{63} < 0.
\end{equation*}
\end{lemma}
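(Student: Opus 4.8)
The plan is to compute the six quantities $g_1, g_7, g_{22}, g_{42}, g_{57}, g_{63}$ explicitly as functions of $n$ and $e$, at least for $e$ large enough compared to $n$, and then verify that the indicated alternating sum is negative. First I would set up coordinates so that the Lee sphere $S = S(n,e)$ is centered at the origin. A $6$-dimensional sector $T$ is determined by a choice of $6$ coordinate directions $\{i_1,\dots,i_6\}$ together with a base point $\mathbf{a}\in\mathbb{Z}^n$; the sector is $\{\mathbf{a}+\sum_{j}\varepsilon_j \mathbf{e}_{i_j} : \varepsilon_j\in\{0,1\}\}$. The key observation, already noted in the excerpt, is that whenever $T$ meets $S$ nontrivially and is not contained in $S$, the value $\lvert S\cap T\rvert$ equals $\sum_{i=0}^{t}\binom{6}{i}$ for some $0\le t\le 5$; this is because, after fixing the contribution $d_0=\sum_{i\notin\{i_1,\dots,i_6\}}\lvert a_i\rvert$ of the "frozen" coordinates and sorting the six active coordinates by which of $\lvert a_{i_j}\rvert$ vs.\ $\lvert a_{i_j}+1\rvert$ is smaller, the points of $T$ fall into the sphere in a nested (Hamming-ball) fashion. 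So the combinatorics reduces to counting, for each $t$, the number of sectors whose "profile" of marginal costs has exactly $t$ cheap directions available before the budget $e$ is exhausted.

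Next I would carry out this count. For large $e$ the dominant contributions come from sectors $T$ that straddle the boundary $B(n,e)$ of the sphere, and the relevant region of the boundary is essentially the part lying near one of the $2^n$ "orthant diagonals" where the cross-polytope facets live — this is exactly why the hypothesis $e\ge \frac{\sqrt2}{2}n - \frac34\sqrt2 - \frac12$ enters. I would express each $g_i$ as (number of coordinate $6$-subsets) $\times$ (a lattice-point count over base points), namely $g_i = \binom{n}{6}\cdot N_i(n,e)$ for appropriate $N_i$, where $N_i$ counts the base points $\mathbf{a}$ (up to the symmetries of the sphere) producing type $i$. Since $S(n,e)$ is a highly symmetric polytope, these counts are polynomials in $e$ of degree $n-1$ (Ehrhart-type growth on the boundary), with leading coefficients governed by the $(n-1)$-volumes of the relevant facets of the scaled cross-polytope. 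The crucial point is that the alternating combination $g_1 - g_7 - 10g_{22} + 10g_{42} + g_{57} - g_{63}$ is designed so that the top-order terms (and, one hopes, all terms down to the order where the sign stabilizes) cancel or combine to give a strictly negative quantity once $e$ is past the stated threshold; I would track the expansion in powers of $e$ and show the leading surviving term is negative.

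The main obstacle will be the explicit combinatorial bookkeeping of the $N_i$: one must correctly enumerate, for each of the six types, how the budget $e - d_0$ is consumed by the six active marginal costs, keeping careful track of ties (coordinates where $\lvert a_j\rvert = \lvert a_j+1\rvert$ forces a sector into the "all or nothing" situation, which is the case we discard) and of the overcounting introduced by the coordinate permutation and sign-flip symmetries of the sphere. A clean way to organize this is to fix the budget remainder and the multiset of active coordinates' distances from the center, sum over the finitely many admissible profiles, and only at the end multiply back the symmetry factor $\binom{n}{6}2^{6}\cdot(\text{something})$; the factor-of-$64$ discrepancy flagged in the footnote is a symptom of exactly this normalization choice, so I would be explicit about which convention is in force. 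Once the six polynomials are in hand the inequality itself is a finite, if tedious, verification — effectively the same computation Post performed, reorganized so that the reader can see the threshold on $e$ emerge naturally as the point beyond which the near-diagonal boundary geometry dominates.
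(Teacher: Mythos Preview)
The paper does not give its own proof of this lemma; it is quoted directly from Post's paper with a page reference, and the authors use it as a black box together with Lemma~\ref{lem:Post1} to derive Theorems~\ref{thm:Post} and \ref{thm:Postextend}. So there is nothing in the present paper to compare your sketch against beyond the citation.

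Your outline is, in its essentials, the computation Post actually carries out: parametrize a sector by its six active directions and base point, observe that the intersection size with $S(n,e)$ is determined solely by the Lee distance of the sector's nearest corner to the origin (this is your ``nested Hamming-ball'' remark), reduce each $g_i$ to a lattice-point count on a shell of the sphere, and then check the sign of the specific linear combination. Two small corrections to your plan. First, there are no ties: for integer $a_j$ one always has $\lvert a_j\rvert \neq \lvert a_j+1\rvert$, so each active coordinate contributes exactly $\pm 1$ to the distance and there is no ``all or nothing'' case to discard. Second, your geometric explanation of the threshold is off. The bound $e \ge \tfrac{\sqrt{2}}{2}n - \tfrac{3}{4}\sqrt{2} - \tfrac{1}{2}$ does not arise because sectors near an ``orthant diagonal'' dominate; every boundary sector contributes, and the threshold is simply where the explicit polynomial $g_1 - g_7 - 10g_{22} + 10g_{42} + g_{57} - g_{63}$ (a polynomial in $e$ for each fixed $n$, obtained from the exact shell counts) changes sign. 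An asymptotic ``leading surviving term'' argument would only give the inequality for $e$ large, whereas what is needed---and what Post's explicit formulae deliver---is negativity for every $e$ at or above the stated linear threshold in $n$. So the computation must be carried through exactly, not asymptotically.
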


Suppose now that $C$ is a $PL(n,e,q)$-code, for $q\geq 2e+1$ and $n\geq 6$, $%
e\geq \frac{\sqrt{2}}{2}n-\frac{3}{4}\sqrt{2}-\frac{1}{2}$. Let $t_{i}$ be
the total number of type~$i$ pairs $(S,T)$ in $\mathbb{Z}_{q}^{n}$, which is
clearly finite. Then we count $t_{i}$ in two ways as 
\begin{equation*}
t_{i}=\sum_{6\text{-dim.\ sect.\ }T}^{{}}t_{T,i}=\lvert C\rvert \cdot g_{i}.
\end{equation*}%
Thus from Lemmas~\ref{lem:Post1} and \ref{lem:Post2} it follows that 
\begin{align*}
 t_{1}&-t_{7}-10t_{22}+10t_{42}+t_{57}-t_{63} \\
&
=\sum_{T}^{{}}(t_{T,1}-t_{T,7}-10t_{T,22}+10t_{T,42}+t_{T,57}-t_{T,63})\geq
0, \\
t_{1}&-t_{7}-10t_{22}+10t_{42}+t_{57}-t_{63} \\
& =\lvert C\rvert (g_{1}-g_{7}-10g_{22}+10g_{42}+g_{57}-g_{63})<0,
\end{align*}%
This is clearly a contradiction, and hence Theorem~\ref{thm:Post} is proved.

It is worth noting that if we can somehow replace the number $t_i$ of type $%
i $ pairs by a notion of density, we would be able to obtain the same
theorem even if we are in $\mathbb{Z}^n$ instead of $\mathbb{Z}_q^n$, i.e.,
if we drop the periodicity condition.

\begin{theorem}
\label{thm:Postextend} For any $n, e$ satisfying $n \ge 6$ and $e \ge \frac{%
\sqrt{2}}{2} n - \frac{3}{4} \sqrt{2} - \frac{1}{2}$, there is no $PL(n,e)$%
-code.
\end{theorem}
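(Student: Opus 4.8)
The plan is to replace the finite counting argument behind Theorem~\ref{thm:Post} by a density argument, exactly as the remark following that theorem suggests. Suppose for contradiction that $C$ is a $PL(n,e)$-code in $\mathbb{Z}^n$ with $n\geq 6$ and $e\geq \frac{\sqrt{2}}{2}n-\frac{3}{4}\sqrt{2}-\frac{1}{2}$, so that the translates $\{S(n,e)+\mathbf{c}:\mathbf{c}\in C\}$ tile $\mathbb{Z}^n$. The key point is that Lemmas~\ref{lem:Post1} and \ref{lem:Post2} are both \emph{local}: Lemma~\ref{lem:Post1} is an inequality attached to each individual $6$-dimensional sector $T$, and Lemma~\ref{lem:Post2} is a finite computation depending only on the shape of a single Lee sphere. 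Neither uses periodicity; periodicity was invoked in the original proof only to make the global totals $t_i$ finite. So I would work with densities taken over a large box instead.

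Concretely, for a large integer $N$ let $B_N=\{-N,\ldots,N\}^n$ and, for each $i\in\{1,7,22,42,57,63\}$, let $t_i^{(N)}$ be the number of type~$i$ pairs $(S,T)$ where $S$ is a translate of $S(n,e)$ used by the tiling and $T$ is a $6$-dimensional sector entirely contained in $B_N$. Summing the inequality of Lemma~\ref{lem:Post1} over all such sectors $T$ gives
\begin{equation*}
t_1^{(N)}-t_7^{(N)}-10t_{22}^{(N)}+10t_{42}^{(N)}+t_{57}^{(N)}-t_{63}^{(N)}\geq 0
\end{equation*}
for every $N$. On the other hand I want to count these pairs by grouping them according to the code point $\mathbf{c}$. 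Each code point $\mathbf{c}$ with $S(n,e)+\mathbf{c}$ sufficiently deep inside $B_N$ contributes exactly $g_i$ to $t_i^{(N)}$; call the number of such ``interior'' code points $M_N$, and note $M_N\to\infty$ and in fact $M_N$ grows like $\mathrm{vol}(B_N)/|S(n,e)|$. The code points near the boundary of $B_N$, together with sectors near the boundary, contribute only an $O(N^{n-1})$ error, since both the number of boundary-adjacent code points and the number of boundary-adjacent sectors are $O(N^{n-1})$ while each pair is counted with a bounded multiplicity (at most $2^6$). Hence
\begin{equation*}
t_1^{(N)}-t_7^{(N)}-10t_{22}^{(N)}+10t_{42}^{(N)}+t_{57}^{(N)}-t_{63}^{(N)}
= M_N\,(g_1-g_7-10g_{22}+10g_{42}+g_{57}-g_{63}) + O(N^{n-1}).
\end{equation*}
By Lemma~\ref{lem:Post2} the bracketed constant is a strictly negative number independent of $N$, so the right-hand side is $-cM_N+O(N^{n-1})$ with $c>0$, and since $M_N$ grows like $N^n$ this is eventually $<0$, contradicting the lower bound above. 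Taking $N$ large enough finishes the proof.

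The main obstacle is purely bookkeeping: one has to make precise what ``interior'' and ``boundary'' mean so that the $O(N^{n-1})$ error is genuinely of lower order than $M_N$. The cleanest way is to fix a radius $\rho$ (depending only on $e$ and on the diameter of a $6$-dimensional sector) such that any code point $\mathbf{c}$ with $\|\mathbf{c}\|_\infty\leq N-\rho$ has the property that \emph{every} $6$-dimensional sector meeting $S(n,e)+\mathbf{c}$ lies inside $B_N$; such code points contribute exactly $g_i$ each, and there are $\geq c_1 N^n$ of them, while all remaining contributions to $t_i^{(N)}$ come from sectors within distance $\rho$ of $\partial B_N$ or from code spheres within distance $\rho$ of $\partial B_N$, a set of size $O(N^{n-1})$, each contributing at most $\binom{6}{i}$ or so. Once this is set up the inequality chain goes through verbatim and the periodicity hypothesis has been eliminated. (Exactly the same reasoning, applied to Lepist\"o's refinement in place of Post's, yields Theorem~\ref{thm:Lepextend}.)
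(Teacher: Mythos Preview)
Your proposal is correct and follows essentially the same approach as the paper: both replace the finite double count in $\mathbb{Z}_q^n$ by a large-box density argument in $\mathbb{Z}^n$, applying Lemma~\ref{lem:Post1} sector-by-sector to get the nonnegative side and Lemma~\ref{lem:Post2} sphere-by-sphere (over codewords in the box) to get the strictly negative leading term, with boundary effects controlled as $O(N^{n-1})$. The paper's bookkeeping differs only cosmetically---it defines two separate counts $t_i$ (over spheres with center in $B_n(N)$) and $t_i'$ (over sectors contained in $B_n(N-e)$) and bounds $|t_i-t_i'|=O(N^{n-1})$ directly, rather than splitting into interior and boundary codewords as you do---but the substance is identical.
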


\begin{proof}
Let $C$ be a $PL(n,e)$-code in $\mathbb{Z}^n$. Denote by $B_n(N)$ the $n$%
-dimensional box $[-N, N]^n$. Fix $n$ and $e$, and let $N$ be an integer
variable that is sufficiently large. Let $t_i$ be the number of type $i$
pairs $(S, T)$ where $S$ is a Lee sphere centered at a codeword in $B_n(N)$.
Counting with respect to $S$, we immediately have 
\begin{equation*}
t_i = \lvert C \cap B_n(N) \rvert \cdot g_i.
\end{equation*}
On the other hand, if we let $t_i^\prime$ be the number of type $i$ pairs $%
(S, T)$ where $T$ is a sector contained in $B_n(N-e)$, then 
\begin{equation*}
t_i^\prime = \sum_{T \subseteq B_n(N-e)}^{} t_{T,i}.
\end{equation*}

If $(S,T)$ is a pair of type $i\geq 1$ and $T\subseteq B_{n}(N-e)$, then the
center of $S$ is in $B_{n}(N)$. Thus $t_{i}\geq t_{i}^{\prime }$ and their
difference is at most the number of pairs $(S,T)$ with $T\cap
(B_{n}(N+e)\setminus B_{n}(N-e))\neq \emptyset $. Hence 
\begin{align}
0\leq t_{i}-t_{i}^{\prime }& \leq 2^{6}\cdot \binom{n}{6}\cdot \lvert
B_{n}(N+e+6)\setminus B_{n}(N-e-6)\rvert  \notag \\
& =O(N^{n-1}).  \label{eq:Post11}
\end{align}%
Here, $\binom{n}{6}$ represents the different possible orientations of the
sectors, and we include $2^{6}$ because each $6$-dimensional sector has
non-empty intersection with at most $2^{6}$ spheres.

From Lemma~\ref{lem:Post1}, we have 
\begin{equation}
t_{1}^{\prime }-t_{7}^{\prime }-10t_{22}^{\prime }+10t_{42}^{\prime
}+t_{57}^{\prime }-t_{63}^{\prime }\geq 0.  \label{eq:Post12}
\end{equation}%
From Lemma~\ref{lem:Post2} and the fact that a negative integer is at
most $-1$, we have 
\begin{align}
& t_{1}-t_{7}-10t_{22}+10t_{42}+t_{57}-t_{63}\leq -\lvert C\cap
B_{n}(N)\rvert  \notag \\
& =-\frac{\lvert B_{n}(N)\rvert }{\lvert S(n,e)\rvert }+O(N^{n-1})=-\frac{%
2^{n}}{\lvert S(n,e)\rvert }N^{n}+O(N^{n-1})  \label{eq:Post13}
\end{align}%
by taking the sum over all codewords in $B_{n}(N)$, because
\[
B_{n}(N-e) \subset \bigcup_{\mathbf{a} \in C \cap B_n(N)} (\mathbf{a} + S(n,e)) \subset B_{n}(N+e).
\]
Equations~%
\eqref{eq:Post11}, \eqref{eq:Post12}, \eqref{eq:Post13} contradict each
other as $N$ tends to infinity.
\end{proof}

\begin{remark}
It might be possible to improve the constant $\sqrt{2}/2$ by counting
higher-dimensional sectors, or shapes other than $2\times \cdots \times
2\times 1\times \cdots \times 1$ boxes. However, it would require much more
computing, and it seems unlikely that it gives a bound that is better than
linear.
\end{remark}

\subsection{Lepist\"{o}'s Bound}

Lepist\"{o} \cite{Le} proved a bound much stronger than Post's (see Theorem %
\ref{thm:Lep}) by modifying an argument of Astola \cite{A2}.

\setcounter{tempthm}{\arabic{theorem}} \setcounter{theorem}{5}

\begin{theorem}[\protect\cite{Le}]
For any $n,e,q$ satisfying $n<(e+2)^{2}/2.1,e\geq 285$, and $q\geq 2e+1$,
there is no $PL(n,e,q)$-code.
\end{theorem}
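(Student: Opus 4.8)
The natural plan is to run the double-counting argument behind Post's Theorem~\ref{thm:Post}, but with $k$-dimensional sectors whose dimension $k=k(e)$ grows with $e$ rather than being frozen at $k=6$. This change is unavoidable: as the remark after Theorem~\ref{thm:Postextend} notes, a fixed sector dimension can only produce a bound on $e_n$ that is linear in $n$, whereas we now want $e$ of order $\sqrt{n}$. Since $q \ge 2e+1$, the Lee sphere $S(n,e,q)$ is a faithful copy of $S(n,e)$ and every counting quantity below is finite, so the box-truncation used in the proof of Theorem~\ref{thm:Postextend} is not needed here and the contradiction can be extracted directly.

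First I would fix $k$ of order $e$ and, assuming for contradiction that $C$ is a $PL(n,e,q)$-code tiling $\mathbb{Z}_q^n$ by translates of $S(n,e,q)$, record that any Lee sphere meeting a $k$-dimensional sector $T$ meets it in $a_t := \sum_{i=0}^{t}\binom{k}{i}$ points for some $0 \le t \le k$ --- the intersection is the cube $\{0,1\}^k$ cut by a coordinate-sum threshold --- so the spheres covering $T$ split $|T| = 2^k$ into parts drawn from $\{a_0,\dots,a_k\}$. The combinatorial heart, replacing Lemma~\ref{lem:Post1}, is to choose real weights $\lambda_0,\dots,\lambda_k$ with $\sum_t \lambda_t m_t \ge 0$ for the type-multiplicity vector $(m_t)$ of every such split that can actually occur around a sector of a genuine tiling; and, replacing Lemma~\ref{lem:Post2}, to compute the per-sphere counts $g_t$ --- the number of $k$-dimensional sectors $T$ with $\lvert S(n,e) \cap T\rvert = a_t$ --- explicitly and show that $\sum_t \lambda_t g_t < 0$ throughout the claimed range $n < (e+2)^2/2.1$, $e \ge 285$. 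Granting both inequalities, the contradiction is exactly Post's: for the tiling one has $\sum_t \lambda_t \bigl(\sum_T m_t(T)\bigr) \ge 0$ by summing the local inequalities over sectors, while the same quantity equals $\lvert C\rvert\sum_t \lambda_t g_t < 0$ by counting type-$t$ pairs one sphere at a time.

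I expect essentially all the difficulty to lie in producing one weight vector $\lambda$ that is at the same time \emph{locally valid} (giving $\sum_t \lambda_t m_t \ge 0$ for every admissible partition --- a finite but $k$-dependent case analysis whose delicate point is correctly weighting ``mixed'' partitions that combine one nearly-full piece with several small ones) and \emph{globally effective} (giving $\sum_t \lambda_t g_t < 0$ already when $e$ is about $\sqrt{2.1\,n}$ or larger). For the global step the key asymptotic input is that for large $e$ the boundary $B(n,e)$ concentrates near the ``equator,'' where $S(n,e)$ is essentially the cross-polytope of radius $e$ (the same phenomenon that drives Theorem~\ref{thm:tempGW}); consequently the sign of the dominant term of $\sum_t \lambda_t g_t$ is governed by whether $e^2/n$ exceeds a certain threshold, and optimizing the free parameters $k$ and $\lambda$ against this dominant term is what pins that threshold down to $2.1$ and requires $e \ge 285$ for the discarded lower-order terms to remain dominated. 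This optimization --- carried out by Astola \cite{A2} in an Elias-type formulation and sharpened by Lepist\"{o} \cite{Le} --- is the crux; once it is in place, the remaining steps are bookkeeping, with the finiteness of $\mathbb{Z}_q^n$ making the double count entirely routine.
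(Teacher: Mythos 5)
Your plan is not the argument behind this theorem, and its central step is left unestablished in a way that cannot be waved through. Lepist\"{o}'s proof (outlined in the paper around Lemmas~\ref{lem:Lep2b} and \ref{lem:Lep1b}) is an Elias-type bound, not a sector count: one restricts attention to the shell $\Lambda(e,s)$ of words at Lee distance exactly $e+2$ from the origin inside a box of side $2s$; an averaging argument produces a translate $\mathbf{a}+\Lambda(e,s)$ containing at least $\lvert\Lambda(e,s)\rvert/(\lvert S(n,e,q)\rvert-\lvert S(n,e-2,q)\rvert)$ codewords; and a geometric lemma bounds from above the minimum distance among that many points of $\Lambda(e,s)$ by $\frac{e+2}{\lvert C\rvert-1}\bigl(\lvert C\rvert(2-\frac{e+2}{n})+4s-6\bigr)$, which must nevertheless be at least $2e+2$. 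Estimating the two sphere volumes and optimizing $s$ is what produces the threshold $n<(e+2)^2/2.1$ and the requirement $e\ge 285$. No sectors, weights, or partition case analyses appear; your attribution of the ``optimization'' to Astola and Lepist\"{o} inside a sector-counting framework is a misreading of what those papers do.

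The gap in your own route is the existence of the weight vector $\lambda_0,\dots,\lambda_k$ for sectors of dimension $k=k(e)$ growing with $e$. You correctly identify this as where ``essentially all the difficulty'' lies, but you give no reason to believe such weights exist, and the paper explicitly argues the opposite: the number of admissible partitions of $2^k$ into parts $a_t=\sum_{i\le t}\binom{k}{i}$ that can occur around a sector of a genuine tiling grows very rapidly with $k$, so the local validity constraints proliferate, and the authors judge it ``unlikely that it gives a bound that is better than linear'' in $n$ --- whereas the theorem needs $e$ of order $\sqrt{n}$. Post's $k=6$ computation succeeds precisely because the case analysis is finite and small; nothing in your proposal shows that the locally-valid cone of weights still contains a globally-effective one once $k\sim e$, and that single unproved assertion is the entire theorem. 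As written, the proposal is a research programme (one the authors of the paper considered and set aside), not a proof.
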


\setcounter{theorem}{\arabic{tempthm}}

The proof is much more complicated, and thus we only outline the main idea
of the proof. Lepist\"{o} considers the set 
\begin{align*}
\Lambda (e,s)=\{\mathbf{x}\in \mathbb{Z}_{q}^{n}(\text{or }\mathbb{Z}^{n})&
:\delta _{L}(\mathbf{x},\mathbf{0})=e+2, \\
& \;-s<x_{i}\leq s\text{ for all }i\}.
\end{align*}

\begin{lemma}[{\protect\cite[Lemmas~2b, 10]{Le}}]
\label{lem:Lep2b} If $C \subseteq \Lambda(e,s)$ is an $e$-error-correcting
Lee code with at least two elements, where $q \ge 2s$ and $s \ge 2$, then
there exist two codewords with Lee distance at most 
\begin{equation*}
\frac{e+2}{\lvert C \rvert - 1} \Bigl( \lvert C \rvert \Bigl( 2 - \frac{e+2}{%
n} \Bigr) + 4s - 6 \Bigr).
\end{equation*}
In particular, this quantity is at least $2e+2$, since any two elements in $%
\Lambda(e,s)$ have even distance.
\end{lemma}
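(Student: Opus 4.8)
The plan is a counting-and-averaging argument. Since $\lvert C\rvert = m \ge 2$, the smallest Lee distance occurring between two codewords is at most the average of $\delta_L$ over all pairs, so it suffices to bound $\frac{1}{m(m-1)}\sum_{j\neq k}\delta_L(\mathbf{u}^{(j)},\mathbf{u}^{(k)})$ from above. Represent each codeword by the residue vector with coordinates in $\{-s+1,\ldots,s\}$; because this window has length $2s-1<q$ we have $\delta_L(\mathbf{u},\mathbf{v})\le\sum_i\lvert u_i-v_i\rvert$, and because every codeword satisfies $\sum_i\lvert x_i\rvert=e+2$ this equals $2(e+2)-2\langle\mathbf{u},\mathbf{v}\rangle$, where $\langle\mathbf{u},\mathbf{v}\rangle:=\sum_i\bigl(\min(u_i^{+},v_i^{+})+\min(u_i^{-},v_i^{-})\bigr)$ is the $\ell^1$ ``overlap'' ($x^{\pm}$ the positive and negative parts). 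So the goal becomes to produce a pair of codewords whose overlap is large, namely at least $(e+2)$ minus half the asserted bound.

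Next I would set up a double count. Encode each codeword $\mathbf{x}$ by its staircase $A_{\mathbf{x}}=\{(i,t):1\le\lvert t\rvert\le\lvert x_i\rvert,\ tx_i>0\}$, a subset of the ground set $\Omega$ of $n(2s-1)$ cells; then $\lvert A_{\mathbf{x}}\rvert=e+2$ and $\langle\mathbf{u},\mathbf{v}\rangle=\lvert A_{\mathbf{u}}\cap A_{\mathbf{v}}\rvert$. Counting incidences, $\sum_{j,k}\lvert A_j\cap A_k\rvert=\sum_{\omega\in\Omega}d(\omega)^2$, where $d(\omega)$ is the number of codewords containing $\omega$ and $\sum_{\omega}d(\omega)=m(e+2)$; hence the largest overlap over unordered pairs is at least $\frac{1}{m(m-1)}\bigl(\sum_{\omega}d(\omega)^2-m(e+2)\bigr)$, and the whole lemma reduces to a lower bound for $\sum_{\omega}d(\omega)^2$.

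The crude bound is Cauchy--Schwarz, $\sum_{\omega}d(\omega)^2\ge m^2(e+2)^2/\bigl(n(2s-1)\bigr)$, and chasing constants shows this already yields the stated estimate whenever $m(e+2)\le 2n(2s-1)$ (a range that includes the case where the lemma follows from the triangle inequality alone). For larger $m$ this crude bound is off by a factor of order $s$, and one must exploit two features of the sets $A_j$: in each coordinate the degree sequence $d(i,1)\ge d(i,2)\ge\cdots$ is monotone, and the total multiplicity of the two cells of a coordinate nearest the origin is at most $m$. Since a codeword of $\ell^1$-norm $e+2$ having an entry close to $\pm s$ is forced to have very few nonzero coordinates, a heavily loaded coordinate is compelled to carry large multiplicities, which inflates its contribution to $\sum_{\omega}d(\omega)^2$; weighing this against Cauchy--Schwarz applied to the lightly loaded coordinates, and tracking the Hamming weights $w_j\le e+2$ of the codewords, is what produces the leading term $m\bigl(2-\frac{e+2}{n}\bigr)$ with only the additive loss $4s-6$. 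This regime of many, hence necessarily overlapping, codewords is where the real difficulty sits --- it is exactly Lepist\"{o}'s Lemmas~2b and~10, built on Astola's method --- and the main obstacle is to get the sharp constants: the plain convexity estimate is provably not good enough, so the argument has to split according to whether the code is ``spread out'' or ``concentrated'' and treat the two cases separately.

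Finally the ``in particular'' clause is immediate: the two codewords produced are distinct elements of $\Lambda(e,s)$, so their Lee distance is even (every point of $\Lambda(e,s)$ has coordinate sum $\equiv e+2\pmod 2$) and, being codewords of an $e$-error-correcting code, at least $2e+1$; hence it is at least $2e+2$, and so is the displayed bound, which dominates it.
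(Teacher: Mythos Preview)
The paper does not give its own proof of this lemma; it is quoted verbatim from Lepist\"{o}~\cite{Le} and used as a black box. So there is no ``paper's proof'' to compare against beyond the bare citation.

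Your framework is the natural one and matches the standard approach: encode each codeword by its staircase set so that the Lee distance becomes $2(e+2)$ minus twice the overlap, double-count incidences to express the total overlap as $\sum_\omega d(\omega)^2$, and then seek a lower bound on that sum. Your arithmetic is correct: the target inequality is equivalent to
\[
\sum_\omega d(\omega)^2 \;\ge\; \frac{m^2(e+2)^2}{2n} - \frac{m(e+2)(4s-6)}{2},
\]
and Cauchy--Schwarz over the ground set of size $n(2s-1)$ does cover the range $m(e+2)\le 2n(2s-1)$.

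The gap is that you stop there. You explicitly acknowledge that for larger $m$ ``the plain convexity estimate is provably not good enough'' and then describe, in qualitative terms, the ingredients Lepist\"{o} uses (monotonicity of the per-coordinate degrees, the constraint that heavy coordinates force small Hamming weight, a case split between spread-out and concentrated codes) without actually carrying any of this out. That regime is the entire content of the lemma: the easy range you do handle is essentially the triangle-inequality bound $\delta_L\le 2(e+2)$, which gives nothing useful for Lepist\"{o}'s application. As written, your proposal is a correct reduction plus an outline of what one would have to prove, not a proof. To complete it you would need to execute the refined lower bound on $\sum_\omega d(\omega)^2$ with the stated constants, which is precisely the work done in \cite[Lemmas~2b,~10]{Le}.

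A minor point: your parity argument for the ``in particular'' clause assumes the Lee distance between two points of $\Lambda(e,s)$ coincides with their $\ell^1$ distance. That is automatic once $q\ge 4s-2$, but for $q$ near $2s$ the wrap-around can change the parity of an individual coordinate contribution, so a word of care is needed there.
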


On the other hand, a standard averaging argument shows the existence of a
translate of $\Lambda (e,s)$ with many codewords.

\begin{lemma}[{\protect\cite[Lemma~1b]{Le}}]
\label{lem:Lep1b} Let $C$ be a $PL(n,e,q)$-code, where $q \ge 2s$ and $e \ge
2$. Then there exists an $\mathbf{a} \in \mathbb{Z}_q^n$ such that 
\begin{equation*}
\lvert (\mathbf{a} + \Lambda(e, s)) \cap C \rvert \ge \frac{\lvert
\Lambda(e,s) \rvert}{\lvert S(n,e,q) \rvert - \lvert S(n,e-2,q) \rvert}.
\end{equation*}
\end{lemma}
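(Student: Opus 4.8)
The plan is a sharpened averaging argument over all translates of $\Lambda(e,s)$. The naive averaging over $\mathbf{a}\in\mathbb{Z}_q^n$ would only produce the bound $\lvert\Lambda(e,s)\rvert/\lvert S(n,e,q)\rvert$; to get the stated (stronger) bound we must observe that only a small fraction of translates can possibly meet $C$ at all, and then distribute the total codeword count over those.

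First I would record the elementary double counting identity
\[
\sum_{\mathbf{a}\in\mathbb{Z}_q^n}\lvert(\mathbf{a}+\Lambda(e,s))\cap C\rvert=\lvert C\rvert\cdot\lvert\Lambda(e,s)\rvert,
\]
obtained by swapping the order of summation: a fixed codeword $\mathbf{c}$ lies in $\mathbf{a}+\Lambda(e,s)$ precisely for the $\lvert\Lambda(e,s)\rvert$ values $\mathbf{a}\in\mathbf{c}-\Lambda(e,s)$. The hypothesis $q\ge 2s$ is what guarantees that $\Lambda(e,s)$ is a genuine subset of $\mathbb{Z}_q^n$ of the expected cardinality (distinct points of the box $(-s,s]^n$ stay distinct mod $q$) and that $\delta_L$ restricted to it agrees with the $\ell^1$-norm.

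The decisive step is to bound the number of translates $\mathbf{a}$ with $(\mathbf{a}+\Lambda(e,s))\cap C\neq\emptyset$. Suppose $\mathbf{c}\in(\mathbf{a}+\Lambda(e,s))\cap C$, so $\delta_L(\mathbf{a},\mathbf{c})=e+2$, and let $\mathbf{c}'$ be the unique codeword with $\delta_L(\mathbf{a},\mathbf{c}')\le e$ (perfectness). I claim $\delta_L(\mathbf{a},\mathbf{c}')\ge e-1$. Indeed $\mathbf{c}'\neq\mathbf{c}$ because $\delta_L(\mathbf{a},\mathbf{c}')\le e<e+2=\delta_L(\mathbf{a},\mathbf{c})$; and if $\delta_L(\mathbf{a},\mathbf{c}')\le e-2$ held, the triangle inequality in $\mathbb{Z}_q^n$ would give $\delta_L(\mathbf{c},\mathbf{c}')\le(e+2)+(e-2)=2e$, contradicting the $e$-error-correcting property $\delta_L(\mathbf{c},\mathbf{c}')\ge 2e+1$. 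Hence every translate $\mathbf{a}$ that meets $C$ lies in the shell $\mathbf{c}'+\{\mathbf{x}:e-1\le\delta_L(\mathbf{x},\mathbf{0})\le e\}$ around its covering codeword $\mathbf{c}'$ (this uses $e\ge 2$ so that $S(n,e-2,q)$ is defined); since the balls $\mathbf{c}'+S(n,e,q)$ are pairwise disjoint, so are these shells, and each has $\lvert S(n,e,q)\rvert-\lvert S(n,e-2,q)\rvert$ points. Thus the number of translates meeting $C$ is at most $\lvert C\rvert\bigl(\lvert S(n,e,q)\rvert-\lvert S(n,e-2,q)\rvert\bigr)$.

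Combining the two steps, the total $\lvert C\rvert\cdot\lvert\Lambda(e,s)\rvert$ is supported on at most $\lvert C\rvert\bigl(\lvert S(n,e,q)\rvert-\lvert S(n,e-2,q)\rvert\bigr)$ nonzero summands, so by pigeonhole at least one summand — necessarily realized at some $\mathbf{a}$ with nonempty intersection — is at least the ratio, which is exactly the claimed bound (the case $\lvert\Lambda(e,s)\rvert=0$ being vacuous). The one place requiring genuine care is the inequality $\delta_L(\mathbf{a},\mathbf{c}')\ge e-1$ together with the disjointness bookkeeping of the shells; everything else is routine.
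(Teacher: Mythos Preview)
Your proof is correct and follows exactly the standard averaging argument the paper attributes to Lepist\"{o}: the paper does not spell out a proof of this lemma (it is simply cited), but your key observation---that a translate meeting $C$ must lie in the shell $S(n,e,q)\setminus S(n,e-2,q)$ around its covering codeword---is precisely the idea the paper makes explicit when proving the $\mathbb{Z}^n$ analogue (Lemma~\ref{lem:Lep1bp}), where it is phrased as ``if $\mathbf{a}\in C+S(n,e-2)$ then $\mathbf{a}+\Lambda(e,s)$ and $C$ are disjoint.'' The double count and pigeonhole are identical in spirit.
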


If $C$ is an $e$-error-correcting Lee code, then $(\mathbf{x} +
\Lambda(e,s)) \cap C$ is an $e$-error-correcting Lee code contained in $%
\mathbf{x} + \Lambda(e,s)$. Lepist\"{o} then uses Lemma~\ref{lem:Lep2b} to
arrive at a contradiction in the case $n < (e+2)^2 / 2.1$ and $e \ge 285$.
After Lemmas~\ref{lem:Lep2b} and \ref{lem:Lep1b}, the proof is purely about
estimating $\lvert \Lambda(e,s) \rvert$ and $\lvert S(n,e,q) \rvert$.

As in the case of Post's result, we can easily extend this to the case in $%
\mathbb{Z}^n$ by replacing counting the number by computing a density.

\begin{lemma}
\label{lem:Lep1bp} Let $C$ be a $PL(n,e)$-code, where $e \ge 2$. Then there
exists an $\mathbf{a} \in \mathbb{Z}^n$ such that 
\begin{equation*}
\lvert (\mathbf{a} + \Lambda(e,s)) \cap C \rvert \ge \frac{\lvert
\Lambda(e,s) \rvert}{\lvert S(n,e) \rvert - \lvert S(n,e-2) \rvert}.
\end{equation*}
\end{lemma}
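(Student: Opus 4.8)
The plan is to mimic the passage from Lemma~\ref{lem:Post1} (counting in $\mathbb{Z}_q^n$) to Theorem~\ref{thm:Postextend} (density in $\mathbb{Z}^n$), i.e.\ replace the averaging over a finite group by an averaging over a large box together with a boundary-error estimate. Concretely, fix $n,e,s$ with $e\geq 2$, let $C$ be a $PL(n,e)$-code in $\mathbb{Z}^n$, and for an integer $N$ (taken large) let $B=B_n(N)=[-N,N]^n$. For each $\mathbf{a}\in\mathbb{Z}^n$ set $f(\mathbf{a})=\lvert(\mathbf{a}+\Lambda(e,s))\cap C\rvert$. I would sum $f(\mathbf{a})$ over $\mathbf{a}\in B$ and compute this sum in the other order: each codeword $\mathbf{c}\in C$ contributes $1$ to $f(\mathbf{a})$ for exactly those $\mathbf{a}$ with $\mathbf{c}\in\mathbf{a}+\Lambda(e,s)$, i.e.\ for $\lvert\Lambda(e,s)\rvert$ values of $\mathbf{a}$, namely $\mathbf{a}\in\mathbf{c}-\Lambda(e,s)$. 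Hence
\begin{equation*}
\sum_{\mathbf{a}\in B}f(\mathbf{a})=\sum_{\mathbf{c}\in C}\lvert(\mathbf{c}-\Lambda(e,s))\cap B\rvert,
\end{equation*}
and since $\Lambda(e,s)\subseteq S(n,e+2)$ is contained in a box of bounded radius (independent of $N$), the inner count equals $\lvert\Lambda(e,s)\rvert$ for every $\mathbf{c}\in B_n(N-e-2)$ and is $\geq 0$ always; codewords outside $B_n(N+e+2)$ contribute nothing. Therefore $\sum_{\mathbf{a}\in B}f(\mathbf{a})\geq \lvert\Lambda(e,s)\rvert\cdot\lvert C\cap B_n(N-e-2)\rvert$.

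Next I would pin down $\lvert C\cap B_n(N-e-2)\rvert$ using perfectness: the translates $\mathbf{c}+S(n,e)$, $\mathbf{c}\in C$, tile $\mathbb{Z}^n$, so exactly as in \eqref{eq:Post13} one gets $\lvert C\cap B_n(M)\rvert=\lvert B_n(M)\rvert/\lvert S(n,e)\rvert+O(M^{n-1})$ for any $M$, and more importantly, by the same sandwich argument one can count lattice points of $\mathbb{Z}^n$ inside the union of translates: the number of $\mathbf{x}\in B_n(N)$ is $\sum_{\mathbf{c}\in C}\lvert(\mathbf{c}+S(n,e))\cap B_n(N)\rvert$ up to a boundary error $O(N^{n-1})$. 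Combining, $\lvert\Lambda(e,s)\rvert\cdot\lvert C\cap B_n(N-e-2)\rvert\geq \bigl(\lvert\Lambda(e,s)\rvert/\lvert S(n,e)\rvert\bigr)\,\lvert B_n(N)\rvert+O(N^{n-1})$. The averaging step is then: since there are $\lvert B\rvert=\lvert B_n(N)\rvert$ choices of $\mathbf{a}\in B$, there exists $\mathbf{a}\in B$ with $f(\mathbf{a})\geq \lvert\Lambda(e,s)\rvert/\lvert S(n,e)\rvert + O(1/N)\cdot(\text{something})$; letting $N\to\infty$ and observing that $f(\mathbf{a})$ is an integer bounded below by a quantity tending to $\lvert\Lambda(e,s)\rvert/\lvert S(n,e)\rvert$, we get $\mathbf{a}$ with $f(\mathbf{a})\geq \lceil\lvert\Lambda(e,s)\rvert/\lvert S(n,e)\rvert\rceil$.

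The remaining point is to match the denominator in the stated bound, which is $\lvert S(n,e)\rvert-\lvert S(n,e-2)\rvert$ rather than $\lvert S(n,e)\rvert$. The reason is that a codeword $\mathbf{c}$ can only land in $\mathbf{a}+\Lambda(e,s)$, a subset of the shell $\delta_L(\cdot,\mathbf{a})=e+2$, if its sphere $\mathbf{c}+S(n,e)$ meets that shell in a useful way; more precisely, when one does the averaging one should only count, for each $\mathbf{a}$, those codewords whose Lee sphere actually intersects the relevant region, and a standard parity/packing observation (the same one underlying Lepist\"o's \cite[Lemma~1b]{Le}) shows the effective volume per codeword contributing to translates of $\Lambda(e,s)$ is $\lvert S(n,e)\rvert-\lvert S(n,e-2)\rvert$, not $\lvert S(n,e)\rvert$. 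I would reproduce that refinement verbatim from Lepist\"o in the box setting: replace ``$\mathbb{Z}_q^n$'' by ``$B_n(N)$'' and ``number of elements'' by ``number of elements up to $O(N^{n-1})$'' throughout his proof of Lemma~1b; none of his steps use the group structure of $\mathbb{Z}_q^n$ beyond finiteness, which $B_n(N)$ also enjoys. The main obstacle is therefore purely bookkeeping: ensuring the boundary error terms are genuinely $O(N^{n-1})$ uniformly (they are, since $\Lambda(e,s)$ and $S(n,e)$ have radius bounded independently of $N$), and that the $O(1/N)$ loss in the averaging inequality is negligible against the integrality gap $\lceil\cdot\rceil$ — which it is, exactly as in the proof of Theorem~\ref{thm:Postextend}. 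No new idea is needed beyond the density-replaces-count principle already used there.
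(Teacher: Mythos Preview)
Your overall plan---double-count pairs $(\mathbf{a},\mathbf{c})$ with $\mathbf{c}\in(\mathbf{a}+\Lambda(e,s))\cap C$ over a large box, estimate both sides up to $O(N^{n-1})$, and let $N\to\infty$---is exactly the paper's approach. The gap is in your treatment of the denominator. Averaging over \emph{all} $\mathbf{a}\in B_n(N)$ gives only $\lvert\Lambda(e,s)\rvert/\lvert S(n,e)\rvert$, and your account of how to recover the sharper $\lvert S(n,e)\rvert-\lvert S(n,e-2)\rvert$ (``effective volume per codeword contributing to translates of $\Lambda(e,s)$,'' ``reproduce Lepist\"o verbatim'') does not name the actual mechanism and is somewhat misdirected. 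The point is not about volume per codeword; it is that $f(\mathbf{a})=0$ whenever $\mathbf{a}\in C+S(n,e-2)$. Indeed, if $\delta_L(\mathbf{a},\mathbf{c})\le e-2$ for some codeword $\mathbf{c}$ and $\mathbf{x}\in\mathbf{a}+\Lambda(e,s)$, then $\delta_L(\mathbf{x},\mathbf{c})\le 2e$, so $\mathbf{x}$ is not a codeword: it is not $\mathbf{c}$ since $\delta_L(\mathbf{x},\mathbf{c})\ge (e+2)-(e-2)=4$, and it cannot be any other codeword by minimum distance $2e+1$. Hence the sum $\sum_{\mathbf{a}\in B}f(\mathbf{a})$ is supported on $B\setminus(C+S(n,e-2))$, a set of size $\frac{\lvert S(n,e)\rvert-\lvert S(n,e-2)\rvert}{\lvert S(n,e)\rvert}\,2^nN^n+O(N^{n-1})$, and averaging over \emph{that} set yields the stated bound. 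This is precisely what the paper does in \eqref{eq:Lep12}; once you insert this observation, your argument goes through.

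A minor remark: the integrality/ceiling step at the end is unnecessary and slightly delicate, since your witnessing $\mathbf{a}$ depends on $N$. It is cleaner, as in the paper, to bound $\sum_{\mathbf{a}} f(\mathbf{a})\le\bigl(\max_{\mathbf{a}\in\mathbb{Z}^n}f(\mathbf{a})\bigr)\cdot\lvert B\setminus(C+S(n,e-2))\rvert$ from the outset; the global maximum exists because $f\le\lvert\Lambda(e,s)\rvert$, and the desired inequality for it drops out after dividing and sending $N\to\infty$.
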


\begin{proof}
We make a similar argument as in Theorem~\ref{thm:Postextend}. We count the
cardinality $t$ of the set 
\begin{equation*}
\{ (\mathbf{a}, \mathbf{x}) : \mathbf{x} - \mathbf{a} \in \Lambda(e,s), 
\mathbf{x} \in C \cap B_n(N) \}.
\end{equation*}
With respect to $\mathbf{x}$, we count 
\begin{equation}
t = \lvert \Lambda(e,s) \rvert \cdot \lvert C \cap B_n(N) \rvert = \frac{2^n
\lvert \Lambda(e,s) \rvert}{\lvert S(n,e) \rvert} N^n + O(N^{n-1}),
\label{eq:Lep11}
\end{equation}
as in Theorem~\ref{thm:Postextend}, because the size of $B_n(N)$ is about $%
2^n N^n$ and $C$ has density $1 / \lvert S(n,e) \rvert$.

On the other hand, we can count $t$ with respect to $\mathbf{a}$. Here, note
that if $\mathbf{a} \in C + S(n, e-2)$ or $\mathbf{a} \notin B_n(N+e+3)$
then $\mathbf{a} + \Lambda(e,s)$ and $C \cap B_n(N)$ are always disjoint.
Thus 
\begin{align}
t &= \sum_{\mathbf{a} \in B_n(N+e+3) \setminus (C + S(n,e-2))}^{} \lvert (%
\mathbf{a} + \Lambda(e,s)) \cap C \rvert  \notag \\
&\le \lvert B_n(N+e+3) \setminus (C + S(n,e-2)) \rvert  \notag \\
&\qquad \cdot \max_{\mathbf{a}} \lvert (\mathbf{a} + \Lambda(e,s)) \cap C
\rvert  \notag \\
&\le \bigl( \max_\mathbf{a} \lvert (\mathbf{a} + \Lambda(e,s)) \cap C \rvert %
\bigr)  \notag \\
&\qquad \Bigl( \frac{\lvert S(n,e) \rvert - \lvert S(n,e-2) \rvert}{\lvert
S(n,e) \rvert} 2^n N^n + O(N^{n-1}) \Bigr).  \label{eq:Lep12}
\end{align}
From Equations~\eqref{eq:Lep11} and \eqref{eq:Lep12} it immediately follows
that 
\begin{equation*}
\max_{\mathbf{a}} \lvert (\mathbf{a} + \Lambda(e,s)) \cap C \rvert \ge \frac{%
\lvert \Lambda(e,s) \rvert}{\lvert S(n,e) \rvert - \lvert S(n,e-2) \rvert}
\end{equation*}
after taking the limit $N \to \infty$.
\end{proof}

We may use Lemma~\ref{lem:Lep1bp} instead of Lemma~\ref{lem:Lep1b}. The
proof of Theorem~\ref{thm:Lep} uses more that just Lemmas~\ref{lem:Lep2b}
and \ref{lem:Lep1b}. However, other lemmas use essentially the same ideas,
and using the density trick, they can all be modified to take $PL(n,e)$%
-codes into account. Because the complicated structure of Lepist\"{o}'s
proof, it is nearly impossible to give a more detailed account of the
modification without unraveling technical details.

\begin{theorem}
\label{thm:Lepextend} For any $n, e$ satisfying $n < (e+2)^2 / 2.1$ and $e
\ge 285$, there is no $PL(n,e)$-code.
\end{theorem}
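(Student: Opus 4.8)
The plan is to mimic the density-trick proof of Theorem~\ref{thm:Postextend}, but now substituting Lemma~\ref{lem:Lep1bp} for Lemma~\ref{lem:Lep1b} throughout Lepist\"{o}'s argument. Concretely, suppose for contradiction that $C$ is a $PL(n,e)$-code with $n < (e+2)^2/2.1$ and $e \ge 285$. First I would choose the auxiliary parameter $s$ exactly as Lepist\"{o} does (so that $s \ge 2$, and the estimates on $\lvert \Lambda(e,s) \rvert$ and on $\lvert S(n,e)\rvert - \lvert S(n,e-2)\rvert$ go through); the inequality $n < (e+2)^2/2.1$ is precisely what makes a valid choice of $s$ available and forces the lower bound from Lemma~\ref{lem:Lep1bp} to exceed $2$, so that the translate $\mathbf{a} + \Lambda(e,s)$ contains at least two codewords of $C$.

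Next I would invoke Lemma~\ref{lem:Lep2b}: the set $(\mathbf{a} + \Lambda(e,s)) \cap C$ is an $e$-error-correcting Lee code with at least two elements sitting inside a translate of $\Lambda(e,s)$, so it contains two codewords at Lee distance at most $\frac{e+2}{\lvert C' \rvert - 1}\bigl(\lvert C' \rvert(2 - \frac{e+2}{n}) + 4s - 6\bigr)$, where $C' = (\mathbf{a} + \Lambda(e,s)) \cap C$. But two distinct codewords of a $PL(n,e)$-code must be at distance at least $2e+1$, and since elements of $\Lambda(e,s)$ all have the same parity of distance, in fact at least $2e+2$. Plugging the lower bound $\lvert C' \rvert \ge \lvert \Lambda(e,s)\rvert / (\lvert S(n,e)\rvert - \lvert S(n,e-2)\rvert)$ into the Lepist\"{o} bound and comparing with $2e+2$ yields a numerical inequality in $n, e, s$ that fails in the stated range — this is the purely computational heart of Lepist\"{o}'s original argument, which I would not reproduce in detail but simply cite, since nothing in it used the ambient group being $\mathbb{Z}_q^n$ rather than $\mathbb{Z}^n$.

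The one genuinely new ingredient is Lemma~\ref{lem:Lep1bp}, whose proof is already given in the excerpt: it replaces the counting/averaging in $\mathbb{Z}_q^n$ by the asymptotic density computation in the box $B_n(N)$, using that $C$ has density $1/\lvert S(n,e)\rvert$ and that the "bad" centers $\mathbf{a} \in C + S(n,e-2)$ occupy a fraction $(\lvert S(n,e)\rvert - \lvert S(n,e-2)\rvert)/\lvert S(n,e)\rvert$ of the box up to boundary terms $O(N^{n-1})$. So the main obstacle is really bookkeeping: one must check that \emph{every} auxiliary lemma Lepist\"{o} uses beyond Lemmas~\ref{lem:Lep2b} and \ref{lem:Lep1b} is either parity- or geometry-based (and hence insensitive to whether we work modulo $q$) or else is another averaging statement amenable to the same density trick. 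As the excerpt notes, each such lemma uses "essentially the same ideas," so the substitution is routine; the difficulty is purely that Lepist\"{o}'s proof has a complicated structure and one must verify this claim lemma by lemma. Having done so, the contradiction obtained in $\mathbb{Z}_q^n$ transfers verbatim to $\mathbb{Z}^n$, establishing that no $PL(n,e)$-code exists when $n < (e+2)^2/2.1$ and $e \ge 285$.
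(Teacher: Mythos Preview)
Your proposal is correct and follows essentially the same route as the paper's proof: assume a $PL(n,e)$-code exists, apply Lemma~\ref{lem:Lep1bp} in place of Lemma~\ref{lem:Lep1b} to obtain a translate $\mathbf{a}+\Lambda(e,s)$ with many codewords, feed that lower bound into Lemma~\ref{lem:Lep2b}, and then defer to Lepist\"{o}'s purely numerical estimates (which do not depend on the ambient group) to reach a contradiction. The paper likewise remarks that Lepist\"{o}'s remaining auxiliary lemmas are handled by the same density trick, so your bookkeeping caveat is exactly the one the authors make.
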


\begin{proof}
Let $C$ be a $PL(n,e)$-code. Lemma~\ref{lem:Lep1bp} implies that there
exists an $\mathbf{a} \in \mathbb{Z}^n$ such that 
\begin{equation}
\alpha = \lvert (\mathbf{a} + \Lambda(e,s)) \cap C \rvert \ge \frac{\lvert
\Lambda(e,s) \rvert}{\lvert S(n,e) \rvert - \lvert S(n,e-2) \rvert}.
\label{eq:Lepext1}
\end{equation}
Then $(\mathbf{a} + \Lambda(e,s)) \cap C$ is a $e$-error correcting Lee
code, and hence Lemma~\ref{lem:Lep2b} shows that 
\begin{equation}
2e + 2 \le \frac{e+2}{\alpha - 1} \Bigl( \alpha \Bigl( 2 - \frac{e+2}{n} %
\Bigr) + 4s - 6 \Bigr).  \label{eq:Lepext2}
\end{equation}
The two inequalities \eqref{eq:Lepext1} and \eqref{eq:Lepext2}, with Lepist%
\"{o}'s estimates give a contradiction.
\end{proof}

\subsection{Linear Programming}

We would like to sketch one more method that can be used to prove
nonexistence of $PL(n,e)$-codes for large $e$. The idea originates from
Golomb and Welch's observation that $PL(n,e)$-codes induce translational
packings of $\mathbb{R}^n$ by cross-polytopes.

It is extremely difficult to obtain an effective upper bound for the packing
density of an arbitrary tile. In the case of $n$-dimensional spheres, Cohn
and Elkies \cite{CH} developed a tool for proving upper bounds for packing
densities, which eventually determined the densest sphere packing in
dimensions $8$ \cite{Via} and $24$ \cite{CKMRV}.

\begin{theorem}[{\protect\cite[Theorem~B.1]{CH}}]
Let $V\subseteq \mathbb{R}^{n}$ be a convex body, symmetric with respect to
the origin, $f:\mathbb{R}^{n}\rightarrow \mathbb{R}$ be a nonzero function,
and $\hat{f}(\mathbf{t})$ denote its Fourier transform. Assume that:

\begin{itemize}
\item[(0)] $\lvert f\rvert $ and $\lvert \hat{f}\rvert $ decay faster than $%
\lvert \mathbf{x}\rvert ^{-n-\varepsilon }$ for some $\varepsilon >0$,

\item[(1)] $f(\mathbf{x}) \le 0$ for $\mathbf{x} \notin V$,

\item[(2)] $\hat{f}(\mathbf{t}) \ge 0$ for all $\mathbf{t}$.
\end{itemize}

\noindent Then packings of $\mathbb{R}^{n}$ by translates of $V$ have
density at most 
\begin{equation*}
\frac{\func{vol}(V)f(\mathbf{0})}{2^{n}\hat{f}(\mathbf{0})},
\end{equation*}
where $\func{vol}(V)$ stands for the volume of $V$.
\end{theorem}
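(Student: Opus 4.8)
The plan is to adapt the classical Cohn--Elkies linear programming argument, whose engine is Poisson summation applied to a periodic packing; hypotheses $(0)$, $(1)$, $(2)$ are tailored exactly so that this argument runs. First I would reduce to periodic packings: it is standard that the supremum of the densities of translational packings of a bounded body $V$ is already approached by periodic ones, so it suffices to bound the density of a packing whose centre set is $\mathcal{C}=\bigcup_{j=1}^{N}(\Lambda+\mathbf{v}_{j})$ for a lattice $\Lambda\subseteq\mathbb{R}^{n}$ and representatives $\mathbf{v}_{1},\dots,\mathbf{v}_{N}$ lying in pairwise distinct cosets of $\Lambda$. Such a packing has density $\func{vol}(V)\cdot N/\func{covol}(\Lambda)$, so the goal becomes the inequality $N/\func{covol}(\Lambda)\le f(\mathbf{0})/\bigl(2^{n}\hat{f}(\mathbf{0})\bigr)$.

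Next I would record the geometric content of the statement that $\{\mathbf{c}+V:\mathbf{c}\in\mathcal{C}\}$ is a packing, and then rescale the test function. Since $V$ is convex and symmetric about $\mathbf{0}$, disjointness of the interiors $\mathbf{c}+\func{int}V$ and $\mathbf{c}'+\func{int}V$ is equivalent to $\mathbf{c}-\mathbf{c}'\notin\func{int}(V)-\func{int}(V)=\func{int}(2V)$, i.e.\ $\tfrac12(\mathbf{c}-\mathbf{c}')\notin\func{int}V$. Put $g(\mathbf{x})=f(\mathbf{x}/2)$. By $(1)$, $g(\mathbf{x})\le 0$ for $\mathbf{x}\notin 2V$; by $(2)$, $\hat{g}(\mathbf{t})=2^{n}\hat{f}(2\mathbf{t})\ge 0$ for all $\mathbf{t}$; and by $(0)$, $g$ retains the decay needed for Poisson summation. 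Modulo the boundary of $2V$ (dealt with below), the packing condition gives $g(\mathbf{c}-\mathbf{c}')\le 0$ for all distinct $\mathbf{c},\mathbf{c}'\in\mathcal{C}$.

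Then comes the main computation: apply Poisson summation over $\Lambda$ to obtain
\[
\sum_{\mathbf{x}\in\Lambda}\ \sum_{j,k=1}^{N}g(\mathbf{x}+\mathbf{v}_{j}-\mathbf{v}_{k})
=\frac{1}{\func{covol}(\Lambda)}\sum_{\mathbf{t}\in\Lambda^{*}}\hat{g}(\mathbf{t})\,\Bigl\lvert\sum_{j=1}^{N}e^{2\pi i\langle\mathbf{t},\mathbf{v}_{j}\rangle}\Bigr\rvert^{2},
\]
where $\Lambda^{*}$ is the dual lattice. By $(2)$ every term of the spectral side is nonnegative, so that side is at least its $\mathbf{t}=\mathbf{0}$ term, namely $N^{2}\hat{g}(\mathbf{0})=2^{n}N^{2}\hat{f}(\mathbf{0})$. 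On the spatial side, because the $\mathbf{v}_{j}$ lie in distinct cosets of $\Lambda$ the only summand with vanishing argument is the one with $j=k$ and $\mathbf{x}=\mathbf{0}$, contributing $N g(\mathbf{0})=N f(\mathbf{0})$; every other summand has argument a nonzero difference of two elements of $\mathcal{C}$, hence is $\le 0$. Thus $2^{n}N^{2}\hat{f}(\mathbf{0})\le N f(\mathbf{0})$, which rearranges to $N/\func{covol}(\Lambda)\le f(\mathbf{0})/\bigl(2^{n}\hat{f}(\mathbf{0})\bigr)$ and hence to the asserted density bound. (Here $f(\mathbf{0})>0$, since $f\neq 0$ together with $\hat{f}\ge 0$ forces $f(\mathbf{0})=\int\hat{f}>0$; if $\hat{f}(\mathbf{0})=0$ the stated bound is vacuous.)

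The main obstacles are the two \emph{soft} steps rather than the algebra: justifying the reduction to periodic packings, and making the Poisson summation step rigorous by verifying absolute convergence of both sides, which is precisely the role of the decay hypothesis $(0)$ (combined with the elementary bound $O(R^{n})$ on the number of centres in a ball of radius $R$). A secondary but genuine nuisance is that $(1)$ controls $f$ only strictly outside $V$, so half-differences $\tfrac12(\mathbf{c}-\mathbf{c}')$ landing on $\partial V$ are not directly covered; this is handled by a limiting argument, applying the estimate to the packings $\{\mathbf{c}+(1-\varepsilon)V:\mathbf{c}\in\mathcal{C}\}$ (whose bodies are pairwise disjoint and so involve no boundary subtlety) with the correspondingly rescaled test function, and then letting $\varepsilon\to 0^{+}$.
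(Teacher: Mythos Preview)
The paper does not prove this theorem: it is quoted verbatim as Theorem~B.1 of Cohn--Elkies \cite{CH} and used as a black box, with only the remark that the discrete analogue (Theorem~\ref{thm:CEdiscrete}) is proved ``essentially the same'' way. So there is no in-paper proof to compare against.

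Your argument is the standard Cohn--Elkies proof and is correct in outline and in all essential details: reduce to periodic packings, rescale $f$ by a factor of $2$ to convert condition~(1) into nonpositivity on nonzero centre differences, apply Poisson summation over the lattice, drop the nonnegative spectral terms with $\mathbf{t}\neq\mathbf{0}$, and drop the nonpositive spatial terms with nonzero argument. You have also correctly flagged and dispatched the two genuine technical points (absolute convergence from~(0), and the $\partial V$ issue via a shrinking argument). One slip: when you write ``that side is at least its $\mathbf{t}=\mathbf{0}$ term, namely $N^{2}\hat{g}(\mathbf{0})=2^{n}N^{2}\hat{f}(\mathbf{0})$'' you have dropped the factor $1/\func{covol}(\Lambda)$ that sits in front of the spectral sum; the displayed inequality $2^{n}N^{2}\hat{f}(\mathbf{0})\le Nf(\mathbf{0})$ is therefore missing that factor on the left, though your subsequent rearranged inequality $N/\func{covol}(\Lambda)\le f(\mathbf{0})/(2^{n}\hat{f}(\mathbf{0}))$ is correct, so you evidently restored it silently.
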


Instead of applying this theorem directly to the cross-polytope, we use a
discrete analogue of the theorem and apply it to the discrete Lee sphere.
The proof is essentially the same, but using functions $\mathbb{Z}^n \to 
\mathbb{R}$ and not scaling the tile by $2$, because there is no good notion
of convexity.

\begin{theorem}
\label{thm:CEdiscrete} Let $V\subseteq \mathbb{Z}^{n}$ be a finite subset,
symmetric with respect to the origin. Suppose $f:\mathbb{Z}^{n}\rightarrow 
\mathbb{R}$ is a nonzero function such that:

\begin{itemize}
\item[(0)] $\lvert f\rvert $ decay faster than $\lvert \mathbf{x}\rvert
^{-n-\varepsilon }$ for some $\varepsilon >0$,

\item[(1)] $f(\mathbf{x}) \le 0$ for $\mathbf{x} \notin V + V$,

\item[(2)] $\hat{f}(\mathbf{t}) = \sum_{\mathbf{x} \in \mathbb{Z}^n}^{} f(%
\mathbf{x}) e^{-2 \pi i \mathbf{x} \cdot \mathbf{t}} \ge 0$ for all $\mathbf{%
t}$.
\end{itemize}

\noindent Then packings of $\mathbb{Z}^{n}$ (not of $\mathbb{R}^{n}$) by
translates of $V$ have density at most 
\begin{equation*}
\frac{\lvert V\rvert f(\mathbf{0})}{\hat{f}(\mathbf{0})}.
\end{equation*}
\end{theorem}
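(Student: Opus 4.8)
The plan is to mimic the classical Cohn--Elkies linear-programming bound, but entirely in the discrete setting of $\mathbb{Z}^n$, so that no notion of convexity is needed. Suppose $P \subseteq \mathbb{Z}^n$ is a packing of $\mathbb{Z}^n$ by translates of $V$; that is, the translates $\{ \mathbf{p} + V : \mathbf{p} \in P\}$ are pairwise disjoint. The density of $P$ is $\delta = |V| \cdot \lim_{N \to \infty} |P \cap B_n(N)| / |B_n(N)|$, where $B_n(N) = [-N,N]^n$. First I would form the finite sum $\Sigma_N = \sum_{\mathbf{p}, \mathbf{q} \in P \cap B_n(N)} f(\mathbf{p} - \mathbf{q})$ and estimate it from above and below in two different ways, as in the continuous proof.

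For the upper bound: because the translates $\mathbf{p} + V$ are disjoint, whenever $\mathbf{p} \ne \mathbf{q}$ are distinct elements of $P$ we have $\mathbf{p} - \mathbf{q} \notin V + V$ — indeed, if $\mathbf{p} - \mathbf{q} = \mathbf{v}_1 - \mathbf{v}_2$ with $\mathbf{v}_1, \mathbf{v}_2 \in V$ (using that $V$ is symmetric, so $V + V = V - V$), then $\mathbf{p} + \mathbf{v}_2 = \mathbf{q} + \mathbf{v}_1$ lies in both translates, a contradiction. Hence condition (1) gives $f(\mathbf{p} - \mathbf{q}) \le 0$ for the off-diagonal terms, so $\Sigma_N \le |P \cap B_n(N)| \cdot f(\mathbf{0})$. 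For the lower bound: I would extend the sum to all of $P$ by inserting indicator functions and then use positivity of $\hat{f}$. Concretely, write $\mathbf{1}_{B_n(N)}$ for the indicator of the box and consider $\sum_{\mathbf{p},\mathbf{q} \in P} f(\mathbf{p}-\mathbf{q}) \, g(\mathbf{p}) g(\mathbf{q})$ for a suitable nonnegative cutoff $g$ supported near $B_n(N)$; by Poisson-type/Fourier manipulation this equals an integral over the torus of $\hat{f}$ against $|\sum_{\mathbf{p}} g(\mathbf{p}) e^{-2\pi i \mathbf{p} \cdot \mathbf{t}}|^2 \ge 0$, hence is $\ge 0$. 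The decay hypothesis (0) guarantees all these sums converge absolutely and that the error from replacing a sharp cutoff by a smooth one, and from boundary effects, is of lower order than $N^n$. Comparing, one gets $0 \le \Sigma_\infty \le (\text{density of } P / |V|) \cdot |B_n(N)| \cdot f(\mathbf{0}) + o(N^n)$ on one side, while the $\mathbf{t} = \mathbf{0}$ term of the Fourier side contributes $\hat f(\mathbf{0}) \cdot (\text{density}/|V|)^2 |B_n(N)|^2$-type main term; dividing through yields $\delta \le |V| f(\mathbf{0}) / \hat f(\mathbf{0})$ after letting $N \to \infty$.

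More cleanly, I would run the standard two-point argument: let $\rho = \lim |P \cap B_n(N)|/|B_n(N)|$ be the point density, so $\delta = |V| \rho$. The upper estimate gives $\Sigma_N \le f(\mathbf{0}) |P \cap B_n(N)| = f(\mathbf{0}) \rho |B_n(N)| + o(N^n)$. The lower estimate, after averaging over translates of the window to kill boundary terms, gives $\Sigma_N \ge \hat f(\mathbf{0}) \rho^2 |B_n(N)| + o(N^n)$ — this is exactly where positivity (2) enters, discarding all nonzero Fourier modes and keeping only the constant mode, whose coefficient is $\hat f(\mathbf{0})$ and which counts (density)$^2$ worth of pairs. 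Combining, $\hat f(\mathbf{0}) \rho^2 \le f(\mathbf{0}) \rho$, i.e. $\rho \le f(\mathbf{0})/\hat f(\mathbf{0})$, whence $\delta = |V|\rho \le |V| f(\mathbf{0})/\hat f(\mathbf{0})$, as claimed.

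The main obstacle is the lower bound on $\Sigma_N$: in the compact (torus) setting the Poisson summation step is exact and clean, but on $\mathbb{Z}^n$ one must be careful about convergence and, more importantly, about boundary contributions of the window $B_n(N)$, which are of order $N^{n-1}$ — genuinely lower order, but one must organize the estimate so that the main terms are $\Theta(N^n)$ and truly dominate. The cleanest route is to average the window over its own translates (a standard de-periodization trick): for each shift $\mathbf{s}$ in a large box, count pairs $(\mathbf{p},\mathbf{q}) \in (P \cap (\mathbf{s} + B_n(N)))^2$, and sum over $\mathbf{s}$; this turns the double sum into $\sum_{\mathbf{p},\mathbf{q} \in P} f(\mathbf{p}-\mathbf{q}) \cdot \#\{\mathbf{s} : \mathbf{p}, \mathbf{q} \in \mathbf{s}+B_n(N)\}$, and the multiplicity is a nonnegative triangular-type weight (a discrete convolution of $\mathbf{1}_{B_n(N)}$ with itself) whose Fourier transform is $|\widehat{\mathbf{1}_{B_n(N)}}|^2 \ge 0$; combined with (2) this makes the whole quantity manifestly $\ge 0$, and the constant Fourier mode gives the main term $\hat f(\mathbf{0}) \rho^2 |B_n(N)|^2$-scale contribution. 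Everything else — absolute convergence from (0), comparison of $|P \cap B_n(N)|$ with $\rho |B_n(N)|$ — is routine. The content of the theorem, just as in \cite{CH}, is entirely in the duality between the disjointness condition (feeding hypothesis (1)) and Fourier positivity (hypothesis (2)).
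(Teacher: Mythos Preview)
Your approach is the same one the paper intends --- the paper gives no proof and simply says ``the proof is essentially the same'' as the continuous Cohn--Elkies bound, and that is exactly what you are trying to transplant to $\mathbb{Z}^n$. Your upper estimate $\Sigma_N \le f(\mathbf{0})\,|P\cap B_n(N)|$ is correct, and the reason (distinct centers satisfy $\mathbf{p}-\mathbf{q}\notin V+V$, using $V=-V$) is exactly right.

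The gap is in the lower estimate. From
\[
\Sigma_N=\int_{\mathbb{T}^n}\hat f(\mathbf{t})\,\Bigl|\sum_{\mathbf{p}\in P\cap B_n(N)}e^{-2\pi i\mathbf{p}\cdot\mathbf{t}}\Bigr|^2\,d\mathbf{t}
\]
and $\hat f\ge 0$ you only get $\Sigma_N\ge 0$; there is no isolated ``constant Fourier mode'' to keep, because you are integrating over the torus, not summing over a dual lattice. Your averaging-over-shifts variant does not repair this: if you sum over all $\mathbf{s}\in\mathbb{Z}^n$ the weight becomes $\chi_N*\chi_N(\mathbf{p}-\mathbf{q})$ as you say, but the resulting double sum over $P$ diverges; if you restrict $\mathbf{s}$ to a box you lose the dependence on $\mathbf{p}-\mathbf{q}$ alone and are back to an expression whose positivity gives only $\ge 0$. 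Neither route produces the claimed term $\hat f(\mathbf{0})\rho^2|B_n(N)|$.

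The clean way to recover the sharp constant --- and what the Cohn--Elkies argument actually does --- is to \emph{periodize}: take $P'=P\cap B_n(N)$, view it as a packing in $\mathbb{Z}^n/L\mathbb{Z}^n$ with $L=2N+1+\operatorname{diam}(V+V)$ (so the wrap-around creates no new overlaps), and apply Poisson summation over $L\mathbb{Z}^n$. Then
\[
\sum_{\mathbf{p},\mathbf{q}\in P'}\sum_{\mathbf{k}\in L\mathbb{Z}^n} f(\mathbf{p}-\mathbf{q}+\mathbf{k})
=\frac{1}{L^n}\sum_{\mathbf{m}\in\mathbb{Z}^n}\hat f(\mathbf{m}/L)\Bigl|\sum_{\mathbf{p}\in P'}e^{2\pi i\mathbf{p}\cdot\mathbf{m}/L}\Bigr|^2
\ge \frac{\hat f(\mathbf{0})\,|P'|^2}{L^n},
\]
while the left side is $\le |P'|f(\mathbf{0})$ by your packing argument. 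This gives $|P'|/L^n\le f(\mathbf{0})/\hat f(\mathbf{0})$; since $L/(2N+1)\to 1$, letting $N\to\infty$ yields $\rho\le f(\mathbf{0})/\hat f(\mathbf{0})$ and hence $\delta\le |V|f(\mathbf{0})/\hat f(\mathbf{0})$. Replacing your window-averaging paragraph with this periodization step makes the proof complete and matches what the paper has in mind.
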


In particular, if $\lvert V\rvert f(\mathbf{0})<\hat{f}(\mathbf{0})$ then $V$
does not tile $\mathbb{Z}^{n}$ by translations. It is noteworthy that if $%
f=\chi _{V}\ast \chi _{V}$, where $\chi _{V}$ is the characteristic function
of $V$ and $\ast$ denotes the convolution, then all conditions are satisfied
and $\lvert V\rvert f(\mathbf{0})/\hat{f}(\mathbf{0})=1$.

Checking whether there exists such a function $f$ with $\lvert V \rvert f(%
\mathbf{0}) < \hat{f}(\mathbf{0})$ is now a linear programming problem. But
in its current form, the problem is not easily computable since the
restrictions are complicated. Thus we consider one special situation in
which the conditions become much simpler.

Denote $\tilde{g}(\mathbf{x}) = g(-\mathbf{x})$. For a function $g$ with
fast decay, we use a linear perturbation $f = (\chi_V - \epsilon g) \ast
(\chi_V - \epsilon \tilde{g})$ for $0 < \epsilon \ll 1$. Then conditions~(0)
and (2) are automatically satisfied, and (1) also is satisfied up to first
order of $\epsilon$ if and only if $(\chi_V \ast \tilde{g})(\mathbf{x}) \ge
0 $ for all $\mathbf{x} \notin V + V$. We then obtain the following
corollary.

\begin{corollary}
\label{cor:CEdiscretelinear} If there exists a function $g : \mathbb{Z}^n
\to \mathbb{R}$ satisfying the following conditions, then there is no $%
PL(n,e)$-code:

\begin{itemize}
\item[(0)] $\lvert g\rvert $ decays faster than $\lvert \mathbf{x}\rvert
^{-n-\varepsilon }$ for some $\varepsilon >0$,

\item[(1)] $g(\mathbf{x}) = 0$ for $\mathbf{x} \in S(n,e)$,

\item[(2)] $(g \ast \chi_{S(n,e)})(\mathbf{x}) \ge 0$ for $\mathbf{x} \notin
S(n,2e)$,

\item[(3)] $\sum_{\mathbf{x} \in \mathbb{Z}^n}^{} g(\mathbf{x}) < 0$. (The
sum converges absolutely by (0).)
\end{itemize}
\end{corollary}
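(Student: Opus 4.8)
The plan is to deduce the corollary from Theorem~\ref{thm:CEdiscrete}, applied to the finite, origin-symmetric set $V = S(n,e) \subseteq \mathbb{Z}^{n}$, for which $V + V = S(n,2e)$ and $\lvert V \rvert = \lvert S(n,e) \rvert$. Since a $PL(n,e)$-code is precisely a tiling of $\mathbb{Z}^{n}$ by translates of $S(n,e)$, hence a packing of density $1$, it suffices to produce a nonzero $f : \mathbb{Z}^{n} \to \mathbb{R}$ satisfying hypotheses (0), (1), (2) of Theorem~\ref{thm:CEdiscrete} together with the strict inequality $\lvert V \rvert f(\mathbf{0}) < \hat{f}(\mathbf{0})$; by that theorem this would force the packing density below $1$, which is impossible for a tiling. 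Following the discussion preceding the corollary, I would set $\psi = \chi_{V} - \epsilon g$ and $f = \psi \ast \tilde{\psi}$ (which equals $(\chi_{V} - \epsilon g) \ast (\chi_{V} - \epsilon \tilde{g})$ because $V$ is symmetric), for a small parameter $\epsilon > 0$ to be fixed at the end; note $\psi \neq 0$ since $\psi \equiv 1$ on $V$ by condition~(1).

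Two hypotheses are immediate: (0) holds because $\lvert f \rvert$ inherits the decay rate of $g$ (as $\chi_{V}$ has finite support), and (2) holds because $\hat{f} = \lvert \hat{\psi} \rvert^{2} \ge 0$. For the strict inequality, condition~(1) makes the cross term in $f(\mathbf{0}) = \sum_{\mathbf{x}} \psi(\mathbf{x})^{2}$ vanish, so $f(\mathbf{0}) = \lvert V \rvert + \epsilon^{2} \sum_{\mathbf{x}} g(\mathbf{x})^{2}$, while $\hat{f}(\mathbf{0}) = \bigl( \sum_{\mathbf{x}} \psi(\mathbf{x}) \bigr)^{2} = (\lvert V \rvert - \epsilon \sigma)^{2}$ with $\sigma = \sum_{\mathbf{x}} g(\mathbf{x}) < 0$ by condition~(3). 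Subtracting,
\[
\hat{f}(\mathbf{0}) - \lvert V \rvert f(\mathbf{0}) = -2\epsilon \lvert V \rvert \sigma + \epsilon^{2}\Bigl( \sigma^{2} - \lvert V \rvert \sum_{\mathbf{x}} g(\mathbf{x})^{2} \Bigr),
\]
and since $-2\lvert V \rvert \sigma > 0$ this is positive for all sufficiently small $\epsilon > 0$. So far everything is routine.

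The substance of the argument is hypothesis (1): $f(\mathbf{x}) \le 0$ for $\mathbf{x} \notin V + V = S(n,2e)$. Expanding, $f = \chi_{V} \ast \chi_{V} - \epsilon(\chi_{V} \ast \tilde{g} + g \ast \chi_{V}) + \epsilon^{2}\, g \ast \tilde{g}$, and for $\mathbf{x} \notin V + V$ the leading term vanishes since $\chi_{V} \ast \chi_{V}$ is supported on $V + V$. Using the symmetry of $V$ to identify $(\chi_{V} \ast \tilde{g})(\mathbf{x})$ with $(g \ast \chi_{V})(-\mathbf{x})$, and the symmetry of $V+V$ with condition~(2) to see that $(g \ast \chi_{V})(\pm\mathbf{x}) \ge 0$ there, we obtain
\[
f(\mathbf{x}) = -\epsilon\, A(\mathbf{x}) + \epsilon^{2}\, B(\mathbf{x}), \qquad A(\mathbf{x}) = (g \ast \chi_{V})(\mathbf{x}) + (g \ast \chi_{V})(-\mathbf{x}) \ge 0, \quad B(\mathbf{x}) = (g \ast \tilde{g})(\mathbf{x}).
\]
To first order in $\epsilon$ this is $\le 0$, which is exactly the information condition~(2) encodes.

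The main obstacle is the second-order term $\epsilon^{2} B(\mathbf{x})$: wherever the first-order term $A(\mathbf{x})$ vanishes while $\mathbf{x} \notin V + V$, one must still ensure $f(\mathbf{x}) = \epsilon^{2} B(\mathbf{x}) \le 0$. Here I would exploit the fast decay of $g$: both $A$ and $B$ decay faster than $\lvert \mathbf{x} \rvert^{-n-\varepsilon}$, so outside a large box the sign of $f$ is governed by $-\epsilon A$ and only finitely many $\mathbf{x} \notin V+V$ require checking; among those, the ones with $A(\mathbf{x}) > 0$ are fine for $\epsilon$ small, and the degenerate ones — where the perturbation literally touches the constraint, the same phenomenon that forces strict feasibility in the continuous Cohn--Elkies perturbation argument — must be handled by hand (this set is empty when, for instance, $g$ has finite support and condition~(2) holds with a little room). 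Fixing one $\epsilon > 0$ small enough for this verification and for the strict inequality of the second paragraph, Theorem~\ref{thm:CEdiscrete} gives packing density strictly below $1$, so no $PL(n,e)$-code exists.
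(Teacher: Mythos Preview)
Your approach mirrors the paper's derivation exactly: set $f=(\chi_V-\epsilon g)\ast(\chi_V-\epsilon\tilde g)$ and invoke Theorem~\ref{thm:CEdiscrete}. You also correctly isolate the real issue, namely hypothesis~(1) of that theorem for $\mathbf{x}\notin V+V$, where $f(\mathbf{x})=-\epsilon A(\mathbf{x})+\epsilon^2 B(\mathbf{x})$ with $A\ge 0$ by condition~(2). But your attempt to dispose of the second-order term does not work. The claim that ``outside a large box the sign of $f$ is governed by $-\epsilon A$'' is unfounded: $A=g\ast\chi_V+(g\ast\chi_V)(-\,\cdot\,)$ and $B=g\ast\tilde g$ decay at the \emph{same} rate $\lvert\mathbf{x}\rvert^{-n-\varepsilon}$, so neither dominates the other uniformly; worse, $A$ can vanish (even identically outside a bounded set when $g$ has finite support) while $B$ stays positive there, and then $f(\mathbf{x})=\epsilon^2 B(\mathbf{x})>0$ for \emph{every} $\epsilon>0$. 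You acknowledge this degenerate set but do not handle it, instead adding an extra hypothesis (``condition~(2) holds with a little room'') that is not in the statement. The paper itself does not close this gap via Theorem~\ref{thm:CEdiscrete}; it explicitly says condition~(1) holds only ``up to first order of $\epsilon$'' and then remarks that ``this corollary has a separate elementary proof that does not appeal to Theorem~\ref{thm:CEdiscrete}.''

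That elementary argument is what you are missing, and it bypasses the second-order obstruction entirely. Suppose a $PL(n,e)$-code $C$ exists; translate so that $\mathbf{0}\in C$. Since $\{\mathbf{c}+S(n,e):\mathbf{c}\in C\}$ tiles $\mathbb{Z}^n$ and $S(n,e)$ is symmetric, one has $\sum_{\mathbf{c}\in C}\chi_{S(n,e)}(\mathbf{x}-\mathbf{c})=1$ for every $\mathbf{x}$, and by condition~(0) Fubini gives
\[
\sum_{\mathbf{c}\in C}(g\ast\chi_{S(n,e)})(\mathbf{c})
=\sum_{\mathbf{x}\in\mathbb{Z}^n}g(\mathbf{x})\sum_{\mathbf{c}\in C}\chi_{S(n,e)}(\mathbf{x}-\mathbf{c})
=\sum_{\mathbf{x}\in\mathbb{Z}^n}g(\mathbf{x})<0
\]
by condition~(3). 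On the other hand, the summand at $\mathbf{c}=\mathbf{0}$ equals $\sum_{\mathbf{v}\in S(n,e)}g(-\mathbf{v})=0$ by condition~(1), and every other $\mathbf{c}\in C$ has $\delta_L(\mathbf{c},\mathbf{0})\ge 2e+1$, so $\mathbf{c}\notin S(n,2e)$ and the summand is $\ge 0$ by condition~(2). Hence the sum is $\ge 0$, a contradiction.
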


We note that this corollary has a separate elementary proof that does not
appeal to Theorem~\ref{thm:CEdiscrete}. But it is clear that, while
difficult to use, Theorem~\ref{thm:CEdiscrete} is much stronger than
Corollary~\ref{cor:CEdiscretelinear} by itself.

Nevertheless, Corollary~\ref{cor:CEdiscretelinear} immediately yields a
significant bound. Denote by $G$ the isometry group of $\mathbb{Z}^n$,
generated by permutations of axes and reflections and consisting of $2^n
\cdot n!$ elements. We also introduce the notation 
\begin{align*}
(m_1^{\alpha_1} &, m_2^{\alpha_2}, \ldots, m_k^{\alpha_k}) \\
&= ( \overbrace{m_1, \ldots, m_1}^{\alpha_1}, m_2, \ldots, \overbrace{m_k,
\ldots, m_k}^{\alpha_k}, 0, \ldots, 0) \in \mathbb{Z}^n,
\end{align*}
which makes sense for $\sum_{i}^{} \alpha_i \le n$. The following
proposition can be proven by explicit calculation.

\begin{proposition}
Assume that $e\geq 1$ and $n\geq 2e+2$. Define the function $h:\mathbb{Z}%
^{n}\rightarrow \mathbb{R}$ as 
\begin{equation*}
h(\mathbf{x})=%
\begin{cases}
-1 & \mathbf{x}=(1^{e+1}), \\ 
\frac{4(n-e-2)(n-e-1)}{e(e+3)(2n-3e-3)} & \mathbf{x}=(1^{e+3}), \\ 
\frac{4(n-e-1)}{(e+1)(2n-3e-3)} & \mathbf{x}=(1^{e+1},2^{1}), \\ 
\frac{(2e+1)(e+1)}{2e(n-2e-1)} & \mathbf{x}=(1^{e},3^{1}), \\ 
0 & \text{otherwise}.%
\end{cases}%
\end{equation*}%
Then the function $g:\mathbb{Z}^{n}\rightarrow \mathbb{R}$ defined by 
\begin{equation*}
g(\mathbf{x})=\frac{1}{\lvert G\rvert }\sum_{\gamma \in G}^{{}}h(\gamma
\cdot \mathbf{x})
\end{equation*}%
satisfies conditions~(0), (1), and (2) of Corollary~\ref%
{cor:CEdiscretelinear}.
\end{proposition}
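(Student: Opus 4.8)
The plan is to verify the three conditions of Corollary~\ref{cor:CEdiscretelinear} for the $G$-symmetrized function $g$, one at a time. Since $g$ is obtained from $h$ by averaging over the isometry group $G$ of $\mathbb{Z}^n$, it is automatically $G$-invariant, so it suffices to understand $g$ on a fundamental domain, e.g. on vectors $\mathbf{x}$ with $x_1 \ge x_2 \ge \cdots \ge x_n \ge 0$. Condition~(0) is immediate: $h$ is supported on finitely many $G$-orbits, hence so is $g$, and a function of finite support decays faster than any power. Condition~(1) requires $g(\mathbf{x}) = 0$ for $\mathbf{x} \in S(n,e)$; since $h$ is supported on the orbits of $(1^{e+1})$, $(1^{e+3})$, $(1^{e+1},2^1)$, and $(1^e,3^1)$, each of which has Lee weight $e+1$, $e+3$, $e+3$, and $e+3$ respectively — all strictly greater than $e$ — none of these orbits meets $S(n,e)$, so $g$ vanishes on $S(n,e)$ by construction. (This is where the hypothesis $n \ge 2e+2$ is implicitly used to guarantee that $(1^{e+3})$ and the other length-$(e+3)$ patterns actually fit, i.e.\ are genuine vectors in $\mathbb{Z}^n$; for $(1^{e+3})$ we need $n \ge e+3$, which follows.)

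The real work is condition~(2): showing $(g \ast \chi_{S(n,e)})(\mathbf{x}) \ge 0$ for all $\mathbf{x} \notin S(n,2e)$. Because $g$ and $\chi_{S(n,e)}$ are both $G$-invariant, so is the convolution $g \ast \chi_{S(n,e)}$, and we may again restrict to $\mathbf{x}$ in the fundamental domain. We have $(g \ast \chi_{S(n,e)})(\mathbf{x}) = \sum_{\mathbf{y}} g(\mathbf{y}) \lvert S(n,e) \cap (\mathbf{x} - \mathbf{y} + S(n,e)) \rvert$ — wait, more directly, $(g \ast \chi_{S(n,e)})(\mathbf{x}) = \sum_{\mathbf{z} \in S(n,e)} g(\mathbf{x} - \mathbf{z})$, and since $g$ is supported on the four $G$-orbits listed, this sum counts, with the appropriate weights from the definition of $h$, the number of ways to write $\mathbf{x} = \mathbf{z} + \mathbf{o}$ with $\mathbf{z} \in S(n,e)$ and $\mathbf{o}$ in a given orbit. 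So the task reduces to: for each $\mathbf{x}$ with $\delta_L(\mathbf{x},\mathbf{0}) \ge 2e+1$, compute the four orbit-intersection counts $N_j(\mathbf{x}) = \lvert S(n,e) \cap (\mathbf{x} - O_j) \rvert$ where $O_1,\dots,O_4$ are the orbits of $(1^{e+1}), (1^{e+3}), (1^{e+1},2^1), (1^e,3^1)$, and check that $-c_1 N_1 + c_2 N_2 + c_3 N_3 + c_4 N_4 \ge 0$ where the $c_j > 0$ are the (symmetrization-adjusted) coefficients read off from $h$. This is "explicit calculation" as the paper says, but it is not short: one must carve up the region $\{\delta_L(\mathbf{x},\mathbf{0}) \ge 2e+1\}$ into finitely many cases according to which orbits can possibly reach into $S(n,e)$ from $\mathbf{x}$ (governed by the coordinate pattern of $\mathbf{x}$ and how close its Lee weight is to $2e+1$), and in each case evaluate the combinatorial counts. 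I expect the binding cases to be exactly those $\mathbf{x}$ with $\delta_L(\mathbf{x},\mathbf{0})$ equal to or just above $2e+1$, since for larger Lee weight the sphere $S(n,e)$ cannot be reached by the short orbit translates at all and the sum is trivially $0$.

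The main obstacle is therefore the bookkeeping in condition~(2): organizing the case analysis so that the weights $c_1 = 1$, $c_2 = \frac{4(n-e-2)(n-e-1)}{e(e+3)(2n-3e-3)}$, $c_3 = \frac{4(n-e-1)}{(e+1)(2n-3e-3)}$, $c_4 = \frac{(2e+1)(e+1)}{2e(n-2e-1)}$ — which were presumably chosen precisely to make the leading (worst) case an equality — indeed dominate the negative contribution in every case. The denominators $e(e+3)(2n-3e-3)$, $(e+1)(2n-3e-3)$, and $2e(n-2e-1)$ are all positive under the hypotheses $e \ge 1$ and $n \ge 2e+2$ (note $2n - 3e - 3 \ge 4e + 4 - 3e - 3 = e + 1 > 0$ and $n - 2e - 1 \ge 1 > 0$), so all four coefficients are well-defined and positive; this is the first thing to record. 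After that, one identifies the "critical" vectors $\mathbf{x}$ — those of the form $(1^{2e+1})$ and its near-neighbors under adding a short orbit vector — writes down the linear identity among $N_1,\dots,N_4$ that the $c_j$ are tuned to, and then checks the remaining finitely many non-critical configurations give a strict surplus. Given the paper flags this as provable "by explicit calculation," the intended proof is a direct verification along these lines rather than anything conceptual; the conditions~(0) and~(1) are essentially free, and~(3) is not even claimed here (it is what one checks separately to get the actual nonexistence bound in Corollary~\ref{cor:LP}).
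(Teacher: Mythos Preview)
Your outline matches the paper's approach exactly: the paper offers no proof beyond the sentence ``can be proven by explicit calculation,'' and you have correctly laid out what that calculation entails, disposing of conditions~(0) and~(1) cleanly and identifying condition~(2) as the substantive case analysis (which you describe but do not carry out, just as the paper does not). Your observations about positivity of the coefficients under $e\ge 1$, $n\ge 2e+2$, and about the support of $g$ lying entirely at Lee weight $e+1$ or $e+3$, are accurate.
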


Note that $\sum_{\mathbf{x}\in \mathbb{Z}^{n}}^{{}}g(\mathbf{x})=\sum_{%
\mathbf{x}\in \mathbb{Z}^{n}}^{{}}h(\mathbf{x})$. Thus if $e\geq 1$ and $%
n\geq 2e+2$ and 
\begin{equation*}
\frac{4(n-e-1)}{2n-3e-3}\Bigl(\frac{n-e-2}{e(e+3)}+\frac{1}{e+1}\Bigr)+\frac{%
(2e+1)(e+1)}{2e(n-2e-1)}<1,
\end{equation*}%
then there is no $PL(n,e)$-code. Computing the interval of $n$ for which the
inequality is satisfied gives the following corollary.

\begin{corollary}
\label{cor:LP}If $e \geq 18$ and $3e + 21 \leq n\leq \frac{1}{2}e^{2}-20$,
then there is no $PL(n,e)$-code.
\end{corollary}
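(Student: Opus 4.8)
The plan is to deduce Corollary~\ref{cor:LP} directly from the Proposition that precedes it, whose conclusion is exactly the inequality
\begin{equation*}
\frac{4(n-e-1)}{2n-3e-3}\Bigl(\frac{n-e-2}{e(e+3)}+\frac{1}{e+1}\Bigr)+\frac{(2e+1)(e+1)}{2e(n-2e-1)}<1
\end{equation*}
guaranteeing nonexistence of a $PL(n,e)$-code whenever $e\geq 1$ and $n\geq 2e+2$. So the entire task is to determine, for fixed $e\geq 18$, the range of integers $n$ satisfying both $n\geq 2e+2$ and the above inequality, and to check that this range contains the interval $3e+21\leq n\leq \tfrac12 e^2-20$.

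First I would treat the left-hand side as a function $\Phi_e(n)$ of the real variable $n$ on the domain $n>2e+1$ and understand its shape. Each of the three summands is positive there; as $n\to (2e+1)^+$ the last term blows up, and as $n\to\infty$ the first term grows like $\tfrac{2}{e(e+3)}n$ while the others tend to constants, so $\Phi_e(n)\to\infty$ at both ends. Hence $\{n:\Phi_e(n)<1\}$, if nonempty, is an interval $(n_-(e),n_+(e))$; the corollary's claim is that $[3e+21,\tfrac12 e^2-20]$ sits inside it for $e\ge 18$. Concretely I would verify two endpoint inequalities: $\Phi_e(3e+21)<1$ and $\Phi_e(\tfrac12 e^2-20)<1$ for all $e\geq 18$, plus note that for $n$ strictly between two points where a convex-ish positive function is below $1$ one needs a mild convexity or monotonicity argument — but in fact it is cleaner to bound $\Phi_e$ from above by an explicitly monotone surrogate on the relevant window. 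For the upper end $n=\tfrac12 e^2-20$: there $n-2e-1\approx \tfrac12 e^2$, so the third term is $O(1/e)\to 0$; $2n-3e-3\approx e^2$ while $4(n-e-1)\approx 2e^2$, so the first factor of the first term is $\approx 2$, and $(n-e-2)/(e(e+3))\approx \tfrac{e^2/2}{e^2}=\tfrac12$, giving the product $\approx 1$ — so the dangerous endpoint is the \emph{upper} one, where $\Phi_e$ is just barely below $1$, and the constant $-20$ (as opposed to, say, $-10$) is exactly what is needed to push it under. I would therefore expand $\Phi_e(\tfrac12 e^2-20)$ in $1/e$ carefully, showing the leading terms sum to something like $1-c/e+O(1/e^2)$ with $c>0$, and then make the estimate rigorous for $e\ge 18$ by elementary bounding of the error. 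For the lower end $n=3e+21$: $2n-3e-3=3e+39$, $4(n-e-1)=8e+80$, $n-e-2=2e+19$, $e(e+3)=e^2+3e$, $n-2e-1=e+20$; plugging in gives a ratio of explicit polynomials in $e$, and one checks the resulting rational inequality holds for all $e\geq 18$ (again leading behaviour: first term $\approx \tfrac{8e}{3e}\cdot\tfrac{2e}{e^2}=\tfrac{16}{3e}\to 0$, second term $\tfrac{2e\cdot e}{2e\cdot e}\to\tfrac12$, total $\to\tfrac12<1$, so this end is comfortable).

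The middle of the range then follows because on $[3e+21,\tfrac12 e^2-20]$ one can show $\Phi_e(n)\le \max\{\Phi_e(3e+21),\Phi_e(\tfrac12 e^2-20)\}+o(1)$ is not quite true as stated, so instead I would argue directly: write $\Phi_e(n)=A(n)+B(n)$ where $B(n)=\tfrac{(2e+1)(e+1)}{2e(n-2e-1)}$ is decreasing in $n$, and $A(n)=\tfrac{4(n-e-1)}{2n-3e-3}\bigl(\tfrac{n-e-2}{e(e+3)}+\tfrac1{e+1}\bigr)$; a short computation shows $\tfrac{4(n-e-1)}{2n-3e-3}$ is decreasing in $n$ (its derivative in $n$ has sign of $-(4e+8)<0$... actually one checks the sign) and tends to $2$ from above, while $\tfrac{n-e-2}{e(e+3)}$ is increasing and linear, so $A(n)\le 2\bigl(\tfrac{n-e-2}{e(e+3)}+\tfrac1{e+1}\bigr)+\bigl(\tfrac{4(n-e-1)}{2n-3e-3}-2\bigr)\tfrac1{e+1}$, each piece of which is monotone; combining, $\Phi_e(n)$ is bounded above by a sum of at most two monotone functions, and its maximum on the interval is attained at an endpoint. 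That reduces everything to the two endpoint checks above. The main obstacle, as flagged, is the upper endpoint estimate $\Phi_e(\tfrac12 e^2-20)<1$: it is a genuinely tight inequality, so the $1/e$-expansion must be controlled with explicit (not merely asymptotic) error bounds to conclude for every $e\geq 18$, and one should double-check the small cases $e=18,19,\dots$ up to wherever the asymptotic estimate becomes self-sufficient by direct numerical substitution into the rational inequality.
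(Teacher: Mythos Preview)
Your overall strategy matches the paper's: the paper derives the rational inequality from the Proposition and then simply says ``Computing the interval of $n$ for which the inequality is satisfied gives the following corollary,'' so reducing to an analysis of
\[
\Phi_e(n)=\frac{4(n-e-1)}{2n-3e-3}\Bigl(\frac{n-e-2}{e(e+3)}+\frac{1}{e+1}\Bigr)+\frac{(2e+1)(e+1)}{2e(n-2e-1)}<1
\]
and checking the two endpoints $n=3e+21$ and $n=\tfrac12 e^2-20$ is exactly the intended route. Your endpoint asymptotics are also correct in spirit.

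The genuine gap is in your argument that it suffices to check endpoints. Two things go wrong. First, the displayed bound
\[
A(n)\le 2\Bigl(\tfrac{n-e-2}{e(e+3)}+\tfrac{1}{e+1}\Bigr)+\Bigl(\tfrac{4(n-e-1)}{2n-3e-3}-2\Bigr)\tfrac{1}{e+1}
\]
is false: writing $A=fg$ with $f(n)=\tfrac{4(n-e-1)}{2n-3e-3}>2$ and $g(n)\ge \tfrac{1}{e+1}$, your inequality is equivalent to $(f-2)(g-\tfrac{1}{e+1})\le 0$, but both factors are nonnegative on the range in question. Second, even a valid upper bound of the form ``increasing function plus decreasing function'' does \emph{not} force the maximum to occur at an endpoint (e.g.\ $\sqrt{x}+\sqrt{2-x}$ on $[0,2]$).

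A clean fix: show $\Phi_e$ is convex on $(2e+1,\infty)$. Using $\tfrac{4(n-e-1)}{2n-3e-3}=2+\tfrac{2(e+1)}{2n-3e-3}$ and $\tfrac{n-e-2}{2n-3e-3}=\tfrac12+\tfrac{e-1}{2(2n-3e-3)}$, one expands to
\[
\Phi_e(n)=\text{(affine in }n\text{)}+\frac{c_1}{2n-3e-3}+\frac{c_2}{n-2e-1}
\]
with $c_1,c_2>0$ for $e\ge 2$. Each reciprocal term is convex on $(2e+1,\infty)$ (note $2e+1>\tfrac{3e+3}{2}$ there), so $\Phi_e$ is convex, the sublevel set $\{\Phi_e<1\}$ is automatically an interval, and your two endpoint checks then legitimately finish the proof.
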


For $n \leq 3e + 21$, it is likely that another choice of $g$, which takes
more care of the case when $n$ is small compared to $e$, would prove the
nonexistence of $PL(n,e)$-codes.

It is curious that this method gives almost the same bound as Lepist\"{o}'s.
We do not have a good explanation for this, but we also do not think the
method itself is equivalent to Lepist\"{o}'s. On the other hand, numerical
experiments suggest that it is unlikely that Corollary~\ref%
{cor:CEdiscretelinear} by itself, with a clever choice of function, is
powerful enough to resolve the Golomb--Welch conjecture.

\section{Algebraic Approaches to Translational Tiling Problems}

\label{sec:numbertheoretic}

As indicated in Section~\ref{sec:GW}, discussed in detail in Section~\ref%
{sec:PostLep}, and in following Section~\ref{subsec:generalization} there is
a wide variation of methods and techniques how the Golomb--Welch conjecture
has been attacked. Unfortunately, it seems to us that none of these
approaches is powerful enough to solve the conjecture.

We guess that to solve the Golomb--Welch conjecture new methods and
techniques have to be introduced, new conditions under which there exists a
(periodic/lattice) tiling of $\mathbb{Z}^{n}$ by translates of a finite set $%
V$ will have to be found. Along this line we provide in this section a
necessary condition for the existence of a tiling of $\mathbb{Z}^{n}$ by a
generic (arbitrary) tile $V$. This condition is proved by the so called
polynomial method. In the second part of this section we briefly mention
applications of Fourier analysis in tilings by translates. We state there
(without providing a proof) a condition for a tile such that all tilings by
this tile are periodic. Most likely this condition cannot be applied to the
Golomb--Welch conjecture. However we believe that further development of
Fourier analysis methods might contribute to the solution of the
Golomb--Welch conjecture in an essential way. Finally, in our quest to solve
the Golomb--Welch conjecture we have focused also on tilings by translates
of a tile of prime size. Later we looked at this types of tilings in its own
right. Hence, we have (re)proved the statement that each tiling by
translates of a tile of prime size is periodic, and also that if there is a
tiling by a tile of prime size then there is also a lattice tilings by this
tile. We believe that this statement can be further strengthen to: All
tilings by a tile of prime size are lattice one. A brief outline of the
proof of this conjecture for tiles of small size is given. We guess that it
might be possible to prove additional properties of these lattice tilings
that will show that the Golomb--Welch conjecture is true in the case when
the corresponding Lee sphere $S(n,e)$ is of prime size.

\subsection{Polynomial Method}

\label{subsec:polynomial}

We describe \emph{the Polynomial method} that has been originally introduced
by Barnes \cite{Barnes} who applied this method to tilings of a box with
bricks \cite{Barnes1}. Later, the same method has been rediscovered
independently in \cite{H-K} and \cite{KS}, where the authors focus on
Nivat's conjecture. Therefore results in \cite{H-K} and \cite{KS} overlap
only in Theorem~\ref{thm:period} (see the subsection on tiles of prime size).

Let $\mathcal{T} = \{V + \mathbf{l}: \mathbf{l} \in \mathcal{L} \}$ be a
tiling of $\mathbb{Z}^{n}$ by translates of $V$. We define a linear map $T_{%
\mathcal{T}} : \mathbb{Z} [x_{1}^{\pm 1},\dots ,x_{n}^{\pm 1}]\rightarrow 
\mathbb{Z}$, where $\mathbb{Z}[x_{1}^{\pm 1},\dots ,x_{n}^{\pm 1}]$ is the
commutative ring of Laurent polynomials generated by $x_{1}^{\pm 1},\dots
,x_{n}^{\pm 1}$, such that, for every $(a_{1},\dots ,a_{n})\in \mathbb{Z}%
^{n} $, 
\begin{equation*}
T_{\mathcal{T}} (x_{1}^{a_{1}}\cdots x_{n}^{a_{n}}) =%
\begin{cases}
1 & \text{if }(a_{1},\cdots ,a_{n})\in \mathcal{L} \\ 
0 & \text{otherwise.}%
\end{cases}%
\end{equation*}
If the tiling $\mathcal{T}$ is clear from the context we will drop the
subscript and write simply $T$. We note that $T$ is uniquely determined as
the monomials $x_{1}^{a_{1}}\cdots x_{n}^{a_{n}}$ form a basis of the ring
as a $\mathbb{Z}$-module. Let $Q_{V}\in \mathbb{Z}[x_{1}^{\pm 1},\ldots
,x_{n}^{\pm 1}]$ be a polynomial associated with $V$, 
\begin{equation*}
Q_{V}(x_{1},\dots ,x_{n})=\sum_{(a_{1},\dots ,a_{n})\in
(-V)}x_{1}^{a_{1}}\cdots x_{n}^{a_{n}}.
\end{equation*}
Then for any monomial $x_{1}^{m_{1}}\cdots x_{n}^{m_{n}}$, 
\begin{align*}
T(x_{1}^{m_{1}} &\cdots x_{n}^{m_{n}}Q_{V}) \\
& =\sum_{(a_{1},\dots ,a_{n}) \in (-V)}\lvert \{(a_{1}+m_{1},\dots
,a_{n}+m_{n})\}\cap \mathcal{L}\rvert \\
& =\lvert (-V+(m_{1},\dots ,m_{n}))\cap \mathcal{L}\rvert =1.
\end{align*}%
Since the map $T$ is linear and any polynomial is a linear combination of
monomials, we can immediately extend this equality to 
\begin{equation*}
T(PQ_{V})=P(1,\dots ,1)
\end{equation*}%
for any polynomial $P\in \mathbb{Z}[x_{1}^{\pm 1},\dots ,x_{n}^{\pm 1}]$.

In what follows we will present results proved by utilizing properties of
the linear map $T$ and the polynomial $Q_{V}$, i.e., by using the polynomial
method.

We start with a technical statement that will be used in the proof of
Conjecture \ref{conj:primetile} for tiles of small size:

\begin{theorem}
\label{C} Let $\mathcal{T}$ be a tiling of $\mathbb{Z}^{n}$ by translates of 
$V$, and let $a$ be an integer relatively prime to $\lvert V \rvert$. Then,
for any polynomial $P \in \mathbb{Z}[x_{1}^{\pm 1},\dots, x_{n}^{\pm 1}]$,
we have 
\begin{equation*}
T(PQ_{V}(x_{1}^{a},\dots ,x_{n}^{a}))=P(1,\dots ,1).
\end{equation*}
\end{theorem}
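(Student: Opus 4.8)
The plan is to reduce the statement about the ``dilated'' polynomial $Q_V(x_1^a,\dots,x_n^a)$ to the already-established identity $T(PQ_V)=P(1,\dots,1)$ by exploiting that $a$ is invertible modulo $|V|$. First I would record the one fact we are allowed to use for free: for every $P\in\mathbb Z[x_1^{\pm1},\dots,x_n^{\pm1}]$ we have $T(PQ_V)=P(1,\dots,1)$, as derived in the excerpt. In particular, taking $P=Q_V^{k-1}$ and iterating, one gets $T(Q_V^k)=|V|^{k-1}$ for $k\ge 1$ (since $Q_V(1,\dots,1)=|V|$), and more generally $T(PQ_V^k)=|V|^{k-1}P(1,\dots,1)$. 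The key algebraic input is then the following divisibility-type fact: the monomial substitution $x_i\mapsto x_i^a$, call it $\sigma_a$, is a ring endomorphism of $\mathbb Z[x_1^{\pm1},\dots,x_n^{\pm1}]$, and $Q_V(x_1^a,\dots,x_n^a)=\sigma_a(Q_V)$.

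The heart of the argument is to show that $\sigma_a(Q_V)$ lies in the ideal generated by $Q_V$ \emph{together with} multiples of $|V|$, and that the relevant ``error term'' is killed by $T$. Concretely, I would try to prove that there exist polynomials $R,S$ with
\begin{equation*}
\sigma_a(Q_V)=RQ_V+|V|\,S.
\end{equation*}
Given such a decomposition, for any $P$ we compute
\begin{align*}
T\bigl(P\,\sigma_a(Q_V)\bigr)
&=T(PRQ_V)+|V|\,T(PS)\\
&=R(1,\dots,1)P(1,\dots,1)+|V|\,T(PS).
\end{align*}
Now $R(1,\dots,1)$ can be evaluated by substituting $x_i=1$ in the decomposition, which gives $\sigma_a(Q_V)|_{x_i=1}=|V|$ and $Q_V|_{x_i=1}=|V|$, hence $R(1,\dots,1)=1-|V|\,S(1,\dots,1)$. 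So the target reduces to showing $|V|\,T(PS)=|V|\,S(1,\dots,1)\,P(1,\dots,1)$, i.e.\ that $T(PS)\equiv S(1,\dots,1)P(1,\dots,1)$; but this need not hold on the nose for arbitrary $S$, so the decomposition above is too crude. Instead I would aim for a sharper statement using the invertibility of $a$ mod $|V|$: choose $b$ with $ab\equiv 1\pmod{|V|}$, and play $\sigma_a$ against $\sigma_b$. The natural route is an induction on the ``content'' or on a suitable filtration: since $T$ is $\mathbb Z$-linear and $\{0,1\}$-valued on monomials, $T(PQ_V)$ depending only on $P(1,\dots,1)$ says $P-P(1,\dots,1)$ is annihilated by $T$ after multiplication by $Q_V$; one should be able to show the same for $\sigma_a(Q_V)$ by writing $\sigma_a(Q_V)-Q_V$ as $Q_V$ times something plus a multiple of $|V|$ times a polynomial vanishing at $(1,\dots,1)$, so that the $|V|$-term is itself in the image of ``multiply by $Q_V$ then $T$'' and contributes $0$ to the value at $(1,\dots,1)$.

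The main obstacle, and where I expect to spend real effort, is establishing exactly the right membership/divisibility statement connecting $\sigma_a(Q_V)$, $Q_V$, and $|V|$ — that is, finding the cleanest form of ``$\sigma_a(Q_V)$ is congruent to $Q_V$ modulo the ideal $(Q_V,|V|)$, with the $|V|$-part living in the augmentation ideal.'' One promising concrete approach: work modulo $|V|$, where one can try to show $\sigma_a$ and $\sigma_b$ are mutually inverse on the relevant quotient (since $\sigma_a\circ\sigma_b=\sigma_{ab}$ and $ab\equiv1$, so $\sigma_{ab}$ differs from the identity by a map that, on each monomial $x^{\mathbf m}$, sends it to $x^{(ab)\mathbf m}$ — and $x^{(ab)\mathbf m}-x^{\mathbf m}$ should be handled via the factorization $x^{|V|k\mathbf m}-1 = (x^{\mathbf m}-1)(\cdots)$ combined with $Q_V(x_1^{|V|},\dots)\equiv Q_V(x_1,\dots)^{|V|}$-type congruences). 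Bootstrapping from the mod-$|V|$ statement to the integral one, and verifying that every error term has augmentation $0$ so that it drops out when we evaluate at $(1,\dots,1)$, is the delicate bookkeeping; once that is in place, plugging into $T(PQ_V)=P(1,\dots,1)$ finishes the proof in one line.
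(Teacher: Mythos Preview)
There is a genuine gap. Your plan hinges on a decomposition $\sigma_a(Q_V)=RQ_V+\lvert V\rvert\,S$, but in general no such $R,S$ exist. Take $V=\{\mathbf 0,-\mathbf e_1,-\mathbf e_2\}\subset\mathbb Z^2$, so $Q_V=1+x+y$, $\lvert V\rvert=3$, and let $a=2$. Then $\sigma_2(Q_V)=1+x^2+y^2$; reducing modulo $3$ and modulo $1+x+y$ (i.e.\ setting $y=-1-x$) gives $1+x^2+(-1-x)^2=2+2x+2x^2\not\equiv 0$ in $\mathbb F_3[x^{\pm1},(1+x)^{-1}]$. Hence $\sigma_a(Q_V)\notin(Q_V,\lvert V\rvert)$, and the refinements you sketch (placing the $\lvert V\rvert$-part in the augmentation ideal, or playing $\sigma_a$ against $\sigma_b$ with $ab\equiv1\pmod{\lvert V\rvert}$) inherit the same obstruction. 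The modulus $\lvert V\rvert$ is simply the wrong one: nothing in the hypotheses ties $\sigma_a$ to $\lvert V\rvert$ at the level of polynomial identities.

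The paper instead works modulo $a$ (via its prime factors), not modulo $\lvert V\rvert$. For a prime $p$ one has the Frobenius congruence $Q_V(x_1^p,\dots,x_n^p)\equiv Q_V^{\,p}\pmod p$, whence $T(MQ_V(x_1^p,\dots,x_n^p))\equiv T(MQ_V^{\,p})=\lvert V\rvert^{p-1}\equiv1\pmod p$ for every monomial $M$ (using $p\nmid\lvert V\rvert$); an induction on the prime factors of $a$ then yields $T(M\,\sigma_a(Q_V))\equiv1\pmod a$. But a congruence alone cannot finish, and this is the second missing ingredient in your outline. You mention that $T$ is $\{0,1\}$-valued on monomials but never deploy it; the paper uses it to run a sandwich: expand $T(M\,\sigma_a(Q_V)\,Q_V)=\sum_{\mathbf v\in -V}T(Mx_1^{v_1}\cdots x_n^{v_n}\,\sigma_a(Q_V))$, observe each summand is nonnegative and (for $a\ge 2$, by the congruence) therefore at least $1$, while the total equals $Q_V(1,\dots,1)=\lvert V\rvert$ by the basic identity $T(RQ_V)=R(1,\dots,1)$. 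Equality forces every summand to be exactly $1$. Without this counting step there is no route from ``$\equiv1$'' to ``$=1$''.
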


\begin{proof}
We start with the case $a>0.$ Since the map $T$ is linear, we only need to
prove $T(MQ(x_{1}^{a},\dots ,x_{n}^{a}))=1$ for any monomial $M$. To see
this it suffices to show $T(MQ(x_{1}^{a},\dots ,x_{n}^{a}))\equiv 1(\func{mod%
}a).$ Indeed, we have 
\begin{align}
T(MQ_{V}& (x_{1}^{a},\dots ,x_{n}^{a})Q_{V})  \notag \\
& =\sum_{\mathbf{v}\in (-V)}T(M\cdot x_{1}^{v_{1}}\cdots x_{n}^{v_{n}}\cdot
Q_{V}(x_{1}^{a},\dots ,x_{n}^{a}))  \notag \\
& \geq \sum_{\mathbf{v}\in (-V)}1=\lvert V\rvert,  \label{eqn:ineq1}
\end{align}%
because the map $T$ takes polynomials with nonnegative coefficients to
nonnegative values, $T(MQ_{V}(x_{1}^{p},\dots ,x_{n}^{p}))\geq 1$ for all
monomials $M$. On the other hand, 
\begin{equation*}
T(MQ_{V}(x_{1}^{a},\dots ,x_{n}^{a})Q_{V})=Q_{V}(1^{a},\dots ,1^{a})=\lvert
V\rvert.
\end{equation*}%
It follows that the equality holds for every term in \eqref{eqn:ineq1}. For
some fixed $\mathbf{v}\in (-V)$, we have $T(M\cdot x_{1}^{v_{1}}\cdots
x_{n}^{v_{n}}\cdot Q(x_{1}^{a},\dots ,x_{n}^{a}))=1$ for every monomial $M$.
Therefore $T(MQ(x_{1}^{a},\dots ,x_{n}^{a}))=1$ for every monomial $M.$

The congruence $T(MQ(x_{1}^{a},\dots ,x_{n}^{a}))\equiv 1(\func{mod}a)$ will
be proved by induction on the total number $k$ of prime factors of $a$. As
noted above, $T(PQ_{V})=P(1,\dots ,1)$ for any polynomial $P\in \mathbb{Z}%
[x_{1}^{\pm 1},\dots ,x_{n}^{\pm 1}]$. Let $a=p,$ where $p$ is a prime. Then 
\begin{align*}
T(MQ_{V}(x_{1}^{p},& \dots ,x_{n}^{p}))\equiv
T(MQ_{V}^{p})=T(MQ_{V}^{p-1}Q_{V}) \\
& =(Q_{V}(1,\dots ,1))^{p-1}=\lvert V\rvert ^{p-1}\equiv 1(\func{mod}p).
\end{align*}%
\noindent since $T(RQ_{V})=R(1,\dots ,1)$ for any polynomial $R$. For $k>1,$
let $q$ is a prime factor of $a.$ By induction hypothesis we have $%
T(M(Q_{V}(x_{1}^{\frac{a}{q}},\dots ,x_{n}^{\frac{a}{q}}))\equiv 1(\func{mod}%
\frac{a}{q})$ which in turn implies $T(P(Q_{V}(x_{1}^{\frac{a}{q}},\dots
,x_{n}^{\frac{a}{q}}))=P(1,...,1)$ for any polynomial $P\in \mathbb{Z}%
[x_{1}^{\pm 1},\dots ,x_{n}^{\pm 1}]$. Hence,%
\begin{align*}
T(MQ_{V}(x_{1}^{a},& \dots ,x_{n}^{a}))\equiv T(MQ_{V}^{q}((x_{1}^{\frac{a}{q%
}},\dots ,x_{n}^{\frac{a}{q}}))\\
&=T(MQ_{V}^{q-1}(x_{1}^{\frac{a}{q}},\ldots
,x_{n}^{\frac{a}{q}})Q_{V}(x_{1}^{\frac{a}{q}},\dots ,x_{n}^{\frac{a}{q}}))
\\
& =(Q_{V}(1,\dots ,1))^{q-1}=\lvert V\rvert ^{q-1}\equiv 1(\func{mod}q)
\end{align*}
for any prime factor $q$ of $a$. Now $T(MQ_{V}(x_{1}^{a},\dots
,x_{n}^{a}))\equiv 1(\func{mod}a)$ follows from the fact that if $F\equiv 1(%
\func{mod}q)$ then $F\equiv 1(\func{mod}q^{t})$ for any $t\in N,$ and from
the Chinese Reminder Theorem.

To finish the proof we need to show that, for any $a>0,$ it is%
\begin{equation*}
T(PQ_{V}(x_{1}^{-a},\dots ,x_{n}^{-a}))=P(1,\dots ,1)
\end{equation*}%
for any polynomial $P\in \mathbb{Z}[x_{1}^{\pm 1},\dots ,x_{n}^{\pm 1}]$.
Again, it is sufficient to prove it for monomials. We first show 
\begin{equation*}
T(MQ_{V}(x_{1}^{-a},\dots ,x_{n}^{-a}))\leq 1
\end{equation*}%
for any monomial $M$. Suppose that 
\begin{equation*}
T(Mx_{1}^{-av_{1}}\cdots x_{n}^{-av_{n}})=T(Mx_{1}^{-au_{1}}\cdots
x_{n}^{-au_{n}})=1
\end{equation*}%
for some distinct $\mathbf{v},\mathbf{u}\in (-V)$. Then letting $M^{\prime
}=Mx_{1}^{a(-v_{1}-u_{1)}}\cdots x_{n}^{a(-v_{n}-u_{n})}$, we get 
\begin{align*}
T(M^{\prime }Q_{V})&\geq T(M^{\prime }x_{1}^{-av_{1}}\cdots
x_{n}^{-av_{n}})\\
&\qquad +T(M^{\prime }x_{1}^{-au_{1}}\cdots x_{n}^{-au_{n}})=2
\end{align*}%
which contradicts the original property of $Q_{V}$. Thus $%
T(MQ_{V}(x_{1}^{-a},\dots ,x_{n}^{-a}))\leq 1$ for all $M$. \bigskip

Consider the polynomial $MQ_{V}(x_{1}^{-a},\dots ,x_{n}^{-a})Q_{V}$. Because 
$T(MQ_{V}(x_{1}^{-a},\dots ,x_{n}^{-a})Q_{V})=Q_{V}(1,\dots ,1)=\lvert
V\rvert $ and 
\begin{equation*}
T(MQ_{V}(x_{1}^{-a},\dots ,x_{n}^{-a})Q_{V})\leq \sum_{\mathbf{v}\in
V}1=\lvert V\rvert ,
\end{equation*}%
all terms must attain equality. It follows that $T(MQ_{V}(x_{1}^{-a},\dots
,x_{n}^{-a}))=1$ for any monomial $M$.
\end{proof}

As an immediate consequence we get:

\begin{corollary}
\label{blowout} Let $\mathcal{T}=\{V+\mathbf{l}:\mathbf{l}\in \mathcal{L}\}$
be a tiling of $\mathbb{Z}^{n}$ by translates of $V$, and let $a$ be an
integer relatively prime to $\lvert V\rvert $. Then $\mathcal{T}_{a}=\{aV+%
\mathbf{l}:\mathbf{l}\in \mathcal{L}\}$ is a tiling of $\mathbb{Z}^{n}$ by
translates of a \textquotedblleft {}blowout\textquotedblright {} tile $aV=\{a%
\mathbf{v}:\mathbf{v}\in V\}$.
\end{corollary}

\begin{proof}
Set $S=aV$. Then 
\begin{align*}
Q_{S}(x_{1},\ldots ,x_{n})& =\sum_{(v_{1},\ldots ,v_{n})\in
(-V)}x_{1}^{av_{1}}\cdots x_{n}^{av_{n}} \\
& =Q_{V}(x_{1}^{a},\ldots ,x_{n}^{a}).
\end{align*}%
By Theorem~\ref{C}, 
\begin{equation*}
T(MQ_{S})=T(MQ_{V}(x_{1}^{a},\dots ,x_{n}^{a}))=M(1,\dots ,1)=1
\end{equation*}%
for any monomial $M$. Thus, for any $x\in \mathbb{Z}^{n}$, 
\begin{equation*}
\lvert (-S+x)\cap \mathcal{L}\rvert =1,
\end{equation*}%
that is, $\mathcal{T}_{a}=\{aV+\mathbf{l}:\mathbf{l}\in \mathcal{L}\}$ is a
tiling of $\mathbb{Z}^{n}$ by translates of $aV$.
\end{proof}

The following corollary can be found in \cite{Szegedy}. We provide here a
short proof of this result.

\begin{corollary}
(\cite{Szegedy}) \label{CC} Let $\mathcal{T}=\{V+\mathbf{l}:\mathbf{l}\in 
\mathcal{L}\}$ be a tiling of $\mathbb{Z}^{n}$ by translates of $V$, and let 
$a$ be an integer relatively prime to $\lvert V\rvert $. Then $\mathbf{l}+a(%
\mathbf{v}-\mathbf{w})\notin \mathcal{L}$ for each $\mathbf{l}\in \mathcal{L}
$ and $\mathbf{v}\neq \mathbf{w}\in V$.
\end{corollary}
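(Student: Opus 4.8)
The plan is to deduce Corollary~\ref{CC} directly from Theorem~\ref{C} by choosing an appropriate monomial $P$ and exploiting the nonnegativity of the map $T$ on polynomials with nonnegative coefficients. Concretely, fix $\mathbf{l} = (l_1, \ldots, l_n) \in \mathcal{L}$ and two distinct points $\mathbf{v} = (v_1, \ldots, v_n)$, $\mathbf{w} = (w_1, \ldots, w_n) \in V$. Suppose for contradiction that $\mathbf{l} + a(\mathbf{v} - \mathbf{w}) \in \mathcal{L}$ as well. The idea is to pick the monomial $M = x_1^{l_1 + a w_1} \cdots x_n^{l_n + a w_n}$ and examine $T(M \cdot Q_V(x_1^a, \ldots, x_n^a))$.

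First I would expand: since $Q_V(x_1^a, \ldots, x_n^a) = \sum_{\mathbf{u} \in (-V)} x_1^{a u_1} \cdots x_n^{a u_n}$, the product $M \cdot Q_V(x_1^a, \ldots, x_n^a)$ is a sum of monomials $x_1^{l_1 + a w_1 + a u_1} \cdots x_n^{l_n + a w_n + a u_n}$ over $\mathbf{u} \in (-V)$, i.e.\ over $-\mathbf{u} \in V$. Two of the summands are of particular interest: taking $-\mathbf{u} = \mathbf{w}$ (so $\mathbf{u} = -\mathbf{w}$) gives the monomial $x_1^{l_1} \cdots x_n^{l_n}$, whose exponent vector is $\mathbf{l} \in \mathcal{L}$; taking $-\mathbf{u} = 2\mathbf{w} - \mathbf{v}$ — wait, one must be careful. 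Let me instead take $M = x_1^{l_1} \cdots x_n^{l_n} \cdot x_1^{a w_1} \cdots x_n^{a w_n}$ and note that the summand with $\mathbf{u} = -\mathbf{w}$ contributes exponent vector $\mathbf{l}$, while the summand with $\mathbf{u} = -\mathbf{v}$ contributes exponent vector $\mathbf{l} + a(\mathbf{w} - \mathbf{v})$. By the hypothesis (replacing $\mathbf{v} - \mathbf{w}$ by $\mathbf{w} - \mathbf{v}$ is harmless by symmetry of the claim in $\mathbf{v}, \mathbf{w}$), both of these exponent vectors lie in $\mathcal{L}$. Since $\mathbf{v} \neq \mathbf{w}$, these are two distinct monomials, each mapped to $1$ by $T$, while $T$ is nonnegative on all monomials. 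Hence $T(M \cdot Q_V(x_1^a, \ldots, x_n^a)) \geq 2$. On the other hand, Theorem~\ref{C} applied with $P = M$ gives $T(M \cdot Q_V(x_1^a, \ldots, x_n^a)) = M(1, \ldots, 1) = 1$, a contradiction.

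The only subtlety — and the step I would be most careful about — is matching signs and conventions: the polynomial $Q_V$ is built from $-V$, and the blowout in Corollary~\ref{blowout} uses $aV$, so one must track whether the relevant monomials land at $\mathbf{l} + a(\mathbf{v} - \mathbf{w})$ or at $\mathbf{l} - a(\mathbf{v} - \mathbf{w})$; since the statement asserts $\mathbf{l} + a(\mathbf{v} - \mathbf{w}) \notin \mathcal{L}$ for every ordered pair of distinct $\mathbf{v}, \mathbf{w}$, and the pair $(\mathbf{v}, \mathbf{w})$ ranges symmetrically, either sign choice yields the full statement. Everything else is just the observation that $T$ is linear and monotone on nonnegative-coefficient polynomials, which was already used in the proof of Theorem~\ref{C}. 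Alternatively, one can phrase the argument geometrically via Corollary~\ref{blowout}: since $\mathcal{T}_a = \{aV + \mathbf{l} : \mathbf{l} \in \mathcal{L}\}$ is a tiling, the translates $aV + \mathbf{l}$ are pairwise disjoint; if $\mathbf{l} + a(\mathbf{v} - \mathbf{w}) \in \mathcal{L}$ then the point $\mathbf{l} + a\mathbf{v}$ lies in both $aV + \mathbf{l}$ (via $a\mathbf{v}$) and $aV + (\mathbf{l} + a(\mathbf{v} - \mathbf{w}))$ (via $a\mathbf{w}$), and these two translates are distinct because $\mathbf{v} \neq \mathbf{w}$, contradicting disjointness. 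I would present the latter version as it is cleaner, with the polynomial computation above as the underlying justification already furnished by Corollary~\ref{blowout}.
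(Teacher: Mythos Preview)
Your proposal is correct, and the geometric version you settle on---using Corollary~\ref{blowout} to conclude that $\mathcal{T}_a$ is a tiling, then observing that $\mathbf{l}+a\mathbf{v}$ would lie in the two distinct tiles $aV+\mathbf{l}$ and $aV+(\mathbf{l}+a(\mathbf{v}-\mathbf{w}))$---is exactly the paper's proof. The polynomial computation you give first is just the same argument with Corollary~\ref{blowout} unwound, so there is no meaningful difference in approach.
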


\begin{proof}
By Corollary~\ref{blowout}, $\mathcal{T}_{a} = \{ aV+\mathbf{l} : \mathbf{l}
\in \mathcal{L}\}$ is a tiling of $\mathbb{Z}^{n}$ by translates of $aV$,
hence $\mathbb{Z}^{n} = aV + \mathcal{L}$. Assume that $\mathbf{l} + a(%
\mathbf{v} - \mathbf{w}) \in \mathcal{L}$. Then 
\begin{align*}
\mathbf{l} + a\mathbf{v} &= a\mathbf{w} + [\mathbf{l} + a(\mathbf{v} - 
\mathbf{w})] \quad \text{but also} \\
\mathbf{l} + a\mathbf{v} &= a\mathbf{v} + \mathbf{l}.
\end{align*}
That is, $\mathbf{l} + a\mathbf{v} \in \mathbb{Z}^{n}$ would be covered by
two distinct tiles of $\mathcal{T}_{a}$.
\end{proof}

To start building a theory of tilings of $\mathbb{Z}^{n}$ by translates of a
finite tile, and to further exhibit the strength of this method, at the end
of this section we provide a necessary condition for the existence of a
tiling of $\mathbb{Z}^{n}$ by translates of a generic (arbitrary) tile $V$.

We start by recalling a famous theorem of Hilbert that will be applied in
the proof of this condition.

\begin{theorem}[Nullstellensatz]
Let $J$ be an ideal in $\mathbb{C}[x_{1}, \ldots, x_{n}]$, and $S \subseteq 
\mathbb{C}^{n}$. Denote by $\mathcal{V}(J)$ the set of all common zeros of
polynomials in $J$, and by $\mathcal{I}(S)$ the set of all polynomials in $%
\mathbb{C}[x_{1}, \ldots, x_{n}]$ that vanish at all elements of $S$. Then 
\begin{align*}
\mathcal{I}(\mathcal{V}(J)) &= \sqrt{J} \\
&= \{f \in \mathbb{C}[x_{1}, \ldots, x_{n}] : f^{n} \in J \text{ for some }
n \geq 1 \}.
\end{align*}
\end{theorem}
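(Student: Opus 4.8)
The plan is to reduce the statement to the \emph{weak Nullstellensatz} — the claim that a proper ideal $J \subsetneq \mathbb{C}[x_{1},\dots,x_{n}]$ always has $\mathcal{V}(J) \neq \emptyset$, equivalently that every maximal ideal of $\mathbb{C}[x_{1},\dots,x_{n}]$ has the form $(x_{1} - a_{1}, \dots, x_{n} - a_{n})$ for some $(a_{1},\dots,a_{n}) \in \mathbb{C}^{n}$ — and then to bootstrap to the stated strong form by the Rabinowitsch trick.

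First I would prove the weak form. Let $\mathfrak{m}$ be a maximal ideal and put $K = \mathbb{C}[x_{1},\dots,x_{n}]/\mathfrak{m}$, a field that is finitely generated as a $\mathbb{C}$-algebra. The key input is \emph{Zariski's lemma}: a field that is finitely generated as an algebra over a field $k$ is a finite (hence algebraic) extension of $k$. I would prove Zariski's lemma by induction on the number of algebra generators, invoking Noether normalization at the inductive step (alternatively, the Artin--Tate lemma together with the fact that $k(t)$ is not a finitely generated $k$-algebra). Granting it, $K$ is algebraic over $\mathbb{C}$, and since $\mathbb{C}$ is algebraically closed, $K = \mathbb{C}$. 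The composite $\mathbb{C} \hookrightarrow \mathbb{C}[x_{1},\dots,x_{n}] \twoheadrightarrow K = \mathbb{C}$ sends each $x_{i}$ to a scalar $a_{i}$, so $x_{i} - a_{i} \in \mathfrak{m}$, and since $(x_{1}-a_{1},\dots,x_{n}-a_{n})$ is already maximal, $\mathfrak{m} = (x_{1}-a_{1},\dots,x_{n}-a_{n})$. Consequently any proper ideal $J$ lies in some such maximal ideal, exhibiting a point of $\mathcal{V}(J)$.

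Next I would deduce the strong form. The inclusion $\sqrt{J} \subseteq \mathcal{I}(\mathcal{V}(J))$ is immediate, as a power of $f$ vanishing on a set forces $f$ itself to vanish there. For the reverse inclusion, take $f \in \mathcal{I}(\mathcal{V}(J))$ and pass to $\mathbb{C}[x_{1},\dots,x_{n},y]$ with one new variable $y$; set $J' = J\,\mathbb{C}[x_{1},\dots,x_{n},y] + (1 - yf)$. A common zero $(a_{1},\dots,a_{n},b)$ of $J'$ would force $(a_{1},\dots,a_{n}) \in \mathcal{V}(J)$ and hence $f(a_{1},\dots,a_{n}) = 0$, contradicting $1 - b\,f(a_{1},\dots,a_{n}) = 0$; thus $\mathcal{V}(J') = \emptyset$, and the weak form gives $J' = (1)$. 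Writing $1 = \sum_{i} g_{i} p_{i} + h\,(1 - yf)$ with $p_{i} \in J$ and $g_{i}, h \in \mathbb{C}[x_{1},\dots,x_{n},y]$, substituting $y = 1/f$, and clearing denominators produces $f^{N} \in J$ for some $N \ge 1$, i.e.\ $f \in \sqrt{J}$.

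The genuine content sits entirely in Zariski's lemma; everything else is formal, so that step is the one I expect to take real work. If one wishes to sidestep Noether normalization, one can exploit that $\mathbb{C}$ is uncountable: a finitely generated transcendental extension of $\mathbb{C}$ would contain $\mathbb{C}(t)$, whose $\mathbb{C}$-dimension is uncountable because the elements $1/(t - c)$ with $c \in \mathbb{C}$ are linearly independent, yet a finitely generated $\mathbb{C}$-algebra is countable-dimensional — a contradiction. This shortcut is special to uncountable base fields, whereas the Noether-normalization route works over an arbitrary $k$, which is why I would prefer it.
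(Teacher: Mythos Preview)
Your proposal is correct: the reduction to the weak Nullstellensatz via Zariski's lemma (with Noether normalization or the uncountability shortcut as alternatives) followed by the Rabinowitsch trick is the standard textbook proof, and all the steps you outline go through. There is nothing to compare against, however, because the paper does not prove this theorem at all --- it simply recalls Hilbert's Nullstellensatz as a classical result and then applies it to derive a Laurent-polynomial version (Lemma~\ref{lem:LaurentNulls}). So you have supplied a valid proof where the paper intentionally gives none.
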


We can directly apply Hilbert's Nullstellensatz to prove a Laurent
polynomial version of Nullstellensatz.

\begin{lemma}
\label{lem:LaurentNulls} Let $\{f_{i}\}_{i \in I} \subseteq \mathbb{C}
[x_{1}^{\pm 1},\dots,x_{n}^{\pm 1}]$ be a set of Laurent polynomials such
that there exists no $(x_{1}, \ldots ,x_{n}) \in (\mathbb{C} \setminus
\{0\})^{n}$ with $f_{i}(x_{1}, \ldots, x_{n}) = 0$ simultaneously for all $i
\in I$. Then there exist Laurent polynomials $p_{1}, \ldots, p_{k}$ and
indices $i_{1}, \ldots, i_{k} \in I$ such that 
\begin{equation*}
f_{i_{1}} p_{1} + \cdots + f_{i_{k}} p_{k}=1.
\end{equation*}
\end{lemma}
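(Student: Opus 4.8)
The plan is to reduce the Laurent polynomial statement to the classical Nullstellensatz over $\mathbb{C}[x_1,\ldots,x_n]$ by clearing denominators. A Laurent polynomial $f_i$ can be written as $f_i = x_1^{-N}\cdots x_n^{-N} \tilde{f_i}$ for a sufficiently large $N$, where $\tilde{f_i} \in \mathbb{C}[x_1,\ldots,x_n]$ is an honest polynomial; since the monomial factor is a unit in the Laurent ring, this does not change the common zero locus inside $(\mathbb{C}\setminus\{0\})^n$. Thus I first replace each $f_i$ by $\tilde{f_i}$ and work with ordinary polynomials, remembering that I only care about zeros with all coordinates nonzero.

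The next step handles the restriction to $(\mathbb{C}\setminus\{0\})^n$ versus all of $\mathbb{C}^n$: I introduce one extra polynomial to force the coordinates away from zero. Consider the family $\{\tilde{f_i}\}_{i\in I}$ together with the single polynomial $g = 1 - y\, x_1 x_2 \cdots x_n$ in the ring $\mathbb{C}[x_1,\ldots,x_n,y]$ (the Rabinowitsch trick). By hypothesis this enlarged family has no common zero in $\mathbb{C}^{n+1}$: any common zero of $g$ forces $x_1\cdots x_n \neq 0$, so its projection would be a common zero of all $f_i$ in $(\mathbb{C}\setminus\{0\})^n$, contradicting the assumption. By Hilbert's Nullstellensatz (the weak form, which follows from the stated version applied to the unit ideal), the ideal generated by $\{\tilde{f_i}\}_{i\in I}\cup\{g\}$ is all of $\mathbb{C}[x_1,\ldots,x_n,y]$; in particular $1$ lies in it. Since ideal membership is witnessed by a finite expression, there are polynomials $q_1,\ldots,q_k, r$ and indices $i_1,\ldots,i_k$ with
\begin{equation*}
\tilde{f_{i_1}} q_1 + \cdots + \tilde{f_{i_k}} q_k + g\, r = 1.
\end{equation*}

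Finally I eliminate $y$ and return to the Laurent ring. In the displayed identity substitute $y = (x_1 \cdots x_n)^{-1}$, which makes $g$ vanish and is legitimate in $\mathbb{C}[x_1^{\pm 1},\ldots,x_n^{\pm 1}]$. This yields $\sum_{j=1}^k \tilde{f_{i_j}}\, q_j' = 1$ where $q_j'$ is $q_j$ with that substitution, now a Laurent polynomial. Undoing the monomial rescaling $\tilde{f_{i_j}} = x_1^{N}\cdots x_n^{N} f_{i_j}$ and absorbing the monomial units into new Laurent polynomial coefficients $p_j := x_1^{N}\cdots x_n^{N} q_j'$ gives $f_{i_1}p_1 + \cdots + f_{i_k}p_k = 1$, as desired. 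The only mildly delicate point is bookkeeping with the monomial rescalings and confirming that every substitution and multiplication stays inside the Laurent ring; the essential mathematical content is just the weak Nullstellensatz combined with the Rabinowitsch trick, so there is no serious obstacle.
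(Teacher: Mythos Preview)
Your proof is correct, but it takes a different route from the paper's.  Both arguments begin by clearing denominators to pass to ordinary polynomials, but they handle the restriction to the torus $(\mathbb{C}\setminus\{0\})^n$ differently.  The paper multiplies each $f_i$ by one \emph{extra} power of $x_1\cdots x_n$ beyond what is needed to clear denominators, so that the resulting polynomial $g_i$ is divisible by $x_1\cdots x_n$; the common zero set of the $g_i$ in $\mathbb{C}^n$ is then exactly the union of coordinate hyperplanes, and the strong Nullstellensatz yields $(x_1\cdots x_n)^m$ in the ideal generated by the $g_i$, after which one divides through.  You instead adjoin an auxiliary variable $y$ and the polynomial $1 - y\,x_1\cdots x_n$ (the Rabinowitsch trick), obtain an empty vanishing locus in $\mathbb{C}^{n+1}$, invoke only the weak Nullstellensatz to put $1$ in the ideal, and then specialize $y = (x_1\cdots x_n)^{-1}$.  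Both approaches are standard and about equally short; the paper's avoids the extra variable, while yours needs only the weak form of the Nullstellensatz and makes the localization step more transparent.
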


\begin{proof}
For each $i \in I$, consider a sufficiently large positive integer $n_{i}$
which makes $(x_{1} \cdots x_{n})^{n_{i}-1} f_{i} \in \mathbb{C}[x_{1},
\ldots, x_{n}]$. Then $g_{i} = (x_{1} \cdots x_{n})^{n_{i}} f_{i}$ is not
only a polynomial, but also a multiple of $x_{1} \cdots x_{n}$. Consider the
ideal $J \subseteq \mathbb{C}[x_{1}, \ldots, x_{n}]$ generated by the
polynomials $g_{i}$. By the condition, there is no $\mathbf{x} \in (\mathbb{C%
}\setminus \{0\})^{n}$ that makes $g_{i}(\mathbf{x}) = 0$ for all $i \in I$.
On the other hand, $g_{i}(\mathbf{x})=0$ if any one of $x_{1}, \ldots, x_{n}$
is zero since the polynomial is a multiple of $x_{1} \cdots x_{n}$. Thus it
follows that 
\begin{equation*}
\mathcal{V}(J) = \{(x_{1}, \ldots, x_{n}) \in \mathbb{C}^{n} : x_{1} x_{2}
\cdots x_{n} = 0\}.
\end{equation*}
By Hilbert's Nullstellensatz, $x_{1} \cdots x_{n} \in \mathcal{I}(\mathcal{V}%
(J)) = \sqrt{J}$, i.e., there exists a positive integer $m$ for which $%
(x_{1} \cdots x_{n})^{m} \in J$.

Let $q_{1}, \ldots, q_{k}$ and $i_{1}, \ldots, i_{k}$ be the polynomials and
indices which make 
\begin{align*}
(x_{1} &\cdots x_{n})^{m} = g_{i_{1}} q_{1} + \cdots + g_{i_{k}} q_{k} \\
&= (x_{1} \cdots x_{n})^{n_{i_{1}}} f_{i_{1}} q_{1} + \cdots + (x_{1} \cdots
x_{n})^{n_{i_{k}}} f_{i_{k}} q_{k}.
\end{align*}
Then dividing both sides by $(x_{1} \cdots x_{n})^{m}$, we get 
\begin{equation*}
1 = f_{i_{1}} \frac{q_{1}}{(x_{1} \cdots x_{n})^{m-{n_{i_{1}}}}} + \cdots +
f_{i_{k}} \frac{q_{k}}{(x_{1} \cdots x_{n})^{m-{n_{i_{k}}}}}. \qedhere
\end{equation*}
\end{proof}

The following statement is the main theorem of this subsection.

\begin{theorem}
\label{D} Let $V \subset \mathbb{Z}^{n}$ be a tile with at least $2$
elements. Then there is a tiling of $\mathbb{Z}^{n}$ by translates of $V$
only if there exists $(x_{1}, \ldots, x_{n}) \in (\mathbb{C} \setminus
\{0\})^{n}$ such that $Q_{V}(x_{1}^{a}, \ldots, x_{n}^{a}) = 0$
simultaneously for all $a$ relatively prime to $\lvert V \rvert$.
\end{theorem}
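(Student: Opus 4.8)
The plan is to argue by contradiction using the linear functional $T = T_{\mathcal{T}}$ attached to a hypothetical tiling $\mathcal{T}=\{V+\mathbf{l}:\mathbf{l}\in\mathcal{L}\}$, together with the Laurent-polynomial Nullstellensatz (Lemma~\ref{lem:LaurentNulls}). Suppose there is a tiling of $\mathbb{Z}^n$ by translates of $V$ but that no $(x_1,\ldots,x_n)\in(\mathbb{C}\setminus\{0\})^n$ is a common zero of all the polynomials $Q_V(x_1^a,\ldots,x_n^a)$ as $a$ ranges over integers coprime to $\lvert V\rvert$. Then by Lemma~\ref{lem:LaurentNulls} applied to this family $\{f_a\}$, there exist finitely many exponents $a_1,\ldots,a_k$ coprime to $\lvert V\rvert$ and Laurent polynomials $p_1,\ldots,p_k$ with
\begin{equation*}
\sum_{j=1}^{k} p_j(x_1,\ldots,x_n)\, Q_V(x_1^{a_j},\ldots,x_n^{a_j}) = 1.
\end{equation*}

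Now I would apply $T$ to both sides of this identity. The right-hand side is the constant polynomial $1$, so $T(1)=1$ (this is the value of $T$ on the monomial $x_1^0\cdots x_n^0$, which is $1$ precisely when $\mathbf{0}\in\mathcal{L}$; since translating the tiling we may assume $\mathbf{0}\in\mathcal{L}$, or more cleanly just note $T$ of any nonzero constant $c$ equals $c$ times that indicator — I will normalize so that $\mathbf{0}\in\mathcal L$). For the left-hand side, linearity of $T$ gives $\sum_{j=1}^k T\bigl(p_j\, Q_V(x_1^{a_j},\ldots,x_n^{a_j})\bigr)$, and Theorem~\ref{C} — which is exactly the statement that $T\bigl(P\,Q_V(x_1^a,\ldots,x_n^a)\bigr)=P(1,\ldots,1)$ for every $a$ coprime to $\lvert V\rvert$ and every Laurent polynomial $P$ — evaluates each term as $p_j(1,\ldots,1)$. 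Hence
\begin{equation*}
\sum_{j=1}^{k} p_j(1,\ldots,1) = 1.
\end{equation*}

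So far there is no contradiction; the point is that we can substitute $(x_1,\ldots,x_n)=(1,\ldots,1)$ directly into the polynomial identity above: since $Q_V(1^{a_j},\ldots,1^{a_j}) = Q_V(1,\ldots,1) = \lvert V\rvert$ for each $j$, the identity becomes $\lvert V\rvert\sum_{j=1}^k p_j(1,\ldots,1) = 1$. Combined with $\sum_j p_j(1,\ldots,1)=1$ from the previous paragraph, this forces $\lvert V\rvert = 1$, contradicting the hypothesis that $V$ has at least two elements. This establishes the theorem.

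The main subtlety — the step I expect to need the most care — is the bookkeeping around the constant term of $T$ and the normalization $\mathbf{0}\in\mathcal{L}$, together with making sure Lemma~\ref{lem:LaurentNulls} is invoked with the correct index set (all $a$ coprime to $\lvert V\rvert$, where negative $a$ are allowed, consistently with Theorem~\ref{C} which was proved for all such $a$). One should also double-check that the identity $\sum_j p_j Q_V(x_1^{a_j},\ldots,x_n^{a_j})=1$ may be multiplied/evaluated freely in the Laurent ring and that applying $T$ is legitimate term-by-term, which is immediate from $\mathbb{Z}$-linearity of $T$ and the fact that every Laurent polynomial is a finite $\mathbb{Z}$-linear combination of monomials. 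No deeper obstacle arises: once Theorem~\ref{C} and Lemma~\ref{lem:LaurentNulls} are in hand, the contrapositive is a two-line evaluation argument.
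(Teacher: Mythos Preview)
Your proof is correct and follows essentially the same route as the paper: invoke Lemma~\ref{lem:LaurentNulls} to write $1=\sum_j p_j\,Q_V(x_1^{a_j},\ldots,x_n^{a_j})$, evaluate at $(1,\ldots,1)$ to get $\sum_j p_j(1,\ldots,1)=1/\lvert V\rvert$, and then apply $T$ together with Theorem~\ref{C} to force a contradiction. The only cosmetic difference is that you normalize so that $\mathbf{0}\in\mathcal{L}$ and read off $T(1)=1$, whereas the paper multiplies the identity by an arbitrary monomial $M$ and observes $T(M)=1/\lvert V\rvert$, which is never an integer; the latter avoids the normalization step you flagged as a subtlety.
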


\begin{proof}
Assume that there is no $(x_{1}, \ldots, x_{n}) \in (\mathbb{C}\setminus
\{0\})^{n}$ such that $Q_{V}(x_{1}^{a}, \ldots, x_{n}^{a}) = 0$
simultaneously for all $a$ relatively prime to $\lvert V \rvert$. By Lemma~%
\ref{lem:LaurentNulls}, we obtain Laurent polynomials $P_{1}, \ldots, P_{t}$
and integers $a_{1}, \ldots, a_{t}$ relatively prime with $\lvert V \rvert$
for which 
\begin{equation}
P_{1} Q(x_{1}^{a_{1}}, \ldots, x_{n}^{a_{1}}) + \cdots + P_{t}
Q(x_{1}^{a_{t}}, \ldots, x_{n}^{a_{t}}) = 1.  \label{aa}
\end{equation}
Replacing all $x_{1}, \ldots, x_{n}$ with $1$, we get 
\begin{equation}
P_{1}(1, \ldots, 1) + \cdots + P_{t}(1, \ldots, 1) = 1 / \lvert V \rvert.
\label{a}
\end{equation}

Suppose that there exists a tiling of $\mathbb{Z}^{n}$ by translates of $V$.
By \eqref{aa} and \eqref{a}, for any monomial $M$, 
\begin{align*}
&T(M) \\
& =T(M(P_{1}Q(x_{1}^{a_{1}},\ldots ,x_{n}^{a_{1}})+\cdots
+P_{t}Q(x_{1}^{a_{t}},\ldots ,x_{n}^{a_{t}}))) \\
& =T(MP_{1}Q(x_{1}^{a_{1}},\ldots ,x_{n}^{a_{1}}))+\cdots
+T(MP_{t}Q(x_{1}^{a_{t}},\ldots ,x_{n}^{a_{t}})) \\
& =P_{1}(1,\ldots 1)+\cdots +P_{t}(1,\ldots ,1) \quad \text{(by Theorem~\ref%
{C})} \\
& =1/\lvert V\rvert.
\end{align*}%
Because this is not an integer, as $\lvert V\rvert \geq 2 $, we arrive at a
contradiction.
\end{proof}

Finally we note that it can be proved that there exists a common zero $%
(x_{1},\ldots ,x_{n})\in (\mathbb{C}\setminus \{0\})^{n}$ to $%
Q_{V}(x_{1}^{a},\ldots ,x_{n}^{a})=0$ for $\gcd (a,\lvert V\rvert )=1$ if
and only if there is a common zero $(x_{1},\ldots ,x_{n})$ with $\lvert
x_{i}\rvert =1$ for all $i$. Therefore we have a slightly stronger statement.

\begin{theorem}
Let $V\subseteq \mathbb{Z}^{n}$ be a tile with at least $2$ elements. There
exists a tiling of $\mathbb{Z}^{n}$ by translates of $V$ only if there
exists a $(x_{1},\ldots ,x_{n})\in \mathbb{C}^{n}$ with $\lvert x_i \rvert
=1 $ for all $i$, such that $Q_{V}(x_{1}^{a},\ldots ,x_{n}^{a})=0$ for all $%
\gcd (a,\lvert V\rvert )=1$.
\end{theorem}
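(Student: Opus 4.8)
The plan is to upgrade Theorem~\ref{D}: I will show that a common zero of the Laurent polynomials $Q_{V}(x_{1}^{a},\dots,x_{n}^{a})$, taken over all $a$ with $\gcd(a,\lvert V\rvert)=1$, lying in $(\mathbb{C}\setminus\{0\})^{n}$ can always be replaced by one lying on the unit torus $\mathbb{T}^{n}=\{(x_{1},\dots,x_{n}):\lvert x_{i}\rvert=1\text{ for all }i\}$; the converse implication is trivial because $\mathbb{T}^{n}\subseteq(\mathbb{C}\setminus\{0\})^{n}$. So assume $\mathbb{Z}^{n}$ is tiled by translates of $V$, where $\lvert V\rvert\geq 2$. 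By Theorem~\ref{D} the set
\begin{equation*}
Z=\{\mathbf{x}\in(\mathbb{C}\setminus\{0\})^{n}:Q_{V}(x_{1}^{a},\dots,x_{n}^{a})=0\text{ for all }\gcd(a,\lvert V\rvert)=1\}
\end{equation*}
is nonempty, and I would first record that $Z$ is Zariski closed in $(\mathbb{C}\setminus\{0\})^{n}$, being an intersection of vanishing loci of Laurent polynomials.

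The key observation is a power-map invariance. If $\gcd(a,\lvert V\rvert)=1$ then $\gcd\bigl(a(1+\lvert V\rvert t),\lvert V\rvert\bigr)=1$ for every $t\in\mathbb{Z}$, so any $\mathbf{x}\in Z$ satisfies $\mathbf{x}^{\,1+\lvert V\rvert t}\in Z$ for all $t$; that is, $\mathbf{x}\cdot\langle\mathbf{x}^{\lvert V\rvert}\rangle\subseteq Z$. Since $Z$ is Zariski closed and translation by $\mathbf{x}$ is a Zariski automorphism of $(\mathbb{C}\setminus\{0\})^{n}$, it follows that $\mathbf{x}\cdot H\subseteq Z$, where $H$ is the Zariski closure of the cyclic group $\langle\mathbf{x}^{\lvert V\rvert}\rangle$. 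The point of passing to the closure is that $H$ is now an \emph{algebraic} subgroup of $(\mathbb{C}\setminus\{0\})^{n}$: its identity component $H^{\circ}$ is a subtorus and $H/H^{\circ}$ is finite.

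To conclude I would use the logarithm map $\operatorname{Log}\colon(\mathbb{C}\setminus\{0\})^{n}\to\mathbb{R}^{n}$, $\mathbf{z}\mapsto(\log\lvert z_{1}\rvert,\dots,\log\lvert z_{n}\rvert)$, which is a group homomorphism. Put $U=\operatorname{Log}(H^{\circ})$; parametrizing the subtorus $H^{\circ}$ by monomial coordinates shows that $U$ is a linear subspace of $\mathbb{R}^{n}$ and that $\operatorname{Log}$ maps $H^{\circ}$ \emph{onto} $U$. Because $H/H^{\circ}$ is finite, $\operatorname{Log}$ carries all of $H$ into $U$, so $\lvert V\rvert\cdot\operatorname{Log}(\mathbf{x})=\operatorname{Log}(\mathbf{x}^{\lvert V\rvert})\in U$, whence $\mathbf{w}:=\operatorname{Log}(\mathbf{x})$ and therefore $-\mathbf{w}$ lie in $U$. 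Choosing $\mathbf{h}\in H^{\circ}$ with $\operatorname{Log}(\mathbf{h})=-\mathbf{w}$ gives $\mathbf{x}\mathbf{h}\in\mathbf{x}\cdot H\subseteq Z$ and $\operatorname{Log}(\mathbf{x}\mathbf{h})=\mathbf{w}-\mathbf{w}=\mathbf{0}$, i.e.\ $\mathbf{x}\mathbf{h}\in Z\cap\mathbb{T}^{n}$, as required. (When $\mathbf{x}^{\lvert V\rvert}$ has finite order the argument degenerates: $H$ is finite, every $x_{i}$ is a root of unity, and $\mathbf{x}$ itself already lies on $\mathbb{T}^{n}$.)

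The genuinely substantive steps — and where I expect the care to be needed — are two. First, packaging the invariance as the inclusion $\mathbf{x}\cdot H\subseteq Z$ with $H$ an honest algebraic subgroup; this is the conceptual crux, since the Zariski closure of the arithmetic-progression orbit $\mathbf{x}\langle\mathbf{x}^{\lvert V\rvert}\rangle$ manufactures a subtorus coset inside $Z$ whose ``hyperbolic'' directions can be used to cancel those of $\mathbf{x}$. Second, the compatibility of $\operatorname{Log}$ with the structure theory of algebraic subgroups of $(\mathbb{C}\setminus\{0\})^{n}$: that the identity component is a subtorus mapping onto the linear space $U$, and that the finite quotient $H/H^{\circ}$ is killed by $\operatorname{Log}$. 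Both ingredients are classical, so beyond assembling them correctly I anticipate no real obstacle; the slight subtlety worth double-checking is that $\operatorname{Log}$ restricted to a subtorus is genuinely surjective onto the corresponding rational subspace.
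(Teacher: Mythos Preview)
Your argument is correct. The invariance observation $\mathbf{x}^{1+\lvert V\rvert t}\in Z$ is sound, the Zariski closure of the cyclic subgroup $\langle\mathbf{x}^{\lvert V\rvert}\rangle$ is indeed a closed algebraic subgroup of $(\mathbb{C}^{\ast})^{n}$, and your use of the $\operatorname{Log}$ map together with the structure theory of algebraic subtori is accurate. The only point you flag yourself---surjectivity of $\operatorname{Log}$ from the subtorus $H^{\circ}$ onto the associated rational subspace---is genuinely fine: parametrizing $H^{\circ}$ by a monomial map from $(\mathbb{C}^{\ast})^{d}$ and evaluating at positive reals already surjects onto the image subspace.

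That said, the paper takes a much more elementary route that you may wish to know. It simply shows that if $(x_{1},\ldots,x_{n})$ is a common zero then so is $(x_{1}/\lvert x_{1}\rvert,\ldots,x_{n}/\lvert x_{n}\rvert)$, by writing $Q_{V}=\sum_{i}m_{i}$ as a sum of monomials, observing that $\sum_{i}m_{i}^{a}(m_{i}^{\lvert V\rvert})^{k}=0$ for all $k\in\mathbb{Z}$ (since $a+k\lvert V\rvert$ stays coprime to $\lvert V\rvert$), and using linear independence of the characters $k\mapsto c^{k}$ to force $\sum_{i:\,m_{i}^{\lvert V\rvert}=c}m_{i}^{a}=0$ for each value $c$. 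Within each such group the $m_{i}$ have common modulus $\lvert c\rvert^{1/\lvert V\rvert}$, so dividing through by that modulus preserves the vanishing. This avoids any appeal to algebraic groups and produces an explicit point on $\mathbb{T}^{n}$. Your approach, by contrast, is non-constructive (the element $\mathbf{h}\in H^{\circ}$ is only known to exist) but is conceptually cleaner and would generalize more readily to situations where the explicit radial normalization trick is unavailable.
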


\begin{proof}
As the proof is tedious, we only provide a sketch. We need only to prove
that if there is a common solution $\mathbf{x} \in (\mathbb{C} \setminus
\{0\})^n$ to $Q_V(x_1^a, \ldots, x_n^a) = 0$, then there is also a common
solution with $\lvert x_i \rvert = 1$ for all $i$. If we write 
\begin{equation*}
Q_V(x_1, \ldots, x_n) = m_1 + m_2 + \cdots + m_{\lvert V \rvert},
\end{equation*}
where $m_i$ are monomials in $x_1, \ldots, x_n$, then we can also write 
\begin{equation*}
Q_V(x_1^a, \ldots, x_n^a) = m_1^a + m_2^a + \cdots + m_{\lvert V \rvert}^a.
\end{equation*}
Because this is zero for all $a = k \lvert V \rvert + 1$, 
\begin{equation*}
m_1 (m_1^{\lvert V \rvert})^k + \cdots + m_{\lvert V \rvert} (m_{\lvert V
\rvert}^{\lvert V \rvert})^k = 0
\end{equation*}
for all $k \in \mathbb{Z}$. Thus if we group $m_i$ by its $\lvert V \rvert$%
-th powers, the powers of $m_i$ contained in a single group shall add up to $%
0$. This means that if we replace $m_i$ with $m_i / \lvert m_i \rvert$,
their powers still add up to zero. Therefore if $(x_1, \ldots, x_n)$ is a
common solution, then 
\begin{equation*}
\Bigl( \frac{x_1}{\lvert x_1 \rvert}, \ldots, \frac{x_n}{\lvert x_n \rvert} %
\Bigr)
\end{equation*}
is also a common solution.
\end{proof}

\begin{example}
To demonstrate that the above condition is only a necessary one, consider
the Lee sphere $V=S(3,2)$. We know that there is no $PL(3,2)$-code, i.e., no
tiling of $\mathbb{Z}^{3}$ by $S(3,2)$. However there is a common root of $%
Q_{V}(x^{a},y^{a},z^{a})=0$ for all $5\nmid a$; take $x=1$, $y=e^{2\pi i/5}$%
, and $z=e^{4\pi i/5}$ for example.
\end{example}

One of the main strength of the above theorem is that it is not limited by a
special size or by a special shape of the tile. On the other hand, it is
difficult to see whether the system has a common root except in special
cases. We will see, Theorem~\ref{thm:primelattice}, that the conditions
simplifies if the size of the tile is prime. If it is composite, the
condition is hard to interpret. Therefore, it will require additional
research to enable one to apply this theorem toward the Golomb--Welch
conjecture.

\subsection{Fourier Analysis in Tilings}

\label{subsec:Fourier}

To our best knowledge Fourier analysis has been used first time in the area
of tilings by translates by Lagarias and Wang \cite{Lagarias}, and then by
Kolountzakis and Lagarias \cite{K-L}. In both these papers a tiling of the
line by a function is studied. We note for the interested reader that an
introduction to the application of Fourier analysis in tilings has been
given in \cite{Kol}. Using methods described in \cite{Kol} we have found a
sufficient condition for a generic (arbitrary) tile $V$ such that each
tiling of $Z^{n}$ by $V$ is periodic.

\begin{theorem}
\label{thm:Fourier}Let $V$ be a tile. Suppose there exist only finitely many 
$(z_{1},\ldots ,z_{n})$ with $\lvert \mathbf{z}\rvert =1$ that satisfy 
\begin{equation*}
Q_{V}(z_{1}^{k},\ldots ,z_{n}^{k})=0
\end{equation*}%
simultaneously for all $k$ with $\gcd (k,\lvert V\rvert )=1$. Then every
tiling by $V$ is periodic.
\end{theorem}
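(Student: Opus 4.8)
The plan is to translate a translational tiling of $\mathbb{Z}^{n}$ by $V$ into a statement about the Fourier transform of its translation set, and then to exploit the ``blowout'' Corollary~\ref{blowout} to pin that transform down to the spectral set appearing in the hypothesis, which is finite by assumption. Write a tiling as $\mathcal{T}=\{V+\mathbf{l}:\mathbf{l}\in\mathcal{L}\}$ and encode it by the $\{0,1\}$-valued function $\chi_{\mathcal{L}}\in\ell^{\infty}(\mathbb{Z}^{n})$; the tiling condition is exactly the convolution identity $\chi_{V}\ast\chi_{\mathcal{L}}=\mathbf{1}$. Passing to the dual group $\widehat{\mathbb{Z}^{n}}=\mathbb{T}^{n}$, the transform $\widehat{\chi_{\mathcal{L}}}$ is a pseudomeasure (not \emph{a priori} a measure), $\widehat{\chi_{V}}(\xi)=Q_{V}(e^{2\pi i\xi_{1}},\dots,e^{2\pi i\xi_{n}})$ is a trigonometric polynomial, and the identity becomes $Q_{V}(e^{2\pi i\xi})\cdot\widehat{\chi_{\mathcal{L}}}=c\,\delta_{0}$ for a nonzero constant $c$. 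Wherever $Q_{V}(e^{2\pi i\xi_{0}})\neq 0$ one can divide locally by $Q_{V}$, which is smooth and nonvanishing near $\xi_{0}$, forcing $\widehat{\chi_{\mathcal{L}}}$ to vanish near $\xi_{0}$ (the $\delta_{0}$ term being absent there). Hence $\operatorname{supp}(\widehat{\chi_{\mathcal{L}}})\setminus\{0\}\subseteq\{\xi:Q_{V}(e^{2\pi i\xi})=0\}$. This is the standard Fourier dictionary for tilings (cf.~\cite{Kol}).

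The crucial step is to run this argument for \emph{every} blowout simultaneously. By Corollary~\ref{blowout}, for each integer $a$ with $\gcd(a,\lvert V\rvert)=1$ the set $\mathcal{T}_{a}=\{aV+\mathbf{l}:\mathbf{l}\in\mathcal{L}\}$ is again a tiling of $\mathbb{Z}^{n}$ with the \emph{same} translation set $\mathcal{L}$, and $Q_{aV}(x_{1},\dots,x_{n})=Q_{V}(x_{1}^{a},\dots,x_{n}^{a})$ as computed in the proof of that corollary. Applying the previous paragraph to $\mathcal{T}_{a}$ for all such $a$ and intersecting the conclusions gives
\[
\operatorname{supp}(\widehat{\chi_{\mathcal{L}}})\setminus\{0\}\subseteq\bigl\{(z_{1},\dots,z_{n}):\lvert z_{j}\rvert=1,\ Q_{V}(z_{1}^{a},\dots,z_{n}^{a})=0\text{ for all }a\text{ coprime to }\lvert V\rvert\bigr\},
\]
where one needs only $a>0$, since $Q_{V}(z^{-a})=\overline{Q_{V}(z^{a})}$ when $\lvert z_{j}\rvert=1$. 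By hypothesis this set is finite, so $\widehat{\chi_{\mathcal{L}}}$ is a pseudomeasure with finite support. A pseudomeasure on $\mathbb{T}^{n}$ supported on a finite set must be a finite linear combination of point masses: a distribution supported at a point is a combination of $\delta$ and its derivatives, and boundedness of the inverse transform $\chi_{\mathcal{L}}$ rules out the derivatives. Thus $\widehat{\chi_{\mathcal{L}}}=\sum_{j=1}^{m}c_{j}\delta_{\xi_{j}}$, and therefore $\chi_{\mathcal{L}}(x)=\sum_{j=1}^{m}c_{j}e^{2\pi i\xi_{j}\cdot x}$ is a trigonometric polynomial (with, say, $\xi_{1}=0$ and $c_{1}=1/\lvert V\rvert$ equal to the density of $\mathcal{L}$, which is nonzero).

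Finally I would deduce periodicity from $\chi_{\mathcal{L}}$ being a $\{0,1\}$-valued trigonometric polynomial. Let $\phi:\mathbb{Z}^{n}\to\mathbb{T}^{m}$, $\phi(x)=(e^{2\pi i\xi_{1}\cdot x},\dots,e^{2\pi i\xi_{m}\cdot x})$, and let $K=\overline{\phi(\mathbb{Z}^{n})}\le\mathbb{T}^{m}$, a compact abelian group in which $\phi(\mathbb{Z}^{n})$ is dense. The continuous function $G(z)=\sum_{j}c_{j}z_{j}$ on $K$ satisfies $G\circ\phi=\chi_{\mathcal{L}}$, so $G$ is $\{0,1\}$-valued on a dense subset and hence on all of $K$; being continuous with values in a discrete set, $G$ is locally constant, hence constant on each connected component of $K$, hence factors through the finite component group $\pi_{0}(K)=K/K^{\circ}$. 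Consequently $\chi_{\mathcal{L}}$ factors through the homomorphism $\mathbb{Z}^{n}\to\pi_{0}(K)$, whose kernel $L$ has finite index, so $\chi_{\mathcal{L}}$ is $L$-periodic; in particular $\mathcal{L}$, and with it the tiling $\mathcal{T}$, is periodic. I expect the main technical obstacles to be the rigorous Fourier bookkeeping in the first paragraph (interpreting $\widehat{\chi_{\mathcal{L}}}$ and the identity $\widehat{\chi_{V}}\,\widehat{\chi_{\mathcal{L}}}=c\,\delta_{0}$ correctly in the pseudomeasure setting) and the passage from finite spectral support to an honest periodic trigonometric polynomial; the genuinely new ingredient — and the reason the hypothesis involves the exponents $a$ — is the use of Corollary~\ref{blowout} in the second paragraph.
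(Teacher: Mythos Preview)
The paper does not actually prove Theorem~\ref{thm:Fourier} here; immediately after the statement it says ``A proof of this result is rather involved. It is a part of a manuscript where we describe our results on translational tilings obtained by Fourier analysis \cite{H-K2}.'' So there is no in-paper proof to compare against, only the indication that the intended argument is Fourier-analytic in the style of \cite{Kol}.

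Your proposal is precisely that kind of argument, and it is essentially correct. The three moves you isolate are the right ones: (i) the convolution identity $\chi_{V}\ast\chi_{\mathcal{L}}=\mathbf{1}$ gives $\widehat{\chi_{V}}\cdot\widehat{\chi_{\mathcal{L}}}=c\,\delta_{0}$ and hence $\operatorname{supp}\widehat{\chi_{\mathcal{L}}}\setminus\{0\}\subseteq\{\widehat{\chi_{V}}=0\}$; (ii) Corollary~\ref{blowout} lets you replace $V$ by $aV$ for every $a$ coprime to $\lvert V\rvert$ while keeping the \emph{same} $\mathcal{L}$, so the support is trapped in the simultaneous zero set appearing in the hypothesis; (iii) a pseudomeasure with finite support is a finite sum of point masses, and a bounded $\{0,1\}$-valued trigonometric polynomial on $\mathbb{Z}^{n}$ is periodic. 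Step~(ii) is indeed the one place where the paper's own machinery enters, and you have identified it correctly; this is almost certainly how the authors intend the hypothesis on the exponents $k$ to be used.

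The only places that need care are exactly the two you flag. First, the distributional bookkeeping: $\widehat{\chi_{\mathcal{L}}}$ is a tempered distribution on $\mathbb{T}^{n}$ with bounded Fourier coefficients, and multiplication by the smooth $\widehat{\chi_{V}}$ is well defined; this is standard and laid out in \cite{Kol}. Second, the passage from finite spectral support to periodicity. Your argument via $K=\overline{\phi(\mathbb{Z}^{n})}\leq\mathbb{T}^{m}$ and $\pi_{0}(K)$ is clean and correct; the one sentence worth adding is that $K$, being a closed subgroup of $\mathbb{T}^{m}$, is a compact Lie group, so $K^{\circ}$ is open and $\pi_{0}(K)$ is finite, whence the kernel of $\mathbb{Z}^{n}\to\pi_{0}(K)$ has finite index. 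Given the paper's remark that the proof is ``rather involved,'' the authors may package these analytic steps differently in \cite{H-K2}, but your outline hits the same target by the same route.
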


A proof of this result is rather involved. It is a part of a manuscript
where we describe our results on translational tilings obtained by Fourier
analysis \cite{H-K2}. The above theorem illustrates possibilities how
Fourier analysis can contribute to a solution of the Golomb--Welch conjecture.

\subsection{Tiles of Prime Size}

\label{subsec:primetile}

As we have seen in Section~\ref{sec:numbertheoretic}-A, algebraic properties
of the tile $V$ place interesting restrictions on the translational tiling.
In this subsection, we focus on a particularly restrictive case, when the
size of the tile is prime. As we have already seen, the conclusion of
Corollary~\ref{blowout} holds for all integers $a$ relatively prime to $%
\lvert V\rvert $. This suggests that there are more restrictions on the
tiling when $\lvert V\rvert $ has fewer prime divisors. But in the extreme
case, when $\lvert V\rvert $ is a prime number, the situation becomes more
interesting. For more details of the results presented in the current
section, we refer the reader to \cite{H-K}.

The following theorem has been first proved by Szegedy \cite{Szegedy}.
However, using the language of polynomials makes the proof more natural. We
note that an identical proof can be found in \cite{KS}.

\begin{theorem}
(\cite{Szegedy}, \cite{KS}, \cite{H-K}) \label{thm:period} Let $V\subset 
\mathbb{Z}^{n}$ be a finite set, and $\mathcal{T}=\{V+l:l\in \mathcal{L\}}$
be a tiling of $\mathbb{Z}^{n}$ by translates of $V$. If $\lvert V\rvert =p$
is prime, then $p(\mathbf{v}-\mathbf{w})$ is a period of $\mathcal{T}$ for
any $\mathbf{v},\mathbf{w}\in V$.
\end{theorem}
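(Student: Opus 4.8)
The plan is to use the polynomial method exactly as developed in this subsection, exploiting the primality of $\lvert V\rvert = p$ at the one place where it matters: an integer is relatively prime to $p$ as long as it is not divisible by $p$. First I would recall the identity $T(PQ_V) = P(1,\dots,1)$ for all Laurent polynomials $P$, together with the sharper Theorem~\ref{C}, which gives $T(PQ_V(x_1^{a},\dots,x_n^{a})) = P(1,\dots,1)$ for every $a$ with $\gcd(a,p)=1$. The point is that for prime $p$ this hypothesis on $a$ is extremely weak: it only excludes multiples of $p$.

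Next I would translate "$p(\mathbf{v}-\mathbf{w})$ is a period of $\mathcal{T}$" into a statement about $T$. Writing $\mathbf{v}-\mathbf{w}$ in coordinates and letting $M_{\mathbf{v}-\mathbf{w}}$ denote the corresponding monomial $x_1^{v_1-w_1}\cdots x_n^{v_n-w_n}$, periodicity is equivalent to
\begin{equation*}
T(M \cdot M_{\mathbf{v}-\mathbf{w}}^{\,p}) = T(M)
\end{equation*}
for every monomial $M$ — i.e. shifting the lattice $\mathcal{L}$ by $p(\mathbf{v}-\mathbf{w})$ does not change which cosets are hit. To prove this, I would start from Corollary~\ref{blowout} / Corollary~\ref{CC} (both valid here since $\gcd(a,p)=1$ is so easy to satisfy): the blown-up tile $pV$ still tiles $\mathbb{Z}^n$ along the same lattice $\mathcal{L}$. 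So $pV + \mathcal{L} = \mathbb{Z}^n$ is a tiling, and in particular for each monomial $M$ exactly one element of $p(-V) + (\text{exponent of }M)$ lies in $\mathcal{L}$; equivalently $T(M Q_V(x_1^{p},\dots,x_n^{p})) = 1$ for all $M$. Now compare the $\mathbf{v}$-term and the $\mathbf{w}$-term of $Q_V(x_1^p,\dots,x_n^p)$: writing $Q_V(x_1^p,\dots,x_n^p) = \sum_{\mathbf{u}\in(-V)} M_{\mathbf{u}}^{\,p}$, the fact that $T$ takes nonnegative-coefficient polynomials to nonnegative values, combined with $T(M Q_V(x_1^p,\dots,x_n^p))=1$, forces a rigidity: for each monomial $M$, exactly one term $M \cdot M_{\mathbf{u}}^{\,p}$ has $T$-value $1$ and all others have $T$-value $0$. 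But one also has $T(M' Q_V) = 1$ for the monomial $M'$ with $M'\,M_{\mathbf{u}}^{\,p}$ picked out — and this lets one "move" the distinguished index $\mathbf{u}$ coherently as $M$ varies. Tracking this, for any two $\mathbf{v},\mathbf{w}\in V$ the sets $\{M : T(M\,M_{\mathbf{v}}^{\,p})=1\}$ and $\{M : T(M\,M_{\mathbf{w}}^{\,p})=1\}$ are translates of each other by $M_{\mathbf{v}-\mathbf{w}}^{\,p}$, which is precisely the periodicity assertion $T(M \cdot M_{\mathbf{v}-\mathbf{w}}^{\,p}) = T(M)$.

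The main obstacle is organizing this last rigidity argument cleanly: one must show that the "distinguished exponent" assigned to each coset is itself translation-equivariant, so that subtracting $M_{\mathbf{v}}^{\,p}$ versus $M_{\mathbf{w}}^{\,p}$ produces a genuine global translation of $\mathcal{L}$ rather than just a local coincidence. The cleanest route is probably to phrase it as: $pV$ tiles with lattice $\mathcal{L}$, so the quotient $\mathbb{Z}^n / \mathcal{L}$ is in bijection (via $T$) with $p(-V) = \{p\mathbf{u} : \mathbf{u}\in(-V)\}$, and this bijection is $\mathcal{L}$-equivariant; hence $p\mathbf{v}$ and $p\mathbf{w}$, lying in the same transversal $p(-V)$ (up to sign), differ by an element $p(\mathbf{w}-\mathbf{v})$ that must act trivially on $\mathbb{Z}^n/\mathcal{L}$, i.e. $p(\mathbf{v}-\mathbf{w}) \in$ the period group of $\mathcal{T}$. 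Primality is used only to guarantee that every relevant blow-up factor (and every $a \not\equiv 0 \pmod p$) is admissible in Theorem~\ref{C}; the rest is the bijection-plus-equivariance bookkeeping.
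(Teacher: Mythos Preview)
There is a genuine gap. Your central move is to invoke Corollary~\ref{blowout} with blow-up factor $a=p$ to conclude that $pV$ tiles $\mathbb{Z}^n$ along the same set $\mathcal{L}$, i.e.\ that $T(MQ_V(x_1^{p},\dots,x_n^{p}))=1$ for every monomial $M$. But Corollary~\ref{blowout} (and Theorem~\ref{C} underneath it) requires $\gcd(a,\lvert V\rvert)=1$, and here $a=p=\lvert V\rvert$, so the hypothesis fails precisely at the one value you need. In fact the conclusion is false: take $V=\{0,1\}\subset\mathbb{Z}$ with $p=2$ and $\mathcal{L}=2\mathbb{Z}$; then $2V=\{0,2\}$ and $2V+\mathcal{L}=2\mathbb{Z}$ covers even integers twice and odd integers not at all. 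In general $T(MQ_V(x_1^{p},\dots,x_n^{p}))$ is either $0$ or $p$, never $1$, so the ``exactly one distinguished term'' picture and the subsequent equivariance/bijection argument with $\mathbb{Z}^n/\mathcal{L}$ do not get off the ground (and $\mathcal{L}$ is not assumed to be a subgroup, so the quotient is not even defined).

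The idea you are missing is the Frobenius (freshman's dream) congruence. The paper computes
\[
T(MQ_V(x_1^{p},\dots,x_n^{p}))\equiv T(MQ_V^{p})=T(MQ_V^{p-1}Q_V)=Q_V(1,\dots,1)^{p-1}=p^{p-1}\equiv 0 \pmod p,
\]
using only the basic identity $T(RQ_V)=R(1,\dots,1)$. Since $T(MQ_V(x_1^{p},\dots,x_n^{p}))=\sum_{\mathbf{u}\in V}T(Mx_1^{-pu_1}\cdots x_n^{-pu_n})$ is a sum of $p$ numbers each equal to $0$ or $1$ and this sum is divisible by $p$, all summands are equal. That equality, $T(Mx_1^{-pv_1}\cdots x_n^{-pv_n})=T(Mx_1^{-pw_1}\cdots x_n^{-pw_n})$ for all $M$, is exactly the periodicity statement. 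Primality enters through the mod-$p$ Frobenius step, not through a coprimality hypothesis on a blow-up factor.
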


\begin{proof}
For any monomial $M$, 
\begin{align*}
T(MQ_{V}(x_{1}^{p},& \ldots ,x_{n}^{p}))\equiv
T(MQ_{V}^{p})=T(MQ_{V}^{p-1}Q_{V}) \\
& =(Q_{V}(1,\ldots ,1))^{p-1}=p^{p-1}\equiv 0(\func{mod}p)
\end{align*}%
since $T(RQ_{V})=R(1,\ldots ,1)$ for any polynomial $R$. On the other hand,
by definition 
\begin{equation*}
T(MQ_{V}(x_{1}^{p},\ldots ,x_{n}^{p}))=\sum_{\mathbf{v}\in
V}^{{}}T(Mx_{1}^{-pv_{1}}\cdots x_{n}^{-pv_{n}}).
\end{equation*}%
Since the sum of $\lvert V\rvert =p$ numbers, each of which is either $0$ or 
$1$, is a multiple of $p$, we conclude that the numbers are either all $0$
or all $1$. Hence for any $\mathbf{v}=(v_{1},...,v_{n})$ and $\mathbf{w}%
=(w_{1},...,w_{n})$ in $V$%
\begin{equation*}
T(Mx_{1}^{-pv_{1}}\cdots x_{n}^{-pv_{n}})=T(Mx_{1}^{-pw_{1}}\cdots
x_{n}^{-pw_{n}})=0\text{ or }1.
\end{equation*}

It follows that, for any $\mathbf{x\in }\mathbb{Z}^{n},$ the point $\mathbf{x%
}$ is in $\mathcal{L}$ if and only if $\mathbf{x+p(v-w)}$ is in $\mathcal{L}$%
. Therefore, $p(\mathbf{v-w)}$ is a period of $\mathcal{T}.$
\end{proof}

This theorem already turns the problem of finding all tilings into a finite
computation problem. But, if one is interested only in checking existence of
tilings, as in the case of the Golomb--Welch conjecture, the problem becomes
much simpler. The following theorem is stated in \cite{Szegedy}.

\begin{theorem}
(\cite{Szegedy}) \label{thm:primelattice} Let $V=\{\mathbf{v}_{0}=\mathbf{0},%
\mathbf{v}_{1},\ldots ,\mathbf{v}_{p-1}\}\subset \mathbb{Z}^{n}$ be a prime
size tile, and suppose that $\mathbf{v}_{1},\ldots ,\mathbf{v}_{p-1}$
generate $\mathbb{Z}^{n}$ as an abelian group. Then there is a tiling of $%
\mathbb{Z}^{n}$ by translates of $V$ if and only if there is a lattice
tiling of $\mathbb{Z}^{n}$ by translates of $V$, i.e., there is group
homomorphism $\phi :\mathbb{Z}^{n}\rightarrow \mathbb{Z}_{p}$ that restricts
to a bijection $V\rightarrow \mathbb{Z}_{p}$.
\end{theorem}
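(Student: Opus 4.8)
The plan is to prove both directions of the equivalence, where the nontrivial content is the forward direction: from an arbitrary tiling of $\mathbb{Z}^n$ by translates of $V$, produce a group homomorphism $\phi : \mathbb{Z}^n \to \mathbb{Z}_p$ restricting to a bijection $V \to \mathbb{Z}_p$. The reverse direction is immediate: given such a $\phi$, the lattice $\mathcal{L} = \ker \phi$ has index $p$ in $\mathbb{Z}^n$, and since $\phi$ is a bijection on $V$, the cosets $V + \mathbf{l}$, $\mathbf{l} \in \mathcal{L}$, partition $\mathbb{Z}^n$ (each residue class of $\ker\phi$ meets $V$ in exactly one point), so $\mathcal{T} = \{V + \mathbf{l} : \mathbf{l} \in \mathcal{L}\}$ is a lattice tiling.

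For the forward direction, I would start from a tiling $\mathcal{T} = \{V + \mathbf{l} : \mathbf{l} \in \mathcal{L}\}$ and invoke Theorem~\ref{thm:period}: for all $\mathbf{v}, \mathbf{w} \in V$, the vector $p(\mathbf{v} - \mathbf{w})$ is a period of $\mathcal{T}$, i.e.\ $\mathbf{l} \in \mathcal{L} \iff \mathbf{l} + p(\mathbf{v} - \mathbf{w}) \in \mathcal{L}$. In particular, taking $\mathbf{w} = \mathbf{v}_0 = \mathbf{0}$, the sublattice $\Lambda$ generated by $\{p\mathbf{v}_i : 1 \le i \le p-1\}$ consists of periods of $\mathcal{T}$; since the $\mathbf{v}_i$ generate $\mathbb{Z}^n$, we get $p\mathbb{Z}^n \subseteq \Lambda \subseteq$ (periods of $\mathcal{T}$), and in particular $\mathcal{T}$ is $p$-periodic. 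So $\mathcal{L}$ is a union of cosets of $p\mathbb{Z}^n$, and the tiling descends to a tiling of the finite group $\mathbb{Z}_p^n$ by translates of the image $\bar V$ of $V$, with the tiling set $\bar{\mathcal{L}} = \mathcal{L}/p\mathbb{Z}^n \subseteq \mathbb{Z}_p^n$ of size $p^{n-1}$.

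The heart of the argument is to show that $\bar{\mathcal{L}}$ is in fact a \emph{subgroup} of $\mathbb{Z}_p^n$, equivalently a $\mathbb{Z}_p$-subspace of codimension $1$, since then its annihilator (a one-dimensional subspace of the dual) is spanned by a nonzero functional $\phi : \mathbb{Z}_p^n \to \mathbb{Z}_p$, and composing with $\mathbb{Z}^n \to \mathbb{Z}_p^n$ gives the desired homomorphism; the tiling condition $\bar V + \bar{\mathcal{L}} = \mathbb{Z}_p^n$ with $|\bar V| = p$ then forces $\phi|_{\bar V}$ to be a bijection onto $\mathbb{Z}_p$ (it is surjective on a set of size $p$ because the $p$ cosets $\bar V + \mathbf{l}$ hitting a fixed coset of $\ker\phi$ must be distinct). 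To prove $\bar{\mathcal{L}}$ is a subspace, I would work in the group algebra $\mathbb{Z}_p[\mathbb{Z}_p^n] \cong \mathbb{Z}_p[y_1, \ldots, y_n]/(y_i^p - 1)$, which since $y_i^p - 1 = (y_i - 1)^p$ over $\mathbb{Z}_p$ is isomorphic to $\mathbb{Z}_p[u_1, \ldots, u_n]/(u_i^p)$ via $u_i = y_i - 1$ — a local ring with maximal ideal $\mathfrak{m} = (u_1, \ldots, u_n)$. The tiling identity reads $\overline{Q}_{\bar V} \cdot \overline{Q}_{\bar{\mathcal{L}}} = \overline{Q}_{\mathbb{Z}_p^n}$ in this ring, where $\overline{Q}_{\mathbb{Z}_p^n} = \prod y_i^{\,\cdot}$-sum $= (u_1^{p-1})\cdots$ — more precisely $\sum_{a} y^a$ over all of $\mathbb{Z}_p^n$ equals $\prod_{i}(1 + y_i + \cdots + y_i^{p-1}) = \prod_i u_i^{p-1}$, the socle generator of the ring. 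I expect the main obstacle, and the technical crux, to be extracting from the factorization $\overline{Q}_{\bar V}\,\overline{Q}_{\bar{\mathcal{L}}} = \prod_i u_i^{p-1}$ — together with the fact that $\overline{Q}_{\bar V}$, being a sum of $p$ monomials, reduces mod $\mathfrak{m}$ to $p \equiv 0$ hence lies in $\mathfrak{m}$, and similarly $\overline{Q}_{\bar{\mathcal{L}}} \in \mathfrak{m}$ — the conclusion that $\bar{\mathcal{L}}$ is a coset of a subgroup. One clean way: Szegedy's argument shows $\overline{Q}_{\bar V}$ must be, up to a unit and a monomial, an element whose vanishing locus forces $\bar{\mathcal{L}}$ to be linear; alternatively one shows directly that the stabilizer $\{ \mathbf{t} \in \mathbb{Z}_p^n : \bar{\mathcal{L}} + \mathbf{t} = \bar{\mathcal{L}}\}$ already equals $\bar{\mathcal{L}}$ by a counting/annihilator argument in this local ring. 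I would follow the group-algebra route of \cite{Szegedy} (or the polynomial reformulation of \cite{H-K}), since Theorem~\ref{thm:period} has already done the work of reducing to the finite ring, and there the structure of $\mathbb{Z}_p[\mathbb{Z}_p^n]$ as a local Frobenius algebra makes the final step a matter of comparing dimensions of ideals.
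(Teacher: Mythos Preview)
Your reduction via Theorem~\ref{thm:period} to a tiling $\bar V + \bar{\mathcal L}$ of $\mathbb{Z}_p^n$ is correct, but the step you identify as ``the heart of the argument''---that $\bar{\mathcal L}$ is itself a subgroup of $\mathbb{Z}_p^n$---is not the content of Theorem~\ref{thm:primelattice}. It is precisely Conjecture~\ref{conj:primetile}, which the paper states as open and verifies only for $p\le 7$. Theorem~\ref{thm:primelattice} asserts merely that \emph{some} lattice tiling exists, not that the given one is lattice. The group-algebra route you sketch over $\mathbb{F}_p$ does not close this gap: $\mathbb{F}_p[\mathbb{Z}_p^n]\cong\mathbb{F}_p[u_1,\ldots,u_n]/(u_i^p)$ is local with a single one-dimensional simple quotient, so it carries no nontrivial characters to separate the elements of $\bar V$, and the bare factorization $\overline{Q}_{\bar V}\cdot\overline{Q}_{\bar{\mathcal L}}=\prod_i u_i^{p-1}$ does not by itself force $\bar{\mathcal L}$ to be a hyperplane, nor even produce a hyperplane transversal to $\bar V$. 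The ``counting/annihilator'' and ``stabilizer equals $\bar{\mathcal L}$'' suggestions are exactly what one would need to settle the conjecture, and no such argument is known.

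The paper's proof takes a different and shorter path, working over $\mathbb{C}$ and bypassing periodicity entirely. It applies Theorem~\ref{D} to obtain a common point $(x_1,\ldots,x_n)\in(\mathbb{C}\setminus\{0\})^n$ with $Q_V(x_1^a,\ldots,x_n^a)=0$ for $1\le a\le p-1$; setting $m_k=x_1^{-v_{k,1}}\cdots x_n^{-v_{k,n}}$, the power sums $\sum_k m_k^a$ vanish for $1\le a\le p-1$, so Newton's identities give $\prod_k(X-m_k)=X^p-1$ (using $m_0=1$). Thus the $m_k$ are exactly the $p$-th roots of unity; since the $\mathbf v_i$ generate $\mathbb{Z}^n$, each $x_j$ is itself a $p$-th root of unity, and writing $x_j=e^{2\pi i\alpha_j/p}$ yields the homomorphism $\phi(y)=\sum_j\alpha_j y_j$ directly. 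If you wish to salvage your finite-quotient setup, the analogous move is to use \emph{complex} characters of $\mathbb{Z}_p^n$: the tiling identity forces some nontrivial $\chi$ to kill $\overline{Q}_{\bar V}$, and a vanishing sum of $p$ $p$-th roots of unity must hit each root exactly once, so $\chi|_{\bar V}$ is a bijection. That recovers the theorem---but note it still does not show $\bar{\mathcal L}=\ker\chi$.
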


\begin{proof}
Suppose there is any tiling of $\mathbb{Z}^{n}$ by translates of $V$. From
Theorem~\ref{D} we get a common nonzero solution to $Q_{V}(x_{1}^{a},\ldots
,x_{n}^{a})=0$ for $1\leq a\leq p-1$. Letting $m_{k}=x_{1}^{-v_{k,1}}\cdots
x_{n}^{-v_{k,n}}$, we can write $\sum_{k=0}^{p-1} m_{k}^{a}=0$ for all $1\leq
a\leq p-1$. It follows inductively that the elementary symmetric polynomials
are $\sum_{i_{1}<\cdots <i_{a}}m_{i_{1}}\cdots m_{i_{a}}=0$ for $1\leq a\leq
p-1$. That is, $(X-m_{0})\cdots (X-m_{p-1})=X^{p}-P$ for some $P\in \mathbb{C%
}$, and because $m_{0}=1$ since $\mathbf{v}_{0}=\mathbf{0}$, we further
obtain $P=1$. Hence 
\begin{equation*}
\{x_{1}^{-v_{k,1}}\cdots x_{n}^{-v_{k,n}}\}_{0\leq k<p}=\{1,e^{2\pi
i/p},\ldots ,e^{2\pi i(p-1)/p}\}.
\end{equation*}

Because $\{ \mathbf{v}_1, \ldots, \mathbf{v}_{p-1} \}$ generate $\mathbb{Z}%
^n $, all $x_1, \ldots, x_n$ have to be powers of $e^{2\pi i/p}$. If we
write $x_k = e^{2 \pi i \alpha_k/p}$, then the homomorphism 
\begin{equation*}
\mathbb{Z}^n \to \mathbb{Z}_p; \quad (y_1, \ldots, y_n) \mapsto \alpha_1 y_1 +
\alpha_2 y_2 + \cdots + \alpha_n y_n
\end{equation*}
restricts to a bijection $V \to \mathbb{Z}_p$.
\end{proof}

Because the Lee sphere $S(n, e)$ always contains $\mathbf{0}$ and generates $%
\mathbb{Z}^n$, we have:

\begin{corollary}
Suppose $n, e \ge 1$ and $\lvert S(n,e) \rvert = p$ is prime. Then every $%
PL(n,e)$-code is periodic with period $p$ in every direction. Moreover,
there is a $PL(n,e)$-code if and only if there is a linear $PL(n,e)$-code.
\end{corollary}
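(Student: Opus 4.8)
The plan is to obtain this corollary as a direct specialization of Theorem~\ref{thm:period} and Theorem~\ref{thm:primelattice} to the tile $V = S(n,e)$, so the only real work is to check that $S(n,e)$ meets the hypotheses of those two theorems. First I would note that, by the definition of a perfect code, a $PL(n,e)$-code $C$ is nothing but the translation set of a tiling $\mathcal{T} = \{S(n,e) + \mathbf{l} : \mathbf{l} \in C\}$ of $\mathbb{Z}^{n}$ by translates of $S(n,e)$; no periodicity is assumed in advance, so Proposition~\ref{P1} does not apply, but it is not needed, since the polynomial-method results of Section~\ref{subsec:polynomial} hold for arbitrary tilings.

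For the periodicity claim I would invoke Theorem~\ref{thm:period} with $p = \lvert S(n,e) \rvert$, which gives that $p(\mathbf{v} - \mathbf{w})$ is a period of $\mathcal{T}$ for all $\mathbf{v}, \mathbf{w} \in S(n,e)$. Since $e \geq 1$, the sphere $S(n,e)$ contains $\mathbf{0}$ and each standard basis vector $\mathbf{e}_i$; taking $\mathbf{w} = \mathbf{0}$ and $\mathbf{v} = \mathbf{e}_i$ shows that $p\mathbf{e}_i$ is a period of $\mathcal{T}$, hence of $C$, for every $i$. Thus $C$ is $p$-periodic along every coordinate axis, which is exactly the assertion that $C$ has period $p$ in every direction.

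For the second assertion I would verify the generation hypothesis of Theorem~\ref{thm:primelattice}: writing $S(n,e) = \{\mathbf{0}, \mathbf{v}_{1}, \ldots, \mathbf{v}_{p-1}\}$, the nonzero elements $\mathbf{v}_{1}, \ldots, \mathbf{v}_{p-1}$ already include $\mathbf{e}_{1}, \ldots, \mathbf{e}_{n}$ (again because $e \geq 1$), so they generate $\mathbb{Z}^{n}$ as an abelian group. Theorem~\ref{thm:primelattice} then yields that a tiling of $\mathbb{Z}^{n}$ by $S(n,e)$ exists if and only if there is a homomorphism $\phi : \mathbb{Z}^{n} \to \mathbb{Z}_{p}$ restricting to a bijection $S(n,e) \to \mathbb{Z}_{p}$. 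In the latter case $\ker \phi$ is a subgroup of $\mathbb{Z}^{n}$ of index $p$, hence of full rank, and $\{S(n,e) + \mathbf{l} : \mathbf{l} \in \ker \phi\}$ is a tiling whose translation set $\ker \phi$ is a lattice (equivalently, linear) $PL(n,e)$-code. Conversely, any linear $PL(n,e)$-code is in particular a $PL(n,e)$-code, so the two existence statements are equivalent.

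There is essentially no hard step here: the corollary is a routine consequence of the two cited theorems. The only point requiring a word of justification is the generation hypothesis for $S(n,e)$, which follows at once from $\pm \mathbf{e}_i \in S(n,e)$ whenever $e \geq 1$; everything else is a direct application of Theorems~\ref{thm:period} and \ref{thm:primelattice} with the origin taken as one of the two distinguished points of the tile.
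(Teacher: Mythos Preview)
Your proof is correct and follows exactly the approach indicated in the paper, which simply notes that $S(n,e)$ contains $\mathbf{0}$ and generates $\mathbb{Z}^{n}$ and then invokes Theorems~\ref{thm:period} and \ref{thm:primelattice}. Your write-up merely makes explicit the easy verifications (that $\mathbf{e}_i \in S(n,e)$ for $e \ge 1$, and that $\ker\phi$ is a full-rank subgroup giving a linear code) that the paper leaves to the reader.
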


\noindent Thus, in this case, the task of proving the Golomb--Welch
conjecture reduces to verifying that there is no homomorphism $\mathbb{Z}%
^{n}\rightarrow \mathbb{Z}_{p}$ that restrict to a bijection $%
S(n,e)\rightarrow \mathbb{Z}_{p}$. This was used in \cite{Kim} to prove
nonexistence of $PL(n,2)$-codes for special $n$. The primality of $\lvert
S(n,e) \rvert$ heavily depends on $n$ and $e$. It is very likely that $%
\lvert S(n,2) \rvert = 2n^2 + 2n + 1$ is prime for infinitely many $n$,
while $\lvert S(n,3)\rvert =(2n+1)(2n^{2}+2n+3)/3$ is never prime for $n\geq
2$.

The restrictiveness of tilings by prime size tiles raises the natural
question: Can we classify all such tilings in some sense? This is not a
question directly related to the Golomb--Welch conjecture. However we think
this illustrates well the strength of using polynomials in tiling problems.

\begin{conjecture}
\label{conj:primetile} Let $V=\{\mathbf{0},\mathbf{v}_{1},\ldots ,\mathbf{v}%
_{p-1}\}$ be a prime size tile, and suppose that $\mathbf{v}_{1},\ldots ,%
\mathbf{v}_{p-1}$ generate $\mathbb{Z}^{n}$ as an abelian group. Then all
tilings of $\mathbb{Z}^{n}$ by translates $V$ are lattice.
\end{conjecture}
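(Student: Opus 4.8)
The plan is to reduce everything to a finite problem over $\mathbb{Z}_p$, where $p=\lvert V\rvert$, and then exploit the rigidity of the Fourier support of the tiling set together with the dilation symmetry coming from Theorem~\ref{C}. After a translation we may assume $\mathbf 0\in V$, so that $\mathbf v_1,\dots,\mathbf v_{p-1}$ span $\mathbb{Z}^n$ and in particular $n\le p-1$. Given a tiling $\mathcal T=\{V+\mathbf l:\mathbf l\in\mathcal L\}$, Theorem~\ref{thm:period} shows that each $p\mathbf v_i$ is a period of $\mathcal T$, hence the full lattice $p\mathbb{Z}^n$ consists of periods, so $\mathcal L$ descends to a set $\bar{\mathcal L}\subseteq\mathbb{Z}_p^n$ with $\bar V\oplus\bar{\mathcal L}=\mathbb{Z}_p^n$; moreover $\lvert\bar V\rvert=p$, since two elements of $V$ congruent modulo $p$ would contradict the $p$-periodicity of $\mathcal L$ via Corollary~\ref{CC} with $a=1$. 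A tiling is lattice exactly when $\bar{\mathcal L}$ is a coset of an index-$p$ subgroup of $\mathbb{Z}_p^n$, i.e.\ an affine hyperplane: such a $\bar{\mathcal L}$ is a level set of a surjective linear map $\phi:\mathbb{Z}_p^n\to\mathbb{Z}_p$, which lifts to a homomorphism $\mathbb{Z}^n\to\mathbb{Z}_p$ whose level sets form a lattice tiling that must coincide with $\mathcal T$ because $\mathcal L$ is $p$-periodic. Thus it suffices to show that every tiling $\bar V\oplus\bar{\mathcal L}=\mathbb{Z}_p^n$ by a spanning prime-size tile $\bar V\ni\mathbf 0$ has $\bar{\mathcal L}$ an affine hyperplane.

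The engine is the following. By Corollary~\ref{CC}, for every $\mathbf v\ne\mathbf w$ in $\bar V$ the set $\bar{\mathcal L}$ is disjoint from $\bar{\mathcal L}+a(\mathbf v-\mathbf w)$ for all nonzero $a\in\mathbb{Z}_p$; since $\lvert\bar{\mathcal L}\rvert=p^{n-1}$ this forces $\bar{\mathcal L}$ to meet every affine line of direction $\langle\mathbf v-\mathbf w\rangle$ exactly once. Equivalently, setting
\[
A_{\bar V}=\{\xi\in\mathbb{Z}_p^n\setminus\{\mathbf 0\}\ :\ \mathbf v\mapsto \xi\cdot\mathbf v\text{ is injective on }\bar V\},
\]
the Fourier transform $\widehat{\chi_{\bar V}}\,\widehat{\chi_{\bar{\mathcal L}}}$ is supported at $\mathbf 0$ (because $\chi_{\bar V}\ast\chi_{\bar{\mathcal L}}\equiv\mathbf 1$), and, since a sum of exactly $p$ $p$-th roots of unity ($p$ prime) that vanishes must consist of all $p$ of them, $\widehat{\chi_{\bar V}}(\xi)=0$ precisely for $\xi\in A_{\bar V}$; hence $\widehat{\chi_{\bar{\mathcal L}}}$ is supported on $\{\mathbf 0\}\cup A_{\bar V}$. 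If $A_{\bar V}$ is a single punctured line $\langle\xi_0\rangle\setminus\{\mathbf 0\}$ we are done, since then $\chi_{\bar{\mathcal L}}$ is constant on the cosets of the hyperplane $\langle\xi_0\rangle^{\perp}$ and, by cardinality, is the indicator of exactly one of them. In general $A_{\bar V}$ is a union of punctured lines, one for each linear functional exhibiting $\bar V$ as a transversal of a hyperplane — that is, one for each lattice tiling partner of $\bar V$, in the sense of the proof of Theorem~\ref{thm:primelattice}. We may still reduce: $\widehat{\chi_{\bar{\mathcal L}}}$ is supported inside $\operatorname{span}(A_{\bar V})$, so $\chi_{\bar{\mathcal L}}$ is invariant under the subspace $U=\operatorname{span}(A_{\bar V})^{\perp}$, hence $\bar{\mathcal L}$ is a union of $U$-cosets and the tiling descends through $\mathbb{Z}_p^n\to\mathbb{Z}_p^n/U$ to a tiling by the image of $\bar V$, which is again prime-size (using $\mathbf v-\mathbf w\notin U$ for $\mathbf v\ne\mathbf w\in\bar V$), again spanning, and whose own ``$A$'' now spans the whole space. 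This descent is exactly the inductive step in an induction on $n$: it applies whenever $U\ne\{\mathbf 0\}$, i.e.\ whenever $A_{\bar V}$ does not span.

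What remains are the base cases: tiles $\bar V\subseteq\mathbb{Z}_p^n$ ($n\le p-1$) with $\operatorname{span}(A_{\bar V})=\mathbb{Z}_p^n$ but $A_{\bar V}$ not a single line. For fixed $p$ there are only finitely many such $\bar V$ up to the action of $\operatorname{GL}_n(\mathbb{Z}_p)$ together with the dilation action $\bar V\mapsto a\bar V$ (legitimate by Corollary~\ref{blowout}). For $p\in\{3,5,7\}$ I would enumerate them and, for each, determine all tiling partners and check that each is an affine hyperplane; for many of these tiles the transversal conditions already force it, the only possible obstruction being ``non-affine orthomorphism'' type common transversals, which one verifies do not arise for these small primes. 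The extreme tile is the simplex $\{\mathbf 0,\mathbf e_1,\dots,\mathbf e_{p-1}\}\subseteq\mathbb{Z}_p^{p-1}$, whose $A_{\bar V}$ is the set of all $(p-1)!$ permutation functionals and hence lies in the hyperplane $\{\sum_i x_i=0\}$; for it $U=\langle(1,\dots,1)\rangle$, so the descent sends it to $\{\mathbf 0,\mathbf e_1,\dots,\mathbf e_{p-2},-(1,\dots,1)\}\subseteq\mathbb{Z}_p^{p-2}$, which falls under the induction. I expect the main obstacle to be precisely this case analysis: carrying out, for each of the finitely many residual tiles, the verification that no tiling partner other than an affine hyperplane survives the transversal constraints — a step that is finite and computer-amenable for $p\le7$ but for which no uniform argument is presently available. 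Removing that last obstruction — for instance by controlling how $\widehat{\chi_{\bar{\mathcal L}}}$ can be spread over a large $A_{\bar V}$, perhaps via the stronger tool of Theorem~\ref{thm:CEdiscrete} or via further Fourier input of the type used in Theorem~\ref{thm:Fourier} — is what would be needed to settle Conjecture~\ref{conj:primetile} for all primes $p$.
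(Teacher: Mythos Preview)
Your reduction is sound and the Fourier picture is correct: passing to $\mathbb{Z}_p^n$ via Theorem~\ref{thm:period}, identifying the zero set of $\widehat{\chi_{\bar V}}$ with the set $A_{\bar V}$ of bijective linear functionals (this is where primality of $p$ enters), and then descending modulo $U=\operatorname{span}(A_{\bar V})^{\perp}$ are all valid steps. But note that the statement is a \emph{conjecture}; the paper does not prove it either. What the paper does is (i) reduce Conjecture~\ref{conj:primetile} to the single ``universal'' semi-cross case (Conjecture~\ref{conj:semicross}) by \emph{lifting} an arbitrary $V$ to $\{\mathbf 0,\mathbf e_1,\dots,\mathbf e_{p-1}\}\subset\mathbb{Z}^{p-1}$ along $\mathbf e_i\mapsto\mathbf v_i$, and (ii) verify that single case for $p\le 7$ by a polynomial-method counting lemma (Lemma~\ref{lem:counting}) plus explicit local casework on the sets $\mathcal U_2^{\pm},\mathcal U_3^{\pm}$.

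Your route is genuinely different, and in a sense dual: the paper lifts every tile \emph{up} to one universal tile in top dimension $p-1$, whereas your descent pushes every tile \emph{down} until $A_{\bar V}$ spans. The Fourier viewpoint makes transparent why ``lattice'' is the expected answer (support on a single line $\Leftrightarrow$ affine hyperplane), and your transversal observation recovers, in one stroke, what the paper extracts from Corollary~\ref{CC}. The trade-off is that your base locus is a priori a whole $\operatorname{GL}_n(\mathbb{Z}_p)$-orbit list of tiles with spanning $A$, while the paper has exactly one base case per prime; on the other hand, your own remark that the semi-cross has $A_{\bar V}\subset\{\sum x_i=0\}$ shows that the paper's universal tile is \emph{not} a base case for you and descends once more to $\mathbb{Z}_p^{p-2}$---so combining the two reductions gives a single canonical base tile in dimension $p-2$. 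Where your proposal falls short of the paper is execution: you assert that the residual tiles for $p\in\{5,7\}$ can be enumerated and checked, but you do not carry this out, whereas the paper actually performs the casework (via Lemma~\ref{lem:counting}) to close $p\le 7$. For a complete write-up at the level of the paper's partial result, you would need either to run that finite verification in your framework or to import the paper's semi-cross reduction so that only the one descended tile in $\mathbb{Z}_p^{p-2}$ remains to be checked.
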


It turns out there is \textquotedblleft{}universal\textquotedblright{} tile
for this conjecture. Consider the \emph{semi-cross} $V_{p-1} = \{ \mathbf{0}%
, \mathbf{e}_1, \ldots, \mathbf{e}_{p-1} \} \subset \mathbb{Z}^{p-1}$. Given
any tile $V$ satisfying the assumptions of Conjecture~\ref{conj:primetile},
define a homomorphism $\phi : \mathbb{Z}^{p-1} \to \mathbb{Z}^n$ by $\phi(%
\mathbf{e}_i) = \mathbf{v}_i$. Observe that if $\mathcal{T} = \{ V + \mathbf{%
l} : \mathbf{l} \in \mathcal{L} \}$ is a tiling of $\mathbb{Z}^n$ by $V$,
then $\mathcal{T}_0 = \{ V_{p-1} + \mathbf{l} : \mathbf{l} \in \phi^{-1}(%
\mathcal{L}) \}$ is a tiling of $\mathbb{Z}^{p-1}$ by $V_{p-1}$. It is clear
that if $\phi^{-1}(\mathcal{L})$ is a lattice, then $\mathcal{L}$ is also a
lattice. Therefore the following conjecture, which is a special case, is
actually equivalent to Conjecture~\ref{conj:primetile}.

\begin{conjecture}
\label{conj:semicross} For $p$ a prime, any tiling of $\mathbb{Z}^{p-1}$ by
the semi-cross $V_{p-1} = \{\mathbf{0}, \mathbf{e}_1, \ldots, \mathbf{e}%
_{p-1}\}$ is lattice.
\end{conjecture}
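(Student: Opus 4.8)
The plan is to reduce \ref{conj:semicross} to a rigidity statement about small sets in $\mathbb{F}_{p}^{p-1}$ and then attack that statement by Fourier analysis. First I would pass to the finite setting. Let $\mathcal{T}=\{V_{p-1}+\mathbf{l}:\mathbf{l}\in\mathcal{L}\}$ be a tiling of $\mathbb{Z}^{p-1}$ by the semi-cross. Since $\mathbf{e}_{1},\ldots,\mathbf{e}_{p-1}$ is the standard basis of $\mathbb{Z}^{p-1}$, Theorem~\ref{thm:period} (with $\mathbf{v}=\mathbf{e}_{i}$, $\mathbf{w}=\mathbf{0}$) shows $p\mathbf{e}_{i}$ is a period of $\mathcal{T}$ for each $i$; hence $\mathcal{T}$ is $p$-periodic and descends to a tiling of $G:=\mathbb{F}_{p}^{p-1}$ by $\bar V=\{\mathbf{0},\bar{\mathbf{e}}_{1},\ldots,\bar{\mathbf{e}}_{p-1}\}$ with translate set $\bar{\mathcal{L}}\subseteq G$, $\lvert\bar{\mathcal{L}}\rvert=p^{p-2}$. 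After a translation assume $\mathbf{0}\in\bar{\mathcal{L}}$. Then ``$\mathcal{T}$ is lattice'' is equivalent to ``$\bar{\mathcal{L}}$ is an $\mathbb{F}_{p}$-subspace of $G$'', necessarily of codimension $1$; as in the proof of Theorem~\ref{thm:primelattice}, such a subspace is the kernel of a linear functional $\lambda$ with $\{\lambda(\bar{\mathbf{e}}_{i})\}=\mathbb{F}_{p}^{\times}$.

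Next I would Fourier-analyze. Writing $\omega=e^{2\pi i/p}$ and indexing the characters of $G$ by $\boldsymbol{\alpha}\in\mathbb{F}_{p}^{p-1}$ via $\chi_{\boldsymbol{\alpha}}(\bar{\mathbf{e}}_{i})=\omega^{\alpha_{i}}$, the tiling identity $\mathbf{1}_{\bar V}\ast\mathbf{1}_{\bar{\mathcal{L}}}=\mathbf{1}_{G}$ gives $\widehat{\mathbf{1}_{\bar V}}(\chi_{\boldsymbol{\alpha}})\cdot\widehat{\mathbf{1}_{\bar{\mathcal{L}}}}(\chi_{\boldsymbol{\alpha}})=0$ for every $\boldsymbol{\alpha}\neq\mathbf{0}$. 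Here $\widehat{\mathbf{1}_{\bar V}}(\chi_{\boldsymbol{\alpha}})=1+\omega^{\alpha_{1}}+\cdots+\omega^{\alpha_{p-1}}$ is a sum of $p$ $p$-th roots of unity, so by the classification of vanishing sums of $p$-th roots of unity (for $p$ prime the only minimal relation is $1+\omega+\cdots+\omega^{p-1}=0$) it vanishes precisely when the multiset $\{0,\alpha_{1},\ldots,\alpha_{p-1}\}$ equals $\mathbb{F}_{p}$, i.e.\ when $\{\alpha_{1},\ldots,\alpha_{p-1}\}=\mathbb{F}_{p}^{\times}$. Call this set $Z\subseteq\mathbb{F}_{p}^{p-1}$; it is the set of $(p-1)!$ permutations of $(1,2,\ldots,p-1)$ and, being $\mathbb{F}_{p}^{\times}$-stable with no zero coordinate-vector, it is a union of exactly $(p-2)!$ lines through the origin. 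The tiling condition is now exactly $\func{supp}\bigl(\widehat{\mathbf{1}_{\bar{\mathcal{L}}}}\bigr)\subseteq Z\cup\{\mathbf{0}\}$. Thus \ref{conj:semicross} is equivalent to: \emph{a $\{0,1\}$-valued function on $\mathbb{F}_{p}^{p-1}$ of mass $p^{p-2}$ whose Fourier support lies in this arrangement of $(p-2)!$ lines is in fact supported on a single one of those lines} --- for then $\mathbf{1}_{\bar{\mathcal{L}}}$ is constant on the cosets of a codimension-$1$ subspace and, by a mass count, equals the indicator of one such coset.

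To finish, I would argue by descent on the number $k$ of lines actually used. Choosing generators $\mathbf{b}_{1},\ldots,\mathbf{b}_{k}$, the function $\mathbf{1}_{\bar{\mathcal{L}}}$ is constant on cosets of $U=\langle\mathbf{b}_{1},\ldots,\mathbf{b}_{k}\rangle^{\perp}$, so it descends to the indicator of a set $B\subseteq\mathbb{F}_{p}^{m}$ with $m=\dim\langle\mathbf{b}_{i}\rangle\leq k$, $\lvert B\rvert=p^{m-1}$, whose Fourier support is the union of the $k$ (still distinct) images of the lines. One must then show $k=1$: classify the $p^{m-1}$-element subsets of $\mathbb{F}_{p}^{m}$ that are Fourier-supported on a union of $\leq k$ lines through the origin arising from coordinate vectors of permutations of $(1,\ldots,p-1)$, and prove the only possibility is a coset of a line (forcing $m=k=1$). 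For $k=2$ this is a question about $p$-element subsets of $\mathbb{F}_{p}^{2}$ with Fourier support on two prescribed lines, which I expect to be tractable by an uncertainty-principle or direct-enumeration argument.

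The main obstacle is this last classification for general $k$ --- equivalently, for general $p$. The constraint is delicate because the lines in $Z\cup\{\mathbf{0}\}$ are spanned by permutation vectors, so they sit in a very particular arrangement, and $\{0,1\}$-valued functions with small support on a union of lines are notoriously rigid yet hard to pin down without either an uncertainty principle tailored to such arrangements or substantial case analysis. This is exactly why only $p\leq 7$ is currently within reach: there $(p-2)!$ is small enough that the arrangement and the candidate descended sets can be enumerated. A uniform proof for all primes $p$ appears to need a genuinely new ingredient --- perhaps a compression or monotonicity argument carried out directly on the tiling $\mathcal{T}$, or an induction reducing a semi-cross tiling in dimension $p-1$ to a (weighted) tiling problem in smaller dimension --- and supplying that ingredient is the crux.
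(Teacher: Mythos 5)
First, a framing point: the statement you were asked to prove is stated in the paper as a \emph{conjecture} (equivalent to Corr\'adi's conjecture on factoring elementary $p$-groups); the paper does not prove it in general, only for $p\le 7$, and your proposal does not prove it either, since you explicitly leave the decisive step open. Your reduction is, as far as I can check, correct: $p$-periodicity from Theorem~\ref{thm:period}, descent to $\mathbb{F}_p^{p-1}$, the computation of the zero set of the Fourier transform of the semi-cross via the fact that for $p$ prime the only vanishing sums of $p$-th roots of unity with nonnegative integer coefficients are multiples of the full sum $1+\omega+\cdots+\omega^{p-1}$, and the equivalence of the conjecture with the statement that the Fourier transform of the translate set is supported on a single ``permutation line'' together with the origin. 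But this reformulation is essentially a restatement of the conjecture on the Fourier side rather than a reduction that gains traction: the classification of $p^{m-1}$-element subsets of $\mathbb{F}_p^{m}$ Fourier-supported on a union of $k$ permutation lines, to which you defer, is exactly as hard as the original statement, and you offer no argument for it beyond $k=1$ (and an unsubstantiated expectation for $k=2$). Note also that Theorem~\ref{thm:primelattice} already extracts from the same vanishing condition that a lattice tiling \emph{exists}; the whole difficulty of Conjecture~\ref{conj:semicross} is upgrading ``some tiling by $V_{p-1}$ is lattice'' to ``every tiling is lattice,'' and that is precisely the step your outline does not supply.

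For comparison, the paper's partial progress ($p\le 7$) goes through the polynomial method rather than Fourier analysis on $\mathbb{F}_p^{p-1}$: Lemma~\ref{lem:counting} counts, via elementary symmetric polynomials and Theorem~\ref{C}, the exact number of codewords of type $[1^k]$ relative to a given codeword, and a finite case analysis of the local configuration around a codeword (e.g.\ pinning down $\mathcal{U}_2^{+}(\mathbf{w})$ and $\mathcal{U}_3^{+}(\mathbf{w})$ for $p=5$) then shows that the translate set contains, hence equals, an index-$p$ lattice. Your Fourier dictionary and the paper's counting lemma encode the same algebraic constraints --- the counting lemma is in effect the inverse transform of your support condition --- so the two approaches are at bottom equivalent reformulations, and both stall at the same place: neither yields the conjecture for general $p$ without the rigidity ingredient you correctly identify at the end as missing.
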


We note that an equivalent conjecture, called Corr\'{a}di's conjecture, is
stated in \cite{Sz2} in the context of factorization of abelian groups. In
the same paper, the conjecture is also verified up to $p\leq 7$. However,
the proof given by Sz\'{a}bo relies on a computer search for large cliques
in a certain graph. Here we outline a readable proof that makes use of
polynomials. Again, more details are presented in \cite{H-K}.

To facilitate our discussion, we introduce new notions and notations, and
state several auxiliary results. Let $\mathcal{T} = \{ V_{p-1} + \mathbf{l}
: \mathbf{l} \in \mathcal{L} \}$ be a tiling of $\mathbb{Z}^{p-1}$ by
semi-crosses. We use the terminology of coding theory: The elements of $%
\mathbb{Z}^{p-1}$ will be called \emph{words} and the elements of $\mathcal{L%
}$, the centers of semi-crosses in $\mathcal{T}$, will be called \emph{%
codewords}.

By a word of \emph{type $[m_1^{\alpha_1}, \ldots, m_s^{\alpha_s}]$} we mean
a word having $\alpha_1$ coordinates equal to $m_1$, \ldots, $\alpha_s$
coordinates equal to $m_s$, the other coordinates equal to zero. Let $W, Z$
be words, and the word $Z - W$ is of type $[m_1^{\alpha_1}, \ldots,
m_s^{\alpha_s}]$. Then $Z$ will be called a word of \emph{type $%
[m_1^{\alpha_1}, \ldots, m_s^{\alpha_s}]$} with respect to $W$. Further, we
will say that a word $V$ is covered by a codeword $W$ if $V$ belongs to the
semi-cross centered at $W$. Finally, two words $A$ and $B$ \emph{coincide in 
$t$ coordinates}, if they have the same value in $t$ \underline{\emph{nonzero%
}} coordinates.

The following lemma facilitates analyzing possible configurations of
codewords. The proof is essentially computing the elementary symmetric
polynomials $e_k = \sum_{}^{} x_{i_1} \cdots x_{i_k}$ in terms of the
polynomials $s_k = \sum_{}^{} x_i^k$, and using Theorem~\ref{C}.

\begin{lemma}
\label{lem:counting} Let $\mathcal{T}$ be a tiling of $\mathbb{Z}^{p-1}$ by
semi-crosses, where $p $ is a prime. Then for each $k < p$, 
\begin{equation*}
T\left( \sum_{i_1 < \cdots < i_k}^{} x_{i_1} x_{i_2} \cdots x_{i_k} \right)
= \frac{\binom{p-1}{k} - (-1)^k}{p} + (-1)^k T(1).
\end{equation*}
In other words, if $O$ is a codeword then there are $\frac{1}{p} (\binom{p-1%
}{k} + (p-1) (-1)^k)$ codewords of type $[1^k]$, otherwise there are $\frac{1%
}{p}(\binom{p-1}{k} - (-1)^k)$ codewords of type $[1^k]$.
\end{lemma}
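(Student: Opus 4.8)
The plan is to exploit the polynomial identity from Theorem~\ref{C} in the special case where the blowing-up exponents $a$ run over $\{1,2,\ldots,p-1\}$, and to translate the resulting statement about the linear functional $T$ into a count of codewords of a given type.

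First I would recall that for the semi-cross $V_{p-1}$ we have
\begin{equation*}
Q_{V_{p-1}}(x_1,\ldots,x_{p-1}) = 1 + x_1^{-1} + x_2^{-1} + \cdots + x_{p-1}^{-1},
\end{equation*}
since $-V_{p-1} = \{\mathbf{0}, -\mathbf{e}_1,\ldots,-\mathbf{e}_{p-1}\}$. Writing $y_i = x_i^{-1}$, Theorem~\ref{C} applied with a monomial $M = x_1^{-w_1}\cdots x_{p-1}^{-w_{p-1}}$ (which we think of as "evaluating near the word $W$") and with exponent $a$ coprime to $p$ gives
\begin{equation*}
T\bigl(M(1 + y_1^a + \cdots + y_{p-1}^a)\bigr) = 1.
\end{equation*}
Expanding, $\sum_{j} T(M y_j^a)$ together with the $j=0$ term $T(M)$ equals $1$; that is, writing $s_a := \sum_{j=1}^{p-1} T(M y_j^a)$ for the $a$-th "power sum" of the $p-1$ shifted monomials, we get $s_a = 1 - T(M)$ for every $a = 1,\ldots,p-1$. (Here $T(M) = T(1)$ after a harmless shift, i.e.\ it is $1$ exactly when $W$ is a codeword, $0$ otherwise.) Now the key step: Newton's identities express the elementary symmetric functions $e_k$ of the $p-1$ quantities in terms of the power sums $s_1,\ldots,s_k$; since all $s_a$ are equal to the same constant $c := 1 - T(1)$, the $e_k$ become universal numbers depending only on $c$ and $p$. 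A short induction (or the generating-function identity $\sum_k e_k t^k = \prod (1 + \text{root}\cdot t)$ specialized to the case where all power sums equal $c$, which forces $\sum_k e_k t^k = (1+t)^{(p-1)} $-type behavior twisted by $c$) yields
\begin{equation*}
e_k = \frac{\binom{p-1}{k} - (-1)^k}{p} + (-1)^k c = \frac{\binom{p-1}{k} - (-1)^k}{p} + (-1)^k T(1).
\end{equation*}
But $T\bigl(\sum_{i_1<\cdots<i_k} y_{i_1}\cdots y_{i_k}\bigr) = T(M \cdot e_k(\text{in }x))$ by linearity, and each monomial $M y_{i_1}\cdots y_{i_k}$ is the indicator (under $T$) of the word $W + \mathbf{e}_{i_1} + \cdots + \mathbf{e}_{i_k}$ being a codeword — i.e.\ of a codeword of type $[1^k]$ with respect to $W$. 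Summing, $T(e_k)$ counts exactly the number of type-$[1^k]$ codewords relative to $W$, giving the stated formula, and the "$O$ a codeword" dichotomy is just the two values $T(1) \in \{0,1\}$.

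The main obstacle I anticipate is making the induction deriving the closed form for $e_k$ clean. One has to be careful that Newton's identities over $\mathbb{Z}$ (no division by $k$) give $k\,e_k = \sum_{i=1}^{k}(-1)^{i-1} e_{k-i}\,s_i = c\sum_{i=1}^{k}(-1)^{i-1} e_{k-i}$, and then verify by induction that $e_k = \tfrac{1}{p}\bigl(\binom{p-1}{k} - (-1)^k\bigr) + (-1)^k c$ solves this recursion with $e_0 = 1$; the verification reduces to the binomial identity $k\binom{p-1}{k} = \sum_{i=1}^k (-1)^{i-1}\binom{p-1}{k-i}\cdot(\cdots)$ after substituting, and one should check that the $c$-dependent part satisfies the homogeneous recursion $k\cdot((-1)^k c) = c\sum_{i=1}^k (-1)^{i-1}(-1)^{k-i}$, i.e.\ $(-1)^k k = \sum_{i=1}^k (-1)^{k-1} = (-1)^{k-1} k$ — wait, that forces care with signs, so the honest computation is to treat the two pieces separately and confirm each. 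All of this is elementary but fiddly; the conceptual content is entirely in the observation that "all power sums coincide" which comes straight from Theorem~\ref{C}.
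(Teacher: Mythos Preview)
Your overall strategy matches the paper's: relate $T(e_k)$ to power sums via Newton's identities and use Theorem~\ref{C} to control the latter. But the execution has a genuine gap.

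The functional $T$ is linear, not multiplicative. You correctly observe that $\sum_j T(M y_j^{a}) = 1 - T(M) =: c$ for every $a$ coprime to $p$. But Newton's identity
\[
k\,e_k \;=\; \sum_{i=1}^{k} (-1)^{i-1}\, e_{k-i}\, s_i
\]
is an identity of \emph{polynomials}; applying $T$ yields
\[
k\,T(e_k) \;=\; \sum_{i=1}^{k} (-1)^{i-1}\, T(e_{k-i}\, s_i),
\]
and $T(e_{k-i}\, s_i)$ is \emph{not} equal to $T(e_{k-i})\cdot c$ in general. Your recursion $k\,e_k = c\sum_{i}(-1)^{i-1} e_{k-i}$ is therefore false: for $k=2$ with $T(1)=0$ it gives $E_2 = 0$, whereas the lemma asserts $E_2 = (p-3)/2$. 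The sign anomaly you ran into at the end is a symptom of this error, not merely a bookkeeping slip.

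The repair is precisely what makes Theorem~\ref{C} useful here: apply it with $P = e_{k-i}$ rather than with a monomial. Since $Q_{V_{p-1}}(x_1^{-a},\ldots,x_{p-1}^{-a}) = 1 + s_a(x)$ for $1\le a\le p-1$, Theorem~\ref{C} gives
\[
T(e_{k-i}) + T\bigl(e_{k-i}\, s_i\bigr) \;=\; e_{k-i}(1,\ldots,1) \;=\; \binom{p-1}{k-i},
\]
so that $T(e_{k-i}\, s_i) = \binom{p-1}{\,k-i\,} - T(e_{k-i})$. Substituting into Newton's identity and writing $E_k = T(e_k)$ produces the \emph{correct} recursion
\[
k\,E_k \;=\; \sum_{i=1}^{k} (-1)^{i-1}\Bigl[\tbinom{p-1}{k-i} - E_{k-i}\Bigr],
\]
which one then checks by induction is solved by $E_k = \tfrac{1}{p}\bigl(\tbinom{p-1}{k}-(-1)^k\bigr) + (-1)^k E_0$. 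A minor secondary point: with $y_i = x_i^{-1}$ the monomial $y_{i_1}\cdots y_{i_k}$ detects codewords of type $[(-1)^k]$, not $[1^k]$; to match the lemma as stated, use negative exponents $a$ in Theorem~\ref{C} so that the power sums $s_a$ appear directly in the variables $x_i$.
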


In fact, one can calculate the number of codewords of type $[m_1^{\alpha_1},
\ldots, m_s^{\alpha_s}]$ depending on whether $\mathbf{0} \in \mathbb{Z}%
^{p-1}$ is codeword. However we do not need the lemma in this generality.

Let us write $\mathbf{i} = (1, 1, \ldots, 1)$. For $k = p-1$ we see that a
word $\mathbf{w}$ is a codeword if and only if $\mathbf{w} + (1, \ldots, 1)$
is a codeword. For $k = 2$, we see that if $\mathbf{w}$ is a codeword then
there are $t = \frac{p-1}{2}$ codewords $\mathbf{u}_1, \ldots, \mathbf{u}_t$
of type $[1^2]$ with respect to $\mathbf{w}$, and $t$ codewords $\mathbf{u}%
_1^\prime, \ldots, \mathbf{u}_t^\prime$ of type $[-1^2]$ with respect to $%
\mathbf{w}$. Since they cannot share $1$ or $-1$ at the same place, 
\begin{equation*}
\sum_{i=1}^{t} (\mathbf{u}_i - \mathbf{w}) = \mathbf{i}, \quad
\sum_{i=1}^{t} (\mathbf{u}_i^\prime - \mathbf{w}) = -\mathbf{i}.
\end{equation*}
Let us denote $\mathcal{U}_2^+(\mathbf{w}) = \{ \mathbf{u}_1, \ldots, 
\mathbf{u}_t \}$ and $\mathcal{U}_2^-(\mathbf{w}) = \{ \mathbf{u}_1^\prime,
\ldots, \mathbf{u}_t^\prime \}$.

It turns out that it is useful to denote codewords in terms of cyclic
shifts. Define $\pi(a_1, \ldots, a_{p-1}) = (a_2, \ldots, a_{p-1}, a_1)$ and
write $\langle \mathbf{w} \rangle = \{ \mathbf{w}, \pi(\mathbf{w}), \pi^2(%
\mathbf{w}), \ldots \}$, the set of all cyclic shifts of $\mathbf{w}$.

\begin{theorem}
Conjecture~\ref{conj:primetile} is true for $p = 2, 3$.
\end{theorem}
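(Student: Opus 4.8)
The plan is to establish the equivalent statement for the semi-cross, namely Conjecture~\ref{conj:semicross} for $p=2$ and $p=3$: every tiling of $\mathbb{Z}^{p-1}$ by $V_{p-1}=\{\mathbf{0},\mathbf{e}_1,\ldots,\mathbf{e}_{p-1}\}$ is lattice; by the reduction described just before Conjecture~\ref{conj:semicross} this yields Conjecture~\ref{conj:primetile} for these $p$. In both cases the strategy is the same: exhibit a full-rank sublattice $\Lambda\leq\mathbb{Z}^{p-1}$ of index exactly $p$ all of whose elements are periods of the given tiling $\mathcal{T}$, and then use a density count to conclude that the set $\mathcal{L}$ of codewords is a single coset of $\Lambda$, which is precisely the assertion that $\mathcal{T}$ is lattice.

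First I would collect periods. Since $\lvert V_{p-1}\rvert=p$ is prime, Theorem~\ref{thm:period} gives that $p(\mathbf{v}-\mathbf{w})$ is a period for all $\mathbf{v},\mathbf{w}\in V_{p-1}$; in particular $p\mathbf{e}_1,\ldots,p\mathbf{e}_{p-1}$ are periods, so $\mathcal{L}$ is $p\mathbb{Z}^{p-1}$-periodic. For $p=2$ this already suffices, since $2\mathbb{Z}$ has index $2$ in $\mathbb{Z}$. For $p=3$ the lattice $3\mathbb{Z}^2$ has index $9$, which is too coarse, so here I would invoke the polynomial method: by the $k=p-1$ case of Lemma~\ref{lem:counting}, a word $\mathbf{w}$ is a codeword if and only if $\mathbf{w}+\mathbf{i}$ is, i.e., $\mathbf{i}=(1,1)$ is a period. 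Then $\Lambda:=\langle\mathbf{i},3\mathbf{e}_1\rangle$ is a sublattice of $\mathbb{Z}^2$ of full rank; the matrix with rows $(1,1)$ and $(3,0)$ has determinant $-3$, so $[\mathbb{Z}^2:\Lambda]=3$, and every element of $\Lambda$ is a period; thus $\mathcal{L}+\Lambda=\mathcal{L}$.

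Finally I would run the counting step, which is identical in the two cases. Put $A=\mathbb{Z}^{p-1}/\Lambda$, so $\lvert A\rvert=p$, and let $\pi$ denote the quotient map. Being $\Lambda$-periodic, $\mathcal{L}=\pi^{-1}(\overline{\mathcal{L}})$ for some $\overline{\mathcal{L}}\subseteq A$, and likewise each $\mathcal{L}+\mathbf{e}_i$ (with $\mathbf{e}_0=\mathbf{0}$) is $\Lambda$-periodic. Because $\mathcal{T}$ is a tiling we have the partition $\mathbb{Z}^{p-1}=\bigsqcup_{i=0}^{p-1}(\mathcal{L}+\mathbf{e}_i)$, and this descends to a partition $A=\bigsqcup_{i=0}^{p-1}(\overline{\mathcal{L}}+\overline{\mathbf{e}_i})$: surjectivity of $\pi$ gives the covering, and if some coset of $\Lambda$ lay in two of the sets on the right then its $\pi$-preimage would lie in the intersection of two of the sets $\mathcal{L}+\mathbf{e}_i$, which is empty. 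Comparing cardinalities, $p=\lvert A\rvert=p\,\lvert\overline{\mathcal{L}}\rvert$, so $\lvert\overline{\mathcal{L}}\rvert=1$ and $\mathcal{L}$ is a single coset of $\Lambda$; hence $\mathcal{T}$ is lattice, and by the reduction Conjecture~\ref{conj:primetile} holds for $p=2,3$.

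Almost all of this is routine bookkeeping; the one genuine input is the extra period $\mathbf{i}$ for $p=3$, which is exactly where the polynomial method does the work. I expect this to be the crux, and also the place where the argument must be reworked for larger primes: for $p\geq5$ the periods coming from Theorem~\ref{thm:period} and from the $k=p-1$ case of Lemma~\ref{lem:counting} no longer cut the index of the period lattice down to $p$, and one would instead have to exploit the finer combinatorial constraints recorded above — the sets $\mathcal{U}_2^{+}(\mathbf{w})$ and $\mathcal{U}_2^{-}(\mathbf{w})$, coincidences of nonzero coordinates, and the cyclic-shift notation $\langle\mathbf{w}\rangle$ — to pin down the structure of $\mathcal{L}$.
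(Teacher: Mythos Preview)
Your proposal is correct and follows essentially the same approach as the paper. The paper's proof is extremely terse---for $p=3$ it simply asserts that the lattice generated by $(3,0)$ and $(1,1)$ consists of codewords and has determinant $3$---but the underlying argument is exactly yours: $(3,0)$ is a period by Theorem~\ref{thm:period}, $(1,1)$ is a period by the $k=p-1$ case of Lemma~\ref{lem:counting} (as the paper notes just after that lemma), and then an index-$p$ period lattice forces $\mathcal{L}$ to be a single coset. Your write-up just makes explicit the reduction to the semi-cross and the density/counting step that the paper leaves to the reader.
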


\begin{proof}
For $p=2$, it is obvious. For $p=3$, simply note that all words of the
lattice generated by $(3,0)$ and $(1,1)$ are codewords. As the determinant
of the matrix consisting of the two vectors equals to $3,$ the proof follows.
\end{proof}

\begin{theorem}
Conjecture~\ref{conj:primetile} is true for $p = 5$.
\end{theorem}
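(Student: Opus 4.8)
By the equivalence of Conjectures~\ref{conj:primetile} and~\ref{conj:semicross}, it suffices to show that every tiling $\mathcal{T}=\{V_{4}+\mathbf{l}:\mathbf{l}\in\mathcal{L}\}$ of $\mathbb{Z}^{4}$ by the semi-cross $V_{4}=\{\mathbf{0},\mathbf{e}_{1},\mathbf{e}_{2},\mathbf{e}_{3},\mathbf{e}_{4}\}$ is lattice. By Theorem~\ref{thm:period}, each $5\mathbf{e}_{i}=5(\mathbf{e}_{i}-\mathbf{0})$ is a period of $\mathcal{T}$, so $\mathcal{L}+5\mathbb{Z}^{4}=\mathcal{L}$; after translating we may assume $\mathbf{0}\in\mathcal{L}$, and it then suffices to prove that the image $\mathcal{L}\subseteq\mathbb{Z}_{5}^{4}$ — a set of $5^{3}$ points which, together with $V_{4}$, tiles $\mathbb{Z}_{5}^{4}$ — is a subgroup. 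From here on I work in $\mathbb{Z}_{5}^{4}$ and call elements of $\mathcal{L}$ codewords. Lemma~\ref{lem:counting} with $k=p-1=4$ gives $\mathbf{w}\in\mathcal{L}\iff\mathbf{w}+\mathbf{i}\in\mathcal{L}$, so $\mathbf{i}=(1,1,1,1)$ is a codeword; with $k=2$ it gives exactly two codewords of type $[1^{2}]$, whose supports must be disjoint by Corollary~\ref{CC}, so after permuting coordinates these are $v_{1}:=\mathbf{e}_{1}+\mathbf{e}_{2}$ and $v_{2}:=\mathbf{e}_{3}+\mathbf{e}_{4}$; for $k=1,3$ the lemma yields no codewords, in agreement with Corollary~\ref{CC}. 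Thus $\mathbf{0},v_{1},v_{2},\mathbf{i}=v_{1}+v_{2}\in\mathcal{L}$.

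The heart of the proof is to show that for \emph{every} codeword $\mathbf{w}$ the two-element sets $\mathcal{U}_{2}^{\pm}(\mathbf{w})$ supplied by $k=2$ of Lemma~\ref{lem:counting} are governed by the same coordinate partition $\{1,2\}\sqcup\{3,4\}$, i.e.\ $\mathcal{U}_{2}^{+}(\mathbf{w})=\{\mathbf{w}+v_{1},\mathbf{w}+v_{2}\}$; equivalently, $\mathcal{L}$ is invariant under translation by $v_{1}$ and $v_{2}$, hence is a union of cosets of $H:=\langle v_{1},v_{2}\rangle\cong\mathbb{Z}_{5}^{2}$. This already holds for $\mathbf{w}=\mathbf{0}$ (our normalization) and is forced for $\mathbf{w}=v_{1}$: since $\mathbf{i}=v_{1}+v_{2}$ is a codeword and $\mathbf{i}-v_{1}=v_{2}$ has type $[1^{2}]$, the partition for $\mathcal{U}_{2}^{+}(v_{1})$ must be $\{1,2\}\sqcup\{3,4\}$, giving $2v_{1}\in\mathcal{L}$. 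I would proceed by feeding each newly found codeword back into the $\mathcal{U}_{2}^{\pm}$-analysis and combining it with Corollaries~\ref{blowout} and~\ref{CC} for $a=2,3,4$ (coprime to $5$) and with the fact — a short character computation on $\mathbb{Z}_{5}^{3}$, valid because a vanishing sum of five fifth roots of unity must consist of all distinct roots, so the eigenvalues $2+\zeta^{a_{1}}+\zeta^{a_{2}}+\zeta^{a_{3}}$ ($\zeta=e^{2\pi i/5}$) never vanish — that each coordinate line of $\mathbb{Z}_{5}^{4}$ contains exactly one codeword. Carrying this out is the main obstacle: one must verify that these rigidities leave no room for a ``wrong'' partition at any codeword, so that the induction terminates after finitely many explicitly checked cases — precisely what Sz\'{a}bo handled by a computer clique search for the equivalent Corr\'{a}di conjecture \cite{Sz2}. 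A readable proof has to push the polynomial identities of Lemma~\ref{lem:counting} and Theorem~\ref{C} further; the residual order-$8$ symmetry fixing $\{\{1,2\},\{3,4\}\}$ and the organization of codewords into cyclic-shift orbits $\langle\mathbf{w}\rangle$ keep the case analysis tractable.

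Granting that $\mathcal{L}$ is a union of cosets of $H$, the conclusion is short. Since $H$ is the kernel of $\rho:\mathbb{Z}_{5}^{4}\to\mathbb{Z}_{5}^{2}$, $(y_{1},y_{2},y_{3},y_{4})\mapsto(y_{1}-y_{2},\,y_{3}-y_{4})$, and $\rho(V_{4})=\{(0,0),\pm(1,0),\pm(0,1)\}=S(2,1)$, the tiling pushes down to a tiling of $\mathbb{Z}_{5}^{2}$ by $S(2,1)$ whose center set $\rho(\mathcal{L})$ has $5$ elements and contains the origin — a $PL(2,1,5)$-code. A short direct check over $\mathbb{Z}_{5}^{2}$ — each coordinate line meets the code exactly once, after which the cross's difference condition forces the slope — shows the only such codes are the two lattices $\langle(1,2)\rangle$ and $\langle(2,1)\rangle$. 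Hence $\rho(\mathcal{L})$ is a subgroup, so $\mathcal{L}=\rho^{-1}(\rho(\mathcal{L}))$ is a subgroup of $\mathbb{Z}_{5}^{4}$ of order $5^{3}$, and pulling back to $\mathbb{Z}^{4}$ exhibits the original center set as a full-rank sublattice, proving Conjecture~\ref{conj:primetile} for $p=5$. (Equivalently, one may reformulate at the outset: the ``one codeword per line'' fact makes $\mathcal{L}$ the graph of a function $\psi:\mathbb{Z}_{5}^{3}\to\mathbb{Z}_{5}$ with $\psi(\mathbf{0})=0$ whose three differences $\psi(\mathbf{y}-\mathbf{f}_{j})-\psi(\mathbf{y})$ are distinct and lie in $\{2,3,4\}$ for all $\mathbf{y}$, where $\mathbf{f}_{1},\mathbf{f}_{2},\mathbf{f}_{3}$ is the standard basis; the task is then to prove $\psi$ linear, which is as hard as the partition claim above.)
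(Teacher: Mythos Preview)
Your proposal has a genuine gap: the step you yourself call ``the heart of the proof'' --- that for \emph{every} codeword $\mathbf{w}$ one has $\mathcal{U}_2^+(\mathbf{w})=\{\mathbf{w}+v_1,\mathbf{w}+v_2\}$, equivalently $\mathcal{L}+H=\mathcal{L}$ --- is not proved. You verify it at $\mathbf{0}$ by normalization and at $v_1$ by using $\mathbf{i}\in\mathcal{L}$, then write ``Carrying this out is the main obstacle'' and later ``Granting that $\mathcal{L}$ is a union of cosets of $H$\ldots''. But your inductive scheme, which propagates the partition only along $v_1,v_2$-steps, can only ever reach codewords lying in $H$ itself; it gives no information about the $100$ codewords outside $H$ and therefore cannot establish $H$-invariance of $\mathcal{L}$. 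The auxiliary facts you list (Corollaries~\ref{blowout}, \ref{CC} for $a=2,3,4$, the one-codeword-per-line claim) do not by themselves pin down the partition at an arbitrary codeword either.

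The paper closes exactly this gap by tracking, in addition to $\mathcal{U}_2^+(\mathbf{w})$, the set $\mathcal{U}_3^+(\mathbf{w})$ of four codewords of type $[1^2,-1]$ with respect to $\mathbf{w}$: the six type-$[1^2]$ words near a codeword must each be covered by a codeword of type $[1^2]$ or $[1^2,-1]$, and since there are only two of the former there must be four of the latter. After normalizing $\mathcal{U}_2^+(\mathbf{0})=\langle(1,0,1,0)\rangle$ and $\mathcal{U}_3^+(\mathbf{0})=\langle(1,1,-1,0)\rangle$, the casework shows that both sets are preserved when one moves by any element of $\mathcal{U}_2^+\cup\mathcal{U}_3^+$. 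Because the $\mathcal{U}_3^+$-steps leave $H$, this propagation now reaches the full lattice generated by $\langle(1,0,1,0)\rangle$, $(1,1,-1,0)$ and $(5,0,0,0)$, which has determinant $5$ and hence coincides with $\mathcal{L}$. Your quotient-to-$PL(2,1,5)$ endgame is a pleasant alternative to the determinant count, but it only kicks in after the $H$-invariance is established, and for that you need the $\mathcal{U}_3^\pm$ analysis (or something equivalent) that you omitted.
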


\begin{proof}
Suppose that $\mathbf{w}$ is a codeword. There are $6$ codewords of type $%
[1^2]$, each of them covered by a codeword either of type $[1^2]$ or of type 
$[1^2, -1]$. Hence, as there are $2$ codewords of type $[1^2]$, there has to
be $4$ codewords of type $[1^2, -1]$. We denote this set by $\mathcal{U}_3^+(%
\mathbf{w})$, and likewise, the set of $4$ codewords of type $[-1^2, 1]$ by $%
\mathcal{U}_3^-(\mathbf{w})$.

Because $\mathcal{U}_{2}^{+}(\mathbf{w})$ has two words, we may assume
without loss of generality that $\mathcal{U}_{2}^{+}(\mathbf{w})=\langle
(1,0,1,0)\rangle $. Also $\mathcal{U}_{3}^{+}(\mathbf{w})$ has $4$ elements.
Thus we can again assume without loss of generality that $\mathcal{U}%
_{3}^{+}(\mathbf{w})=\langle (1,1,-1,0)\rangle $.

By casework near $\mathbf{w}$, it can be proved that if $\mathbf{a}\in 
\mathcal{U}_{3}^{+}(\mathbf{w})$ or $\mathbf{a}\in \mathcal{U}_{2}^{+}(%
\mathbf{w})$ then $\mathcal{U}_{2}^{+}(\mathbf{w}+\mathbf{a})=\mathcal{U}%
_{2}^{+}(\mathbf{w})$ and $\mathcal{U}_{3}^{+}(\mathbf{w}+\mathbf{a})=%
\mathcal{U}_{3}^{+}(\mathbf{w})$. It follows that all words of the lattice
generated by $\langle (1,0,1,0)\rangle $, $(1,1,-1,0)$ and $(5,0,0,0)$ are
codewords. This lattice has determinant $5$.
\end{proof}

Using similar ideas and methods, but considering many more cases, the
following theorem can be proved.

\begin{theorem}
Conjecture~\ref{conj:primetile} is true for $p=7$.
\end{theorem}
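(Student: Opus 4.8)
The plan is to follow, at a larger scale, the scheme that settled the cases $p=3$ and $p=5$. Let $\mathcal{T}=\{V_6+\mathbf{l}:\mathbf{l}\in\mathcal{L}\}$ be a tiling of $\mathbb{Z}^6$ by the semi-cross $V_6=\{\mathbf{0},\mathbf{e}_1,\ldots,\mathbf{e}_6\}$; after translating we may assume $\mathbf{0}\in\mathcal{L}$. The target is to exhibit an explicit sublattice $\Lambda\subseteq\mathbb{Z}^6$ of index $7$ all of whose points are codewords. That suffices: if $\Lambda\subseteq\mathcal{L}$ with $[\mathbb{Z}^6:\Lambda]=7=\lvert V_6\rvert$, then since the translates of $V_6$ by $\mathcal{L}$ are disjoint, $V_6$ has distinct images in $\mathbb{Z}^6/\Lambda$ and hence is a complete set of coset representatives, so $\{V_6+\mathbf{l}:\mathbf{l}\in\Lambda\}$ is already a tiling and therefore $\mathcal{L}=\Lambda$, i.e., $\mathcal{T}$ is lattice. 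Note also that by Theorem~\ref{thm:period} each $7\mathbf{e}_i$ is a period, so $7\mathbb{Z}^6\subseteq\mathcal{L}$ and the whole problem lives in the finite quotient $\mathbb{Z}^6/7\mathbb{Z}^6$.

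First I would tabulate, from Lemma~\ref{lem:counting} (with $\mathbf{0}$ a codeword) and the consequence that $\mathbf{w}$ is a codeword if and only if $\mathbf{w}+\mathbf{i}$ is, where $\mathbf{i}=(1,\ldots,1)$, the exact number of codewords of each small type relative to $\mathbf{0}$: there are $3$ of type $[1^2]$, $2$ of type $[1^3]$, $3$ of type $[1^4]$, $0$ of type $[1^5]$, and a single one of type $[1^6]$ (namely $\mathbf{i}$), together with the numbers for the other small types from the extended form of Lemma~\ref{lem:counting}. Then, exactly as in the $p=5$ proof, I would run covering arguments. Each of the $\binom{6}{2}=15$ words of type $[1^2]$ must be covered by a codeword obtained from it by subtracting some $\mathbf{e}_i$; a codeword of type $[1^1]$ or $[1^1,-1^1]$ adjacent to $\mathbf{0}$ is impossible, since it would cover a point of the semi-cross at $\mathbf{0}$, so the $15-3=12$ words of type $[1^2]$ not covered by a type-$[1^2]$ codeword force $12$ codewords of type $[1^2,-1]$; similarly for $[-1^2,1]$. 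Pushing the same analysis onto the words of types $[1^3]$, $[1^2,-1]$, and $[1^4]$, and pinning the remaining free parameters with Lemma~\ref{lem:counting}, forces codewords of types $[1^3,-1]$, $[1^2,-1^2]$, $[1^2,2]$, and so on, and determines the local structure near $\mathbf{0}$ up to the coordinate-permutation group $S_6$ and the cyclic shift $\pi$.

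Next I would normalize and propagate. For $p=5$ the $p=5$ proof needed only the normalization $\mathcal{U}_2^+=\langle(1,0,1,0)\rangle$ followed by $\mathcal{U}_3^+=\langle(1,1,-1,0)\rangle$; for $p=7$ the cyclic group on $6$ letters has the proper subgroups of orders $2$ and $3$, so the distinguished sets (the type-$[1^3]$ codewords, the $12$-element type-$[1^2,-1]$ set, and so on) can decompose into $\pi$-orbits of several shapes, and each ``without loss of generality'' step will branch into subcases indexed by those orbit shapes. In each branch I would establish the propagation lemma used at $p=5$ --- if $\mathbf{a}$ lies in a distinguished set $\mathcal{U}^\pm_j(\mathbf{w})$ then $\mathcal{U}^\pm_j(\mathbf{w}+\mathbf{a})=\mathcal{U}^\pm_j(\mathbf{w})$ --- by local casework around $\mathbf{w}$; iterating it shows that a fixed finite list of forced codewords (a $\pi$-orbit of type-$[1^3]$ codewords such as $\langle(1,0,1,0,1,0)\rangle$, a type-$[1^2,-1]$ or type-$[1^3,-1]$ codeword, and the period vector $(7,0,0,0,0,0)$) generates a sublattice lying entirely inside $\mathcal{L}$; a direct determinant computation then confirms it has index $7$, which completes the proof.

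The conceptual content does not exceed that of the $p=5$ argument and Lemma~\ref{lem:counting}; the entire difficulty is the size of the case analysis. The covering step now produces several distinguished sets of sizes up to $12$, rather than two sets of size $4$; there are more word-types to track; and the non-primitive subgroups of the cyclic group multiply the normalization branches. The real work --- and the only place the argument could break --- is verifying the propagation lemma in every branch, ruling out each alternative covering by a local clash, and then checking that the resulting generators span a sublattice of index \emph{exactly} $7$, neither a proper multiple nor all of $\mathbb{Z}^6$. Apart from this bookkeeping there is no new idea.
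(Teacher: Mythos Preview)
Your proposal is correct and follows essentially the same route as the paper. The paper's own proof is a single sentence---``Using similar ideas and methods, but considering many more cases, the following theorem can be proved''---and your outline is a faithful, more detailed expansion of exactly that strategy: your type counts from Lemma~\ref{lem:counting} are correct, the covering argument forcing twelve type-$[1^2,-1]$ codewords is correct, and the propagation-to-an-index-$7$-sublattice endgame matches the $p=5$ template the paper uses and points to.
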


We note that in the proof of the above theorems we have not used explicitly
the fact that $p$ is a prime, other than in Lemma~\ref{lem:counting}. We
believe that the property distinguishing tilings by semi-crosses of prime
size from others is that of being cyclic. A tiling $\mathcal{T}=\{V+\mathbf{l%
}:\mathbf{l}\in \mathcal{L}\}$ is called \emph{cyclic} if there is
reordering of coordinates such that, for each codeword $\mathbf{l}$, 
\begin{equation*}
\mathbf{l}\in \mathcal{L}\;\Rightarrow \;\langle \mathbf{l}\rangle \subset 
\mathcal{L};
\end{equation*}%
\noindent that is, if for any codeword, also all its cyclic shifts are
codewords. Indeed, for $p>2$ a prime, the only lattice tiling (up to
permutation of coordinates) of $\mathbb{Z}^{p-1}$ by semi-crosses is 
\begin{equation*}
\mathcal{L}=\{\mathbf{l}\in \mathbb{Z}^{p-1}:p\mid l_{1}+2l_{2}+\cdots
+(p-1)l_{p-1}\},
\end{equation*}%
which is cyclic. On the other hand, if $n$ is not a prime, it can be proven
that no lattice tiling of $\mathbb{Z}^{n-1}$ by semi-crosses is cyclic.

For the sake of completeness we note that for any $n$ there is a unique, up
to a congruency, lattice tiling $\mathbb{Z}^{n-1}$ by semi-crosses. This
follows from a statement in \cite{H10}, that there is a lattice tiling of $%
\mathbb{Z}^{n}$ by a tile $V$ if and only if there is a homomorphism $\phi :%
\mathbb{Z}^{n}\rightarrow G,$ an additive group of order $\left\vert
V\right\vert ,$ such that a restriction of $\phi $ to $V$ is a bijection,
and from the symmetry of the semi-cross.

We guess that finding additional properties about the lattice tilings
(assuming Conjecture \ref{conj:primetile} is true) will enable one to prove
the Golomb--Welch conjecture in the case when $\lvert S(n,e)\rvert $ is a
prime.

\section{Research Inspired by the Golomb--Welch Conjecture}

\label{sec:further}

By Google Scholar there are 191 papers citing \cite{GW} (1970) and 83 papers
citing \cite{GW} (1968); this includes papers that cites both. In this
section we describe only a few of these papers.

It is very common in mathematics to generalize a problem in order to be able
to solve it. Also in the case of the Golomb--Welch conjecture there are
several modifications and generalizations. However, to the best of our
knowledge, so far none of these generalization has contributed to the
solution of the Golomb--Welch conjecture itself. In the first part of this
section we describe some of these generalizations, in the second we will
look at generalizations of perfect Lee codes.

\subsection{Generalizations of Perfect Lee Codes and of the Golomb--Welch
Conjecture}

\label{subsec:generalization}

We start this subsection with a strengthening of the Golomb--Welch
conjecture. As mentioned above, the Golomb--Welch conjecture has been proved
for all pairs $(n,e)$ where $3\leq n\leq 5$. In fact all pertinent results
proved a stronger statement: There is no tiling of $\mathbb{R}^{n}$ with Lee
spheres of radii at least two, even with different radii. Still a stronger
conjecture has been raised in \cite{GMP1}. For obvious reasons it has been
formulated in terms of tilings rather than Lee codes.

\begin{conjecture}[\protect\cite{GMP1}]
\label{3} For $n \ge 3$, there does not exist a tiling of $\mathbb{R}^{n}$
with Lee spheres of radius at least $1$ such that the radius of at least one
of them is at least $2$.
\end{conjecture}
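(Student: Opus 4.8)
The plan is to reduce Conjecture~\ref{3}, as far as is possible, to the Golomb--Welch conjecture itself. Taking all radii equal to a common $e\ge 2$ shows that Conjecture~\ref{3} contains Conjecture~\ref{conj:strongGW}, so one cannot hope for an unconditional proof with the present toolkit; the realistic targets are (a) recovering the cases $3\le n\le 5$, where Golomb--Welch is a theorem, and (b) establishing a conditional implication of the form ``Golomb--Welch in dimension $n$ $\Rightarrow$ Conjecture~\ref{3} in dimension $n$''. So suppose a tiling $\mathcal{T}$ of $\mathbb{R}^{n}$ by Lee spheres of radii $\ge 1$ exists in which some sphere $L=\mathbf{c}+L(n,e)$ has radius $e\ge 2$; we may take $e$ maximal among the radii occurring in $\mathcal{T}$. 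Everything will be extracted from the local structure of $\mathcal{T}$ near $L$.

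First I would dispose of large $e$. The sector-counting identity of Lemma~\ref{lem:Post1} is purely combinatorial: its proof only uses that a non-empty intersection of a $6$-sector with a Lee sphere has cardinality in $\{1,7,22,42,57,63,64\}$, which is the same for Lee spheres of \emph{every} radius. Hence Lemma~\ref{lem:Post1} applies to the mixed tiling $\mathcal{T}$ as well, and combining it with Lemma~\ref{lem:Post2} and the density trick of Theorem~\ref{thm:Postextend} should show that spheres of radius $e\ge\frac{\sqrt{2}}{2}n-\frac{3}{4}\sqrt{2}-\frac{1}{2}$ cannot occur in $\mathcal{T}$ with positive density; analogous adaptations of Theorem~\ref{thm:Lepextend} and Corollary~\ref{cor:LP} should push this down to the ranges listed in Theorem~\ref{thm:GWsummary}. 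A single large sphere of bounded influence is invisible to a density argument, so a genuinely local version of these estimates would be needed to cover that degenerate situation, which is already a minor obstacle. After this step, only finitely many radii $e$ remain for each $n$, and the crux is the small-$e$ regime, ultimately $e=2$.

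For small $e$ the heart of the matter is controlling the radii of the tiles meeting the boundary $B(n,e)$ of $L$. Near the midpoint of a facet of $L(n,e)$ the boundary is a single ``staircase'' hyperplane and several radii can occur there, but near a vertex $\mathbf{c}\pm e\mathbf{e}_i$ and, more generally, near the lower-dimensional faces of $L(n,e)$, the re-entrant geometry of the Lee sphere is rigid. The key statement I would try to prove is a \emph{propagation lemma}: in any legal completion of $\mathcal{T}$ around such a face, the tile abutting $L$ across the adjacent facet is itself a translate of $L(n,e)$; iterating this around $\partial L$ and then outward would force $\mathcal{T}$ to be an honest $PL(n,e)$-code, contradicting Conjecture~\ref{conj:strongGW} (a theorem for $n\le 5$). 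An algebraic route to the same conclusion is to feed the partition of $\mathbb{Z}^{n}$ induced by $\mathcal{T}$ into the polynomial method of Section~\ref{subsec:polynomial}: one still obtains a linear functional $T$ with $T(MQ_{L(n,e_i)})=1$ for the tile of radius $e_i$ through a given point, and one would search for a Nullstellensatz obstruction mixing the polynomials $Q_{L(n,e_i)}$ in the spirit of Theorem~\ref{D}; the difficulty is that the tiles are not translates of one fixed set, so Theorem~\ref{C} does not apply verbatim and would first have to be extended to finite families of tiles of unequal size.

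The main obstacle is precisely this rigidity/propagation step. It fails for $n=2$, where Lee spheres of all radii tile $\mathbb{R}^{2}$, and it fails for every $n$ once radius-$0$ pieces are admitted, since $\mathbb{R}^{n}\setminus L(n,e)$ is a union of unit cubes; any argument must therefore exploit both $n\ge 3$ and the hypothesis that every tile has radius $\ge 1$. For $n=3$ this rigidity is exactly what the ``picture says it all'' casework of \cite{GMP,GMP1} establishes, and for $n=4$ it is what the exhaustive search of \cite{Spa} verifies. Producing a casework-free propagation lemma that is uniform in $n$ and in the radii of the neighbouring spheres is, in my view, where the real work lies; I would expect the most promising line to be a hybrid in which the local cross-polytope estimates bound from above the possible radii of the neighbours of $L$, and an extended polynomial-method argument then rules out the remaining finite list of local configurations.
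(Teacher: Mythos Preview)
The statement you are addressing is not a theorem in the paper but an \emph{open conjecture}, quoted from \cite{GMP1}; the paper offers no proof and merely records that \cite{GMP1} settles the case $n=3$. There is therefore no ``paper's own proof'' to compare your attempt against.

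You yourself identify the decisive obstruction in your first paragraph: taking all radii equal to some $e\ge 2$ embeds Conjecture~\ref{conj:strongGW} inside Conjecture~\ref{3}, so an unconditional proof of the latter would in particular resolve the Golomb--Welch conjecture. Everything that follows in your proposal is thus, by your own framing, a research programme rather than a proof, and you explicitly flag the crucial propagation lemma as ``where the real work lies''. That is an honest assessment, but it means the document is not a proof attempt in the usual sense; it is an outline of how one might try to attack a problem that is currently open even in its weakest nontrivial form.

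Two technical remarks on the outline itself. First, your density adaptation of Lemma~\ref{lem:Post1} to mixed tilings is shakier than you suggest: while the sector-intersection cardinalities $\{1,7,22,42,57,63,64\}$ are indeed radius-independent, the counts $g_i$ on the sphere side depend heavily on the radius, and for small radii the linear combination $g_1-g_7-10g_{22}+10g_{42}+g_{57}-g_{63}$ can be positive; a mixed tiling can therefore balance large-radius spheres against small-radius ones, and your argument only excludes tilings in which large spheres have \emph{full} density, which is just the original Post bound again. Second, the polynomial-method route genuinely breaks down for mixed tiles: Theorem~\ref{C} and its consequences rest on there being a single polynomial $Q_V$, and no extension to a family $\{Q_{L(n,e_i)}\}$ with distinct $|V_i|$ is known or even plausibly formulated. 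These are not minor technicalities to be patched but the substantive content of the conjecture.
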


\noindent In the same paper the conjecture is proved for $n=3$.

Now we focus on diameter-$d$ perfect Lee codes, which constitute a
generalization of perfect $e$-error-correcting Lee codes. Ahlswede et al.\ 
(see \cite{Als}) introduced diameter perfect codes for distance regular
graphs. Later the notion has been extended to metric spaces. Let $(M,\delta
) $ be a metric space. Then a set $C\subseteq M$ is a \emph{diameter-$d$ code%
} if $\delta (u,v)\geq d$ for any $u,v\in C$, and a set $A\subseteq M$ is an 
\emph{anticode of diameter $d$} if $\delta (u,v)\leq d$ for all $u,v\in A$.
Further, let $\mathcal{S}=\{S_{i}:i\in I\}$ be a family of subsets of an
underlying set $M$. Then a set $T\subseteq M$ is called a \emph{transversal}
of $\mathcal{S}$ if there is a bijection $f:I\rightarrow T$ so that $f(i)\in
S_{i}$. In what follows we restrict ourselves to $M=\mathbb{Z}^{n}$.

\begin{definition}
Let $C \subseteq \mathbb{Z}^{n}$. Then $C$ is a \emph{diameter-$d$ perfect
Lee code} in $\mathbb{Z}^{n}$ if $C$ is a diameter-$d$ code, and there is a
tiling $\mathcal{T}$ of $\mathbb{Z}^{n}$ by translates of the anticode of
diameter $d-1$ of maximum size such that $C$ is a transversal of $\mathcal{T}
$. The diameter-$d$ perfect Lee code in $\mathbb{Z}^{n}$ will be denoted by $%
DPL(n,d)$.
\end{definition}

\noindent Any error-correcting perfect Lee code is also a diameter perfect
Lee code. Indeed, it is easy to see that, for $d$ even, the anticode of
diameter $d$ of the maximum size is the Lee sphere $S(n,r)$ with $r=\frac{d}{%
2}$. Thus, for $d$ odd, $PL(n,d)$-codes are $DPL(n,e)$-codes where $e=\frac{%
d-1}{2}$. It was proved in \cite{Als} that, for $d$ odd, the anticode of
diameter $d$ of maximum size is the double-sphere $DS(n,e)=S(n,e)\cup
(S(n,e)+\mathbf{e}_{1})$ with $e=\frac{d-1}{2}$.

Etzion \cite{Etzion1} asks whether the Golomb--Welch conjecture can be
generalized to: Other than Minkowski's lattice \cite{Min} $DS(n,6)$, are
there $DPL(n,d)$-codes with $n\geq 3$, and $d>4$? Buzaglo and Etzion \cite%
{Buzaglo} partially proved the conjecture by showing that there is no $%
DPL(n,2r+2)$-code for $r>2n-4$ where $n>2$. Further generalization of
Etzion's conjecture is given in \cite{ADH}, where the notion of a perfect
distance-domination set in a graph is introduced. This notion generalizes
notions of perfect error-correcting codes, perfect diameter codes, perfect
codes in graphs \cite{Biggs}, and perfect dominating sets \cite{W}.

The Lee (Manhattan) metric is a special case of $l_{p}$ metric for $p=1$. We
note that the nonexistence of some perfect codes in $l_{p}$ metric, $1 \leq
p < \infty$ was shown in \cite{Zhang}.

\subsection{Quasi-Perfect Lee codes and $PL(n,1,q)$-codes}

\label{subsec:QPL}

As mentioned in Introduction an interest in perfect codes in the Lee metric
is due to their various applications. As it is widely believed that the
Golomb--Welch conjecture is true, i.e., that there are no $PL(n,e)$-codes
for $n\geq 3$ and $e>1$, codes \textquotedblleft {}close\textquotedblright
{} to perfect codes have been introduced and studied. To the best of our
knowledge, quasi-perfect Lee codes have been looked at first time in \cite%
{Bader1}. A code $C\subseteq \mathbb{Z}^{n}$ ($C\subseteq \mathbb{Z}%
_{q}^{n}) $ is called \emph{quasi-perfect} if the minimum distance of $C$ is 
$2e+1$ or $2e+2$ and each $\mathbf{x}$ in $\mathbb{Z}^{n}$ ($\mathbb{Z}%
_{q}^{n})$ is at distance at most $e+1$ from at least one codeword $\mathbf{y%
}\in C$. Quasi-perfect Lee codes in $\mathbb{Z}^{n}$ and in $\mathbb{Z}%
_{q}^{n}$ are denoted $QPL(n,e)$ and $QPL(n,e,q)$. In \cite{Bader1} $%
QPL(2,e,q)$-codes have been constructed for all $e>1$ and all $%
2e^{2}+2e+1\leq q<2(e+1)^{2}+2(e+1)+1.$ A fast algorithm for decoding these
codes was presented in \cite{H1}. The first $QLP(n,e)$-code with $n>2$ has
been constructed in \cite{H6}, namely it is shown there that there is $%
QPL(3,e)$-code for all $1\leq e\leq 6$. Unfortunately, it is also proved
there that for each $n$ there are only finitely many values of $e$ such that
there is a linear $QPL(n,e)$-code. Thus, the property for a code to be a
quasi-perfect code in the Lee metric is still too restrictive. The first
construction of $QPL(n,e,q)$-codes for infinitely many $n$, based on Cayley
graph, has been recently presented in \cite{Ca} and \cite{Q}. In \cite{Bibak}
it was shown that these Cayley graphs are in fact Ramanujan graphs. Another
construction of $QLP(n,e)$-codes for (possibly infinitely many) $n\equiv 1(%
\func{mod}6)$ has been provided in \cite{Zhang}, where also a construction
of quasi-perfect codes under $l_{p}$ metric is given.

$PL(n,1,q)$-codes constitute the only known class of perfect $e$%
-error-correcting codes for $n\geq 3$. Therefore, with respect to possible
applications, these codes have been looked at more closely. It is stated in 
\cite{GW} that $PL(n,1,q)$-codes might exist for $q<2n+1$ if $2n+1$ is a
perfect square; to support this claim a $PL(4,1,3)$-code is constructed
there. A complete answer to the question in the case of linear (lattice)
codes is given in \cite{H3}. It is proved there that: Let $%
2n+1=p_{1}^{\alpha _{1}}\cdots p_{k}^{a_{k}}$ be the prime factorization of $%
2n+1$ and let $p=\prod_{i=1}^{k}p_{i}$. Then a linear $PL(n,1,q)$-code
exists if and only if $p\mid q$. In particular, the smallest $q$, for which
there exists a linear $PL(n,1,q)$-code, equals $p$.

Szabo \cite{Sz1} showed that if $2n+1$ is not a prime then there exists a
non-linear but periodic $PL(n,1)$-code. Therefore, in this case, there
exists a non-linear $PL(n,1,q)$-code for suitable values of $q$; a
characterization of such $q$'s has not be given yet. If $2n+1$ is a prime
then there is the following conjecture.

\begin{conjecture}[\protect\cite{H3}]
If $2n+1$ is a prime then each $PL(n,1,q)$-code is linear, and it is a
periodic extension of the unique, up to a congruence, $PL(n,1,2n+1)$-code.
\end{conjecture}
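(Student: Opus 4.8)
The plan is to reduce the statement to the structure theory of tilings by prime size tiles developed in Section~\ref{sec:numbertheoretic} and to isolate its genuinely open core. Write $p=2n+1$, suppose $p$ is prime, and set $X=S(n,1)=\{\mathbf{0},\pm\mathbf{e}_{1},\dots,\pm\mathbf{e}_{n}\}$, a tile with $|X|=p$. First I would observe that the hypothesis forces a large alphabet: if $C$ is a $PL(n,1,q)$-code with $q\geq 3$ then $|S(n,1,q)|=p$, so the packing identity $|C|\cdot p=q^{n}$ gives $p\mid q^{n}$, hence $p\mid q$ since $p$ is prime; thus there is nothing to prove unless $p\mid q$ (the only remaining case, the small alphabet $q=2$, has a different character and lies outside the scope of the paper, as do $PL(n,e,q)$-codes over small alphabets throughout). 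By Proposition~\ref{P1}, $C$ lifts to a $q$-periodic tiling $\widetilde{C}$ of $\mathbb{Z}^{n}$ by $X$. Since $|X|=p$ is prime, Theorem~\ref{thm:period} gives that $p(\mathbf{v}-\mathbf{w})$ is a period of $\widetilde{C}$ for all $\mathbf{v},\mathbf{w}\in X$; taking $\mathbf{v}=\mathbf{e}_{i}$ and $\mathbf{w}=\mathbf{0}$ shows $\widetilde{C}$ is $p$-periodic in every coordinate. Hence $\widetilde{C}$ descends to a $PL(n,1,p)$-code $C_{0}\subseteq\mathbb{Z}_{p}^{n}$, and, since $p\mid q$, the original $C$ is precisely the periodic extension of $C_{0}$ with period $p$ in each direction. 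It therefore suffices to establish two facts: (i) every $PL(n,1,p)$-code is linear, and (ii) any two linear $PL(n,1,p)$-codes are congruent. Granting these, $C_{0}$ is the unique $PL(n,1,p)$-code up to a congruence, $C$ is its periodic extension, and $C$ is itself linear, because the preimage of a subgroup under $\mathbb{Z}_{q}^{n}\to\mathbb{Z}_{p}^{n}$ is again a subgroup.

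Part~(ii) is the easy half, and I would settle it by hand. A linear $PL(n,1,p)$-code is the same thing as a surjection $\phi:\mathbb{Z}^{n}\to\mathbb{Z}_{p}$ whose restriction to $X$ is a bijection onto $\mathbb{Z}_{p}$; writing $a_{i}=\phi(\mathbf{e}_{i})$, this says precisely that $\{a_{1},\dots,a_{n},-a_{1},\dots,-a_{n}\}=\mathbb{Z}_{p}^{*}$, i.e.\ that $(a_{1},\dots,a_{n})$ is a transversal of the $n$ sign pairs $\{k,-k\}\subseteq\mathbb{Z}_{p}^{*}$. The isometry group $G$ of $\mathbb{Z}^{n}$, the signed permutations, acts on such tuples by $\mathbf{a}\mapsto(\varepsilon_{1}a_{\pi(1)},\dots,\varepsilon_{n}a_{\pi(n)})$; since the $|a_{i}|$ are pairwise distinct this action is free, and as there are exactly $|G|=2^{n}n!$ transversal tuples it is also transitive. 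Hence all linear $PL(n,1,p)$-codes are congruent (post-composing $\phi$ with an automorphism of $\mathbb{Z}_{p}$ only relabels $\phi$ inside its $G$-orbit), the canonical representative being $\phi(\mathbf{e}_{i})=i$, whose kernel is the lattice $\{\mathbf{x}:\sum_{i}i\,x_{i}\equiv 0\pmod{p}\}$.

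Part~(i) is where the real difficulty lies. Because by Theorem~\ref{thm:period} every tiling of $\mathbb{Z}^{n}$ by $X$ is automatically $p$-periodic, claim~(i) is equivalent to the assertion that \emph{every} tiling of $\mathbb{Z}^{n}$ by translates of the cross $X=S(n,1)$ is a lattice tiling. This is precisely Conjecture~\ref{conj:primetile} for the tile $V=X$; by the universal reduction of that conjecture to semi-crosses described after Conjecture~\ref{conj:semicross}, it would follow from the case $p=2n+1$ of Conjecture~\ref{conj:semicross}, which is at present known only for $p\leq 7$. So the statement as a whole is currently a theorem only for $n\in\{1,2,3\}$ (and vacuously when $2n+1$ is composite). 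To go further I would run the polynomial method used in those small cases: work with the linear map $T$ and the polynomial $Q_{X}$, invoke Theorem~\ref{C} and the symmetric-function identities of Lemma~\ref{lem:counting} to pin down, near any codeword $\mathbf{w}$, the configurations $\mathcal{U}_{k}^{\pm}(\mathbf{w})$ of codewords in its vicinity, and then bootstrap the configuration around a single codeword across all of $\mathbb{Z}^{n}$ to recover the lattice of Part~(ii). The hard part---and the reason $p>7$ remains open---is that the enumeration of admissible local configurations grows rapidly with $p$; closing the gap seems to require a structural or inductive substitute for this casework, for instance one exploiting the extra reflection symmetry of the cross relative to a generic semi-cross, or a uniform treatment of the vanishing relations $Q_{X}(x_{1}^{a},\dots,x_{n}^{a})=0$ supplied by Theorem~\ref{D}.
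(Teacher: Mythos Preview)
The statement you are asked to prove is labeled a \emph{conjecture} in the paper and is not proved there; the paper merely records that it has been verified for $n=2,3$ in \cite{H} and for $n=5$ in \cite{H7}. Your proposal correctly recognizes this: you cleanly reduce the question to claim~(i), that every tiling of $\mathbb{Z}^n$ by the cross $S(n,1)$ is a lattice tiling, and you rightly identify this as the special case $V=S(n,1)$ of Conjecture~\ref{conj:primetile}. Your arguments for the auxiliary parts---the divisibility $p\mid q$, the descent via Theorem~\ref{thm:period} to a $PL(n,1,p)$-code, and the uniqueness~(ii) of the linear code up to congruence via the transitive action of the signed permutation group on transversals of the sign pairs in $\mathbb{Z}_p^*$---are correct and constitute exactly the natural framing of the problem.

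One small correction: you conclude that the conjecture is known only for $n\le 3$, by routing through Conjecture~\ref{conj:semicross} for $p\le 7$. That implication is valid, but the paper points out that $n=5$ (so $p=11$) has also been settled, by direct methods in \cite{H7}, without going through the semicross conjecture. In other words, the cross case of Conjecture~\ref{conj:primetile} is currently ahead of the general semicross case, so your reduction, while sound, slightly understates what is known.
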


This conjecture has been proved in \cite{H} for $n=2,3$ and in \cite{H7} for 
$n=5$. Finally, we note that non-periodic $PL(n,1)$-codes have been
constructed in \cite{H4}.

$PL(n,1)$-code can be obviously seen as a tiling of the Euclidian space by
crosses with arms of length one. In \cite{Buzaglo1}, crosses with arms of
length half are considered. These crosses might be scaled by two to form a
discrete shape. A tiling with this shape is also known as a perfect
dominating set. Buzaglo and Etzion prove that a tiling for such a shape
exists if and only if $n=2^{n}-1$ or $n=3^{t}-1$, where $t>0$. The authors
also show a strong connection of these tilings to binary and ternary perfect
codes in the Hamming scheme.

\section{Conclusions}

\label{sec:conclusions}

50 years ago, Golomb and Welch \cite{GW} raised a conjecture whose strong
version claims that there is no $PL(n,e)$-code for $n \geq 3$ and $e > 1$.
In spite of great effort and plenty of papers on the topic, this conjecture
is still far from being solved.

To provide a support for their conjecture, Golomb and Welch \cite{GW} show
that for $n \geq 3$ there exists $e_{n}$, $e_{n}$ not specified, such that
for any $e > e_{n}$ there is no $PL(n,e)$-code. For $3 \leq n\leq 5$, the
Golomb--Welch conjecture has been proved for all $e \geq 2$ (see \cite{GMP}, 
\cite{GMP1}, and \cite{H2}).

It seems that the most difficult case of the Golomb--Welch conjecture is
that of $e=2$. First, the case $e=2$ is a threshold case as there is a $%
PL(n,1)$-code for all $n.$ Second, in \cite{H2}, the proof of nonexistence
of $PL(n,e)$-codes for $3\leq n\leq 5$ and all $e\geq 2$ has been based on
the nonexistence of $PL(n,2)$-codes for the given $n$. So far the strongest
result in this direction is due to Kim \cite{Kim}, where non-existence of $%
PL(n,2)$ code is proved for (likely infinitely) many values of $n$. In
addition, the nonexistence of \emph{linear} $PL(n,2)$ codes for $n\leq 12$
has been proved in \cite{H6}.

As to the weak version of the Golomb--Welch conjecture, the nonexistence of 
\emph{periodic} $PL(n,e)$-codes has been proved by Post in \cite{P} for $%
n\geq 6$, $e\geq \frac{\sqrt{2}}{2}n-\frac{3}{4}\sqrt{2}-\frac{1}{2}$. This
result of Post was asymptotically improved by Lepist\"{o} \cite{Le} who
showed that there is no periodic $PL(n,e)$-code for $n<(e+2)^{2}/2.1$, $%
e\geq 285$. Further, the proof of nonexistence of $PL(n,e,q)$-codes for
specific values of $q$ (i.e. for specific periods for $PL(n,e)$-codes) can
be found in \cite{A1}.

As a main part of this paper we provided new results on the Golomb--Welch
conjecture. It is proved here that the condition \emph{periodic} can be
dropped from both, the result of Post and the result of Lepist\"{o}. In
addition, we showed (see Corollary~\ref{cor:LP}) that $PL(n,e)$-codes do not
exist for $e\geq 18$ and $3e+21\leq n\leq \frac{1}{2}e^{2}-20$.

The above given results have been proved by a variety of clever methods.
Anyway, we feel that none of them is strong enough to prove the
Golomb--Welch conjecture in its entirety.

In greater detail, Golomb and Welch prove Theorem~\ref{thm:tempGW} by using
the fact that a tiling of $\mathbb{R}^{n}$ by the sphere $S(n,e)$, $e$ large
enough, induces a packing of $\mathbb{R}^{n}$ by translates of the
cross-polytope with an arbitrarily high density smaller than $1$. On the
other hand, it is well-known that for $n\geq 3$, the cross-polytope does not
tile $\mathbb{R}^{n}$ by translations, and it can be shown that the packing
density of a bounded set that does not tile $\mathbb{R}^{n}$ is bounded away
from $1$. To get an explicit bound on $e$ one would need to have an upper
bound on the packing density of the cross-polytope. Unfortunately, this is a
very difficult question, and such density is known only for $n=3$ due to
Minkowski~\cite{Min}. We note that the idea of the proof of Theorem~\ref%
{thm:tempGW} has been used by several authors (see e.g.~\cite{H6}) for
generalizations of Lee codes.

In \cite{P}, to obtain an upper bound on $e_{n}$, Post shows the
nonexistence of periodic $PL(n,e)$ codes for $3\leq n\leq 5$ by proving an
inequality for the number of intersections of $3$-dimensional sectors with
Lee spheres. To get the nonexistence of $PL(n,e)$ codes for $n\geq 6,e\geq 
\frac{\sqrt{2}}{2}n-\frac{3}{4}\sqrt{2}-\frac{1}{2}$ Post considers $6$%
-dimensional sectors. It is likely, that dealing with sectors of dimension $%
>6$, would provide a better bound on $e$. However, the number of types how a 
$k$-dimensional sector can be covered by Lee spheres grows very fast with $k$%
; thus to get a needed inequality for the number of intersections of $k$%
-dimensional sectors with Lee spheres would be extremely difficult.

The method used by Lepist\"{o} in \cite{Le} to prove the nonexistence of $%
PL(n,e,q)$-codes for any $n,e,q$ satisfying $n<(e+2)^{2}/2.1,e\geq 285$, and 
$q\geq 2e+1$, is technically very involved. At the moment we do not see a
way how this method could be used to prove the nonexistence of $PL(n,e)$
codes for additional values of $(n,e).$

As for the presented linear programming approach, numerical experiments
suggest that it is unlikely that Corollary~\ref{cor:CEdiscretelinear} by
itself, with a clever choice of function $g$, is powerful enough to resolve
the Golomb--Welch conjecture.

We guess that the methods used in \cite{GMP}, \cite{GMP1}, and \cite{H2} to
prove the Golomb--Welch conjecture for small values of $n$, cannot be applied
for $n \geq 6$. The method of \cite{GMP} seems to be applicable only for $%
n=3 $, the method of \cite{GMP1} is likely computationally infeasible for
for $n\geq 4$, and the method applied in \cite{H2} leads for slightly bigger 
$n$ to a system of too many equations.

It is very common in mathematics to generalize a problem in order to be able
to solve it. Also in the case of the Golomb--Welch conjecture there are
several modifications and generalizations. However, to the best of our
knowledge, so far none of these generalization has contributed to the
solution of the Golomb--Welch conjecture itself. Some of these
generalization have been described in Section~\ref{sec:further}.

With respect to above stated, we guess that essentially new methods are
needed to prove the Golomb--Welch conjecture. Therefore, in Section~\ref%
{sec:numbertheoretic}, we have described two new avenues how to attack this
conjecture. Using a polynomial method introduced originally by Barnes~\cite%
{Barnes}, a necessary condition (see Theorem~\ref{D}) for the existence of a
tiling of $\mathbb{Z}^{n}$ by translates of a tile $V$ is proved in Section~%
\ref{subsec:polynomial}. We believe this is the first necessary condition
for a generic (arbitrary) tile. However, it is difficult to see whether the
system has a common root except in special cases. Therefore, it will require
additional research to enable one to apply this theorem toward the
Golomb--Welch conjecture.

In Section~\ref{subsec:Fourier}, using a Fourier analysis method introduced
by Lagarias and Wang~\cite{Lagarias}, we have found a sufficient condition
for a generic (arbitrary) tile $V$ such that each tiling of $\mathbb{Z}^{n}$ by $V$
is periodic (see Theorem~\ref{thm:Fourier}). This theorem illustrates
possibilities how Fourier analysis can contribute to a solution of the
Golomb--Welch conjecture. Therefore we plan to work on further development
of this method.

We guess that it will require a great effort to completely solve the
Golomb--Welch conjecture. Hence it would be also nice to solve the
conjecture at least for a common special case. In Section~\ref%
{subsec:primetile} we focus on tiles of prime size. The reason is that these
tiles have several specific properties. Therefore it looks to us promising
to try to prove the Golomb--Welch conjecture for these tiles. We note that
the interested reader can find more details on the results presented in this
section in \cite{H-K}.

\bigskip

%\bibliographystyle{ieeetr}
%\bibliography{Central.bib}

\begin{IEEEbiography}{Peter Horak}
  Peter Horak has receied his Ph.D. (1979) and Dr.Sc. (1995) degrees in
  mathematics from the Comenius University in Bratislava, Slovakia. His
  research interest include graph theory, combinatorics, coding theory,
  cryptography, and theoretical computer science. He also has a paper on number
  theory and another one on topology. These two papers have been published in
  American Mathematical Monthly. Peter Horak has held a permanent or a visiting
  positions at several universities in (Czecho)Slovakia, USA, Canada, and
  Kuwait. Since 2003 he is a professor of mathematics at University of
  Washington, Tacoma. He solved (with coauthors) three problems of Paul
  Erd\H{o}s, and a problem of Donald Knuth. His Erd\H{o}s number is $1$. 
\end{IEEEbiography}

\begin{IEEEbiography}{Dongryul Kim}
  Dongryul Kim is an undergraduate student studying at Harvard College. He
  graduated from Seoul Science High School in 2015. He won three gold medals
  from the International Mathematical Olympiad in 2012, 13, and 14. In 2016,
  Dongryul Kim took part in the William Lowell Putnam competition and finished
  among the first five students (unranked). 
\end{IEEEbiography}

\end{document}